\documentclass[11pt,reqno]{amsart}

\usepackage[margin=1.2in]{geometry}
\usepackage{graphicx}
\usepackage{subcaption}
\usepackage{stmaryrd}
\usepackage{amssymb}
\usepackage{amsthm}
\usepackage{dsfont}
\usepackage{hyperref}
\usepackage{enumerate}
\usepackage{mathrsfs}
\usepackage{amsmath, dsfont, amssymb, amsthm, mathrsfs}
\usepackage{stmaryrd}
\usepackage[lite,initials,msc-links]{amsrefs}
\usepackage{amsmath}
\usepackage{centernot}
\usepackage{mathtools}
\usepackage{xcolor}
\usepackage[normalem]{ulem}

\usepackage{amsaddr}

\usepackage{comment}
\excludecomment{comment}

\usepackage{hyperref}
\hypersetup{
	colorlinks=true,
	linkcolor= black,
	citecolor = black,
	linktocpage = True,
}

\newtheorem{theorem}{Theorem}
\newtheorem{lemma}[theorem]{Lemma}
\newtheorem{proposition}[theorem]{Proposition}
\newtheorem{corollary}[theorem]{Corollary}
\numberwithin{equation}{section}
\theoremstyle{definition}

\usepackage{chngcntr}
\counterwithout{equation}{section}

\theoremstyle{definition}
\newtheorem*{definition}{Definition}
\theoremstyle{remark}
\newtheorem{remark}{Remark} 
\theoremstyle{question}
 
\allowdisplaybreaks


\newcommand{\n}{\mathbf{n}}

\newcommand{\bl}{\mathbf{b}}
\newcommand{\re}{\mathbf{r}}

\newcommand{\DCWire}{{\mathcal{L}}}

\newcommand{\dcwire}{ \omega}

\newcommand{\nat}{\mathbb N}

\newcommand{\ZZ}{\mathbb{Z}}
\newcommand{\PP}{\mathbb{P}}
\newcommand{\E}{\mathbb{E}}
\newcommand{\id}{\mathds{1}}
\newcommand{\NN}{\mathbb{N}}
\newcommand{\RR}{\mathbb{R}}

\renewcommand{\c}[1]{\mathcal{#1}}

\makeatletter
\newcommand\connect[2][]{%
  \ext@arrow 9999{\longleftrightarrowfill@}{#1}{#2}}
\newcommand\longleftrightarrowfill@{%
  \arrowfill@\leftarrow\relbar\rightarrow}
\makeatother


\title[An elementary proof of phase transition in the XY model]{An elementary proof of phase transition\\ in the planar XY model}

\author[Diederik van Engelenburg \and Marcin Lis]{ \small{Diederik van Engelenburg \and Marcin Lis}}

\address{University of Vienna}

\begin{document}

\maketitle

\begin{abstract} 
Using elementary methods we obtain a power-law lower bound on the two-point function of the planar XY spin model at low temperatures. 
This was famously first rigorously obtained by Fr\"{o}hlich and Spencer~\cite{FroSpe} and establishes a Berezinskii--Kosterlitz--Thouless phase transition in the model.
Our argument relies on a new loop representation of spin correlations, a recent result of Lammers~\cite{Lammers} on delocalisation of general integer-valued height functions, and classical correlation inequalities.
\end{abstract}

\section{Introduction and main result}

Let $G=(V,E)$ be a finite graph.
Given a collection of nonnegative \emph{coupling constants} $J=(J_e)_{e\in E}$, and an \emph{inverse temperature} $\beta>0$, the \emph{XY model} (with free boundary conditions) is a random spin configuration $\sigma\in \mathbb S^V$, where $\mathbb S=\{ z\in \mathbb C: |z|=1\}$ is the complex unit circle, 
sampled according to the Gibbs distribution
\begin{align} \label{def:xy}
d \mu_{G,\beta}(\sigma) \propto \exp \Big(\tfrac 12\beta\sum_{vv'\in E}J_{vv'}( \sigma_v\bar\sigma_{v'}+ \bar\sigma_v\sigma_{v'}) \Big) \prod_{v\in V} d \sigma_v,
\end{align}
where $vv'$ denotes the edge $\{v,v'\}$, and $d\sigma_v$ is the uniform probability measure on $\mathbb S$. 
For simplicity of notation, unless stated otherwise, we will assume that $J_e = 1$ for all $e$. However, our results extend naturally to nonhomogeneous coupling constants.
We will write $\langle \cdot \rangle_{G,\beta}$ for the expectation with respect to $\mu_{G,\beta}$. 
The observable of main interest for us will be the \emph{two-point function} $\langle \sigma_a\bar\sigma_b\rangle_{G,\beta}$, $a,b\in V$, and its \emph{infinite volume limit}
\[
\langle \sigma_a\bar\sigma_b\rangle_{\Gamma,\beta}= \lim_{G\nearrow \Gamma}\langle \sigma_a\bar\sigma_b\rangle_{G,\beta},
\] 
where $\Gamma$ is an infinite planar lattice.

Note that if $\sigma_v=e^{i \theta_v}$, $\theta_v\in (-\pi,\pi]$, then $\sigma_v\bar\sigma_{v'}+ \bar\sigma_v\sigma_{v'}=2\cos(\theta_v-\theta_{v'})$. This means that the model is \emph{ferromagnetic}, i.e., pairs of neighbouring spins that are (almost) aligned have smaller energy and hence are statistically favoured.
A natural question is wether varying $\beta$ leads to a ferromagnetic \emph{order--disorder} phase transition in the model. The classical theorem of Mermin and Wagner~\cite{MerWag} 
excludes this possibility when the underlying lattice $\Gamma$ is two-dimensional. Moreover, McBryan and Spencer showed that at any finite temperature $\langle \sigma_a\bar\sigma_b\rangle_{\mathbb Z^2,\beta}$ decays to zero at least as fast as a power of the distance between $a$ and $b$. On the other hand, it is known by the work of Fr\"{o}hlich, Simon and Spencer~\cite{FSS} that in higher dimensions the model exhibits long-range order at low temperatures and the two-point function does not decay to zero.

Even though there is no spontaneous symmetry breaking, Berezinskii~\cite{Ber1,Ber2}, and Kosterlitz and Thouless~\cite{KosTho} predicted that a different type of phase transition takes place in two dimensions.
It should be understood in terms of interacting topological excitations of the model, the so called \emph{vortices} and \emph{antivortices}. They are those faces of the graph where the XY configuration makes a full clockwise or anticlockwise turn respectively when one traverses the edges of the face in a clockwise manner. Vortices and antivortices interact through a Coulomb interaction, and are energetically favoured 
to form short-distance pairs of vortex-antivortex. However, such configurations have clearly much smaller entropy. The \emph{Berezinskii--Kosterlitz--Thouless} (BKT) phase transition happens when, while increasing the temperature, entropy wins agains energy, and the vortex-antivortex pairs unbind and form a plasma of  
freely spaced vortices and antivortices. This regime corresponds to exponential decay, whereas the phase with bound vortex-antivortex pairs should exhibit power-law decay of the two-point function. 
A rigorous lower bound of this type for low temperatures, and therefore a proof of the BKT phase transition was first obtained in the celebrated work of Fr\"{o}hlich and Spencer~\cite{FroSpe} who also derived analogous results for the Villain spin model. Their proof uses a multi-scale analysis of the Coulomb gas, and the main purpose of the present article is to present an alternative and less technically involved argument for the existence of phase transition in two dimensions.

To be more precise, we introduce a new loop representation for the two-point function in the XY model that can be used to transfer probabilistic information from the dual integer-valued height function model to the XY model. Along the way we also show that the height function possesses the crucial absolute-value-FKG property.
This, together with a recent elementary delocalisation result for general height functions obtained by Lammers~\cite{Lammers}, is used to prove existence of the BKT phase transition.

\begin{theorem}[Berezinskii--Kosterlitz--Thouless phase transition] \label{T:main_BKT}
	There exists $\beta_c\in (0,\infty)$ such that 
		\begin{enumerate}[(i)]
		\item for all $\beta < \beta_c$, there exists $c =c(\beta)> 0$ such that for all $v,v'\in \mathbb Z^2$,
		\[
		\langle \sigma_v \overline{\sigma}_{v'} \rangle_{\mathbb Z^2,\beta} \leq e^{-c|v - v'|},
		\]
		\item for all $\beta \geq \beta_c$ and all  distinct $v,v'\in \mathbb Z^2$, 
		\[
		\langle \sigma_v \overline{\sigma}_{v'} \rangle_{\mathbb Z^2,\beta} \geq \frac{1}{8|v - v'|}.
		\]
	\end{enumerate}
\end{theorem}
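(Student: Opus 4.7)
The plan is to handle the two temperature regimes separately and glue them together by monotonicity of the two-point function in $\beta$, which follows from Ginibre's inequality for the XY model.

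\emph{High-temperature side (for part (i)).} For $\beta$ small I would run a standard convergent high-temperature expansion. Expanding each edge weight $\exp(\beta(\sigma_v\bar\sigma_{v'}+\bar\sigma_v\sigma_{v'}))$ as a power series in $\beta$ and performing the single-spin integrations $\int_{\mathbb S}\sigma^k\, d\sigma=\mathds 1_{k=0}$ yields a representation of $\langle \sigma_a\bar\sigma_b\rangle$ as a weighted sum over closed current configurations carrying an additional path from $a$ to $b$, with weights geometric in $\beta$. Below the radius of convergence this directly gives exponential decay.

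\emph{Low-temperature side (for part (ii)).} This is the heart of the argument and proceeds in three stages. First, I would make precise the new loop representation for $\langle \sigma_a\bar\sigma_b\rangle_{G,\beta}$ announced in the introduction, identifying (or lower bounding) the two-point function by the probability of a suitable connection event in an auxiliary loop model on $G$. Second, by planar duality---in the spirit of the classical XY/Villain to integer-valued height-function correspondence---this connection event should translate into a level-line event for an integer-valued height function $h$ on the dual graph $G^{*}$, in a form that is monotone increasing in $|h|$. Third, I would prove the absolute-value-FKG property of $h$ by verifying an FKG lattice condition for $|h|$ and invoking the classical Holley/Ginibre machinery; together with Lammers' theorem this yields delocalisation of $h$ on $\mathbb Z^2$ above some finite threshold $\beta_c$.

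The main obstacle is that Lammers' result only delivers qualitative delocalisation ($\mathrm{Var}\,h\to\infty$), whereas (ii) demands the explicit quantitative bound $\tfrac{1}{8|v-v'|}$. My plan is to combine absolute-value-FKG with the $D_4$ symmetries of $\mathbb Z^2$---which should account for the constant $\tfrac 18$---to show that in each of the $\sim|v-v'|$ concentric dual annuli between $v$ and $v'$ a level line of $h$ encircling exactly one of $v,v'$ occurs with probability that, once combined via FKG, accumulates to order $1/|v-v'|$. Transporting this event back through the loop representation gives the stated lower bound, and monotonicity in $\beta$ guarantees $\beta_c>0$ via the high-temperature argument, completing the proof.
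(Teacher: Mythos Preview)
Your outline captures the ingredients used to show that the high-temperature and low-temperature regimes are both nonempty, but it misses the mechanism that actually produces the sharp dichotomy and the explicit constant $\tfrac{1}{8}$. The paper does \emph{not} extract the bound in~(ii) from delocalisation. The height-function machinery (loop representation, absolute-value-FKG, Lammers) is used only to prove that $\beta_c<\infty$, via the implication ``delocalisation $\Rightarrow$ $\chi^{\varepsilon}_{\beta}(L)=\infty$ $\Rightarrow$ no exponential decay''. Note that the inequality in Proposition~\ref{T:BKT-transition} goes the opposite way to what your level-line plan would need: the expected height is bounded \emph{above} by a sum of spin correlations, so delocalisation yields only a nonsummability statement, not a pointwise lower bound on $\langle\sigma_a\bar\sigma_b\rangle$.

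The quantitative part is handled by an entirely separate, short sharpness argument in Section~\ref{sec:Liebsharp}. One defines $\beta_c$ through the finite-volume quantity $\varphi_{G,\beta}=\sum_{w\in\partial G}\langle\sigma_0\bar\sigma_w\rangle_{G,\beta}$: if some $G$ has $\varphi_{G,\beta}<1$, iterating the Lieb--Rivasseau inequality on translates of $G$ gives exponential decay, which establishes~(i) for \emph{all} $\beta<\beta_c$, not just small $\beta$. For $\beta\geq\beta_c$ one has $\varphi_{\Lambda'_n,\beta}\geq 1$ for the $L^1$-ball $\Lambda'_n$ of radius $2n$; the Messager--Miracle-Sole monotonicity then pins down the corner $(n,n)$ as simultaneously the minimiser of $\langle\sigma_0\bar\sigma_v\rangle$ over $\partial\Lambda_n$ and the maximiser over $\partial\Lambda'_n$, so $\langle\sigma_0\bar\sigma_v\rangle\geq 1/|\partial\Lambda'_n|=1/(8n)$ for every $v\in\partial\Lambda_n$. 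This is where the $\tfrac{1}{8}$ comes from, not from a $D_4$-symmetric annulus count. Your proposal leaves open an intermediate range of $\beta$ between the radius of convergence of the high-temperature expansion and the Lammers threshold, where neither~(i) nor~(ii) would be established; the Lieb--Rivasseau/Messager--Miracle-Sole pair is precisely what closes that gap.
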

We note that unlike in the original proof of Fr\"{o}hlich and Spencer, we do not show that the rate of decay approaches zero when so does the temperature.
However, we establish a type of sharpness which says that there is no other behaviour than exponential and power-law decay.
The short proof of sharpness is independent of the rest of the argument. In the first step we classically use the Lieb--Rivasseau inequality~\cite{Lieb,Riv} to establish a sharp transition between exponential decay and nonsummability of correlations (similarly to the proof for the Ising model~\cite{DCT}). To conclude a uniform power-law lower bound as in $(ii)$ whenever the correlations are not summable we use the 
Messager--Miracle-Sole inequality~\cite{MMS} on monotonicity of correlations with respect to the position of the vertex on the lattice.

We also note that our proof works (with minor modifications and a different, implicit multiplicative constant in $(ii)$) for other infinite graphs that in addition to being translation invariant possess reflection and rotation symmetries.

For a more detailed overview of the XY model, we refer the reader to~\cite{PelSpi, FriVel}, and for expositions of the argument of Fr\"{o}hlich and Spencer, we refer to~\cite{KP,GS}.

This article is organised as follows. 
\begin{itemize}
\item In Section~\ref{sec:HF} we introduce the dual of the planar XY model in form of an integer-valued height function defined on the faces of the graph. 
We also establish positive association of its absolute value (the absolute-value-FKG property), and recall the delocalisation result of Lammers~\cite{Lammers}.
\item
In Section~\ref{sec:singleswitch} we define a random collection of loops on the graph that carries probabilistic information about both the XY spins and the dual height function.
Although this is a well known object that goes back to the works of Symanzik~\cite{Sym}, and Brydges, Fr\"{o}hlich and Spencer~\cite{BFS}, the formula that relates the two-point function 
to the probability of two points being connected by a loop~(Lemma~\ref{L:SingleSwitching}) is new and crucial to our argument. 
\item In Section~\ref{sec:noexpdec} we give an elementary argument which states that if the height function delocalises at some temperature, then the spin two-point function \emph{does not} decay exponentially.
\item In Section~\ref{sec:existence} we use the above ingredients to show that on any translation invariant graph, there exists a finite temperature at which the two-point function does not decay exponentially.
This is not immediate as the result of Lammers~\cite{Lammers} applies only to trivalent graphs. However, a simple graph-modification argument together with the Ginibre inequality allows to 
change the setup from a general graph to a triangulation (a graph whose dual is trivalent).
\item In Section~\ref{sec:Liebsharp} we finish the proof of the main theorem. We use the Lieb--Rivasseau inequality~\cite{Lieb,Riv} and the inequality of Lemma~\ref{L: ferro-magnet inequal} to show that the absence of exponential decay implies 
a power-law lower bound on the two-point function. The proof of sharpness relies only on this section and Lemma~\ref{L: ferro-magnet inequal}.
\item Finally, independently of the rest of the article, in Appendix~\ref{sec:doubleswitch} we develop a new loop representation for \emph{squares and products} of spin correlation functions in the XY model. 
As an application we present new correlation inequalities and give new combinatorial proofs of the Lieb--Rivasseau~\cite{Lieb,Riv}, and the Messager--Miracle-Sole inequality~\cite{MMS}.
We hope this representation will be useful in the further study of the XY model.
\end{itemize}

\subsection*{Acknowledgements} ML is grateful to Roland Bauerschmidt and Hugo Duminil-Copin for inspiring discussions on the XY model. We also thank Christophe Garban for useful remarks on a draft.

\section{The dual height function} \label{sec:HF}

To define the dual model we assume that $G$ is planar and we need to introduce currents. To this end, let $\vec E$ be the set of directed edges of $G$, and let $\nat =\{0,1,\ldots\}$.
A function $\n: \vec E\to \mathbb N$ is called a \emph{current} on $G$. For a current~$\n$, we define $\delta \n: V\to \mathbb Z$ by 
\[
\delta\n_v= \sum_{v'\sim v} \n_{(v,v')} -   \n_{(v',v)}.
\]
Hence if $\delta\n_v$ is positive, then the amount of outgoing current is larger than the incoming current, an we think of $v$ as a \emph{source}.
Likewise if $\delta\n_v$ is negative, there is more incoming current and $v$ is a \emph{sink}.
A current is \emph{sourceless} if $\delta\n_v=0$ for all $v\in V$.

We define $\Omega_0$ to be the set of all (sourceless) currents. 
Sourceless currents naturally define a height function $h$ on the set of faces of $G$, denoted by $U$, where the height of the outer face is set to zero, and the increment of the height 
between two faces $u$ and $u'$ is equal to 
\begin{align*}
h(u)-h(u')=\n_{(v,v')} -   \n_{(v',v)},
\end{align*} 
where the primal directed edge $(v,v')$ crosses the dual directed edge $(u,u')$ from right to left.
That this yields a well defined function on the faces of $G$ follows from the fact that $\delta \n=0$.
We define the \emph{XY weight} of a current by
\begin{align} \label{eq:XYweight}
w_{\beta}(\n)=\prod_{(v,v') \in \vec E}\frac{1}{\n_{(v,v')}!}\Big(\frac{\beta J_{vv'}}2 \Big)^{\n_{(v,v')}},
\end{align}
These weights appear naturally in the expansion of the partition function of the XY model into a sum over sourceless currents after one expands the exponentials in~\eqref{def:xy} into a power series in the variables $\tfrac12\beta J_{vv'}\sigma_v\bar\sigma_{v'}$ for each directed edge $(v,v')\in \vec E$,
and then integrates out the $\sigma$ variables. They will also appear in the analogous classical expansion for spin correlations~\eqref{eq:currentexp}.

We note that using currents to define a model on the dual graph is an instance of planar duality of
abelian spin systems~\cite{dubedat2011topics}, and the fact that the function is integer valued is a consequence of $\mathbb Z$ being the dual group of the unit circle.

Clearly, the weight~\eqref{eq:XYweight} defines a probability measure $\PP_{G,\beta}$ on currents and hence also on height functions. In terms of the height function it is a Gibbs measure given by
\begin{align} \label{eq:Gibbs}
\PP_{G,\beta}(h) \propto \exp\Big(- \sum_{uu'\in E^\dagger} \mathcal V_e^{\beta}(h(u)-h(u'))\Big),
\end{align}
where $E^\dagger$ is the set of dual edges of $G$, and where the symmetric potentials $ \c{V}^{\beta}_e: \mathbb Z\to \mathbb R$ are given by
\begin{align} \label{eq:potentialBessel}
\mathcal V^{\beta}_e(k) =- \log \Big(\sum_{i= 0}^{\infty} \frac{1}{i!(i+|k|)!}\Big(\frac{\beta J_e}2\Big)^{2i+|k|} \Big) =- \log I_{k}({\beta J_e })
\end{align}
with $I_{k}$ being the modified Bessel function.
We again note that we will usually set all $J_e=1$ to simplify the notation.

A well known Tur\'{a}n-type inequality for modified Bessel functions~\cite{Bessel} states that for any $k\geq 0$ and $\beta >0$, 
\begin{equation} \label{E: Turan}
I_k^2(\beta)\geq I_{k-1}(\beta) I_{k+1}(\beta)
\end{equation} which means that $\mathcal V^\beta_e$ is convex on the integers.
This puts the model in the well studied framework of height functions with a convex potential (see e.g.\ \cite{sheffield}).

\subsection{Gibbs measures and delocalisation}
To state the delocalisation result of Lammers we will need the notion of a Gibbs measure for height functions on infinite graphs (though we will not directly work with it in the remainder of the article). Let $\Gamma = (V, E)$ be an infinite planar graph and $\Gamma^{\dagger} = (U, E^{\dagger})$ its planar dual. If $\nu$ is a measure on height functions $\varphi: \ZZ^{U} \to \ZZ$ and $\Lambda \subset U$ a finite subset, write $\nu_{\Lambda}$ for the measure restricted to $\Lambda$. Let $\mathcal V=(\mathcal V_e)_{e\in E\dagger}$ be a family of convex symmetric potentials. We call $\nu$ 
a \emph{Gibbs measure} for the potential $\mathcal V$ if for every such $\Lambda$, it satisfies the Dobrushin--Lanford--Ruelle relation
\[
	\nu_{\Lambda}(\cdot)= \int_{\ZZ^{U}}  \nu^{\varphi}_{\Lambda}(\cdot) d\nu(\varphi),
\]
where $\nu^{\varphi}_{\Lambda}$ is the Gibbs measure on height functions $h\in \ZZ^U$ given as in~\eqref{eq:Gibbs} (but with $\mathcal V^\beta$ replaced by $\mathcal V$) and conditioned on $h$ being equal to $\varphi$ on the boundary of $\Lambda$.

In what follows we will always assume that $\Gamma$ is locally finite and invariant under the action of a $\mathbb Z^2$-isomorphic lattice.
We say that $\nu$ is translation invariant if it is invariant under the same acton.

In a recent beautiful work~\cite{Lammers} Lammers gave a condition on the potential that guarantees that there are no translation invariant Gibbs measures on graphs of degree three (trivalent graphs).
\begin{theorem}[Lammers~\cite{Lammers}]\label{thm:lammers}
Let $\Gamma^{\dagger} = (U, E^{\dagger})$ be as above and moreover trivalent. If for every $e\in E^\dagger$,
\begin{equation} \label{E: exited potential}
	\mathcal V_e(\pm 1) \leq\mathcal  V_e(0) + \log(2),
\end{equation}
then there are no translation invariant Gibbs measures for $\mathcal V$.
\end{theorem}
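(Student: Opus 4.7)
The plan is to argue by contradiction: suppose $\nu$ is a translation invariant Gibbs measure for $\mathcal V$. The strategy is to extract from $\nu$ a planar percolation-type representation on the primal graph $\Gamma$ (whose faces are triangles because $\Gamma^\dagger$ is trivalent) and, via a Zhang/Burton--Keane-style duality argument, force infinitely many nested level-line loops around a typical face with positive probability. Translation invariance of $\nu$ together with finiteness of $h(u_0)$ would then be incompatible with such abundance of contours, yielding a contradiction.

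First, I would extract quantitative information from the hypothesis~\eqref{E: exited potential}. Given the heights $h(u_1),h(u_2),h(u_3)$ at the three neighbors of a dual vertex $u$, the DLR conditional law of $h(u)$ is proportional to $\exp(-\sum_{i=1}^{3}\mathcal V_{e_i}(h(u)-h(u_i)))$. Combining $\mathcal V_e(\pm 1) \leq \mathcal V_e(0) + \log 2$ with convexity of the potential, one deduces that the conditional probability of each single-site shift $h(u) \mapsto h(u) \pm 1$ is at least a uniform constant bounded away from zero. In particular, each edge of $\Gamma^\dagger$ carries a nonzero height increment with uniform positive probability, irrespective of its boundary environment.

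Next, I would encode $h$ via its nested level sets $L_k = \{u : h(u) \leq k\}$; their boundaries are cycles in $\Gamma$, and trivalency of $\Gamma^\dagger$ (equivalently, $\Gamma$ being a triangulation) ensures that contours at different levels are pairwise non-crossing. Using a Holley-type monotone coupling together with the absolute-value-FKG property from Section~\ref{sec:HF}, one would compare the indicator process $u \mapsto \mathbf 1\{u \in L_k\}$ to a supercritical Bernoulli site process on the triangulation. A Zhang-type symmetry argument on this planar $\ZZ^2$-invariant model then forces infinitely many disjoint contours to surround a fixed face $u_0$ with positive probability, which sums to infinitely many $\pm 1$ jumps in $h(u_0)$ and yields the desired contradiction.

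The main obstacle I expect is precisely the middle step: upgrading the single-vertex flip bound to a genuinely global monotone comparison with a tractable supercritical percolation benchmark. This is where trivalency is essential, since on graphs of higher degree the non-crossing property of level lines breaks down and the planar duality dichotomy fails; it is also the part of Lammers's argument that carries the real technical weight, and I would expect the bulk of the proof to be devoted to making this comparison quantitative and stable under conditioning on boundary data.
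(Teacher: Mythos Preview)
This theorem is not proved in the present paper: it is quoted from Lammers's work and used as a black box, so there is no in-paper proof to compare against. That said, your sketch contains a genuine error and an inappropriate appeal.

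The first quantitative step is wrong. From $\mathcal V_e(\pm 1)\le \mathcal V_e(0)+\log 2$ and convexity you \emph{cannot} conclude that the conditional probability of the shift $h(u)\mapsto h(u)\pm 1$ is bounded below uniformly ``irrespective of its boundary environment''. If the three neighbours all lie far below the current value $k$, convexity makes $\mathcal V_{e_i}(k+1-a_i)-\mathcal V_{e_i}(k-a_i)$ arbitrarily large, so the ratio $P(h(u)=k+1)/P(h(u)=k)$ can be arbitrarily small. What trivalency actually buys is a \emph{median} pivot: given neighbours $a_1\le a_2\le a_3$, the value $a_2$ has the property that shifting $h(u)$ from $a_2$ to $a_2\pm 1$ increases the energy on at most one edge (by at most $\log 2$) and does not increase it on the other two. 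This is the place where degree three enters, and it yields a lower bound on the spread of the conditional law around its median, not around an arbitrary value.

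Second, invoking the absolute-value-FKG property from Section~\ref{sec:HF} is illegitimate here: Proposition~\ref{P: abs-FKG} is proved only for the specific Bessel potential~\eqref{eq:potentialBessel}, whereas Theorem~\ref{thm:lammers} is stated for general convex symmetric potentials satisfying~\eqref{E: exited potential}. Lammers's own argument does not go through a comparison with supercritical Bernoulli percolation on the level sets; rather, it exploits the median observation above to run a direct planar duality argument showing that no translation-invariant Gibbs measure can exist. Your acknowledged ``main obstacle'' is therefore not just a technical gap to be filled but the place where the proposed mechanism is incorrect.
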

This together with the dichotomy stated in Theorem~\ref{T: dichotomy} will be one of the key ingredients of the proof of the main theorem.

\subsection{Absolute-value-FKG and dichotomy}
In this section, we prove that the height function satisfies the absolute-value-FKG property, which is known to imply the following dichotomy.

Let $\Gamma = (V, E)$ be a translation invariant graph, and let $0$ be a chosen face of~$\Gamma$. 
Define $B_n$ to be the subgraph of $\Gamma$ induced by the vertices in $V$ that lie on at least one face of $\Gamma$ that is contained in the graph ball of radius $n$ on $\Gamma^\dagger$.
We introduce this slightly convoluted definition to guarantee the following three properties: $0$ belongs to all $B_n$, also $B_n\nearrow \Gamma$ as $n\to \infty$, and finally, the weak dual graph of $B_n$ (the dual graph with the vertex corresponding to the external face of $B_n$ removed) is a subgraph of $\Gamma^\dagger$.
\begin{theorem} \label{T: dichotomy}
	Consider the setup as above. Then
	for every $\beta > 0$, exactly one of the following two occurs:
	\begin{enumerate}[(i)]
		\item (localisation) There exists a $C < \infty$ such that uniformly over all $n$, 
		\[
			\E_{B_n,\beta}[ |h(0)|] \leq C.
		\]
		\item (delocalisation) There are no translation invariant Gibbs measures for the potential~\eqref{eq:potentialBessel}. 
	\end{enumerate}
\end{theorem}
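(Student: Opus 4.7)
The plan is to reduce the dichotomy to establishing the absolute-value-FKG property of $h$ under $\PP_{B_n,\beta}$; as already remarked in the text, once absolute-value-FKG is available the dichotomy follows by a by-now-standard monotone coupling argument in the spirit of \cite{Lammers}.

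For absolute-value-FKG, the inputs are the convexity of $\mathcal V^\beta_e$ (from the Tur\'an-type inequality \eqref{E: Turan}) and its symmetry $\mathcal V^\beta_e(k)=\mathcal V^\beta_e(-k)$. I would verify the Holley lattice condition directly for $(|h(u)|)_{u\in U}$: given two admissible height configurations $h,h'$, define their absolute-value meet $h\wedge h'$ and join $h\vee h'$ face by face as the configurations taking on each face the smaller (respectively larger) of $|h(u)|,|h'(u)|$, with signs chosen so that the result remains dual to a sourceless current on $G$. The inequality $\PP_{B_n,\beta}(h)\PP_{B_n,\beta}(h')\le \PP_{B_n,\beta}(h\wedge h')\PP_{B_n,\beta}(h\vee h')$ is then checked edge by edge, where on each dual edge symmetry and convexity of $\mathcal V^\beta_e$ reduce the claim to a short case analysis on the signs and magnitudes of the four relevant height values.

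With absolute-value-FKG in hand, the dichotomy follows by a well-known recipe. In one direction, any translation-invariant Gibbs measure $\mu$ stochastically dominates the zero-boundary measure on $B_n$ in the absolute-value sense by the DLR relation, so $|h_{B_n,0}(0)|\preceq_{\mathrm{st}}|h_\mu(0)|$, and the almost-sure finiteness of $h(0)$ together with translation invariance of $\mu$ upgrades this to the uniform bound in (i). Conversely, assuming (i), monotonicity in boundary conditions and a diagonal compactness argument produce a Gibbs subsequential weak limit, and Cesaro-averaging over lattice translates yields a translation-invariant Gibbs measure, contradicting (ii). The main obstacle will be the verification of absolute-value-FKG rather than the subsequent dichotomy: the map $k\mapsto|k|$ is not monotone on $\mathbb Z$, so one cannot invoke standard FKG for convex symmetric potentials directly, and the sign choices in the meet and join must respect the planar constraint that $h$ be dual to a sourceless current, i.e.\ that height increments close up around every primal vertex, which is where most of the combinatorial work lies.
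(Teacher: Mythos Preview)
Your overall reduction---show absolute-value-FKG for $\PP_{B_n,\beta}$, then run a standard monotonicity/compactness argument---is exactly what the paper does (it cites \cite{CPST,LamOtt} for the second step). The gap is in your proposed proof of absolute-value-FKG.

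Convexity and symmetry of $\mathcal V^\beta_e$ alone do \emph{not} imply the absolute-value-FKG property; this is precisely why the Lammers--Ott criterion requires the stronger condition that the second differences $k\mapsto \mathcal V^\beta_e(k-1)-2\mathcal V^\beta_e(k)+\mathcal V^\beta_e(k+1)$ be nonincreasing on $\{0,1,\ldots\}$ (a ``super-Gaussian'' condition, strictly stronger than convexity). Your edge-by-edge Holley check cannot succeed under convexity alone. There is also a structural problem with the scheme you sketch: the law of $|h|$ is obtained by summing the Gibbs weight over all sign configurations compatible with a given absolute-value profile, so the Holley inequality for $|h|$ compares sums of many terms on each side; producing one pair of signed configurations realising the meet and join does not bound these sums against each other. (Incidentally, your worry about the ``planar constraint that $h$ be dual to a sourceless current'' is misplaced: every integer-valued function on the faces with $h=0$ on the outer face arises from a sourceless current, so there is no extra constraint to preserve.)

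The paper's route around this is genuinely different from what you propose: it first verifies the Lammers--Ott super-Gaussian criterion only for $\beta\le 1$ via explicit Bessel-function estimates, and then handles general $\beta$ by replacing each edge by $s=\lceil\beta\rceil$ edges in series (reducing the effective parameter to $\beta/s\le 1$) and using the convolution identity $\sum_m I_{k-m}(\beta)I_{m-l}(\beta')=I_{k-l}(\beta+\beta')$ to identify the restricted height function with the original one. You would need either this trick or a direct verification of the Lammers--Ott criterion for all $\beta$; the bare Holley argument from convexity will not close.
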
 

\begin{proof}
	This is a consequence of the absolute-value-FKG property proved below (Proposition~\ref{P: abs-FKG}) and standard arguments using monotonicity in boundary conditions. See for example \cite[Lemma 2.2]{CPST} or \cite[Theorem 2.7]{LamOtt}.
\end{proof}

The remainder of this section is devoted to proving the following version of the absolute-value-FKG property.

\begin{proposition} \label{P: abs-FKG}
	Let $G = (V, E)$ be a finite graph and $U$ the set of its faces. Then for all $\beta>0$, and all $\Psi, \Phi: \nat^{U} \to \RR_+$ increasing functions, 
	\[
		\E_{G,\beta}[\Psi(|h|)\Phi(|h|)] \geq \E_{G,\beta}[\Psi(|h|)] \E_{G,\beta}[\Phi(|h|)]. 
	\]
\end{proposition}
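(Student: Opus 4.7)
I would prove the absolute-value FKG property by verifying the Holley lattice inequality
\[
\nu(\varphi\vee\psi)\,\nu(\varphi\wedge\psi)\;\geq\;\nu(\varphi)\,\nu(\psi)\qquad\text{for all }\varphi,\psi\in\nat^U,
\]
for the pushforward measure $\nu:=\PP_{G,\beta}\circ|h|^{-1}$; by Holley's theorem this implies the positive association claimed by the proposition. Using the symmetry $\mathcal V^\beta_e(k)=\mathcal V^\beta_e(-k)$ and the pinning $h(u_\text{out})=0$, one has
\[
\nu(\varphi)\;\propto\;\sum_{\sigma\in\{\pm 1\}^U}\prod_{uu'\in E^\dagger}e^{-\mathcal V^\beta_e(\sigma_u\varphi(u)-\sigma_{u'}\varphi(u'))},
\]
so both sides of Holley's inequality become double sums over pairs of sign fields $(\sigma,\sigma')$, labelling signed pairs $(h,h')=(\sigma\varphi,\sigma'\psi)$ on the right and $(\sigma(\varphi\vee\psi),\sigma'(\varphi\wedge\psi))$ on the left.

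The candidate reshuffling is the facewise \emph{magnitude swap}: given $(h,h')$, for each face $u$ put the signed value of larger absolute value into $h_1(u)$ and the smaller one into $h_2(u)$. This is a bijection on signed pairs sending $(|h|,|h'|)=(\varphi,\psi)$ to $(|h_1|,|h_2|)=(\varphi\vee\psi,\varphi\wedge\psi)$, so Holley would follow from the termwise edge comparison $\mathcal V^\beta_e(h_1(u)-h_1(u'))+\mathcal V^\beta_e(h_2(u)-h_2(u'))\leq\mathcal V^\beta_e(h(u)-h(u'))+\mathcal V^\beta_e(h'(u)-h'(u'))$, which on edges where exactly one endpoint is swapped reduces to the four-point inequality $\mathcal V^\beta_e(a-b')+\mathcal V^\beta_e(b-a')\leq \mathcal V^\beta_e(a-a')+\mathcal V^\beta_e(b-b')$ for integers with $|a|\geq|b|$, $|a'|<|b'|$.

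The main obstacle is that this pointwise four-point inequality \emph{fails} in general — e.g.\ $a=-2,b=1,a'=0,b'=1$ gives $\mathcal V^\beta_e(3)+\mathcal V^\beta_e(1)>\mathcal V^\beta_e(2)+\mathcal V^\beta_e(0)$ by convexity — so a termwise comparison cannot succeed. My plan to circumvent this is to perform the $\sigma,\sigma'$ sums \emph{before} comparing terms: for a single isolated edge $uu'$, summing over the four endpoint signs and using symmetry of $\mathcal V^\beta_e$ collapses the inequality into a product of two-term expressions in $I_k(\beta)=e^{-\mathcal V^\beta_e(k)}$, which on expansion reduces to the Turán inequality \eqref{E: Turan}. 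For a general graph, the idea is to rewrite $\sum_\sigma\prod_{uu'}e^{-\mathcal V^\beta_e(\sigma_u\varphi_u-\sigma_{u'}\varphi_{u'})}$ as an Ising-type partition function on $U$ with $\varphi$-dependent ferromagnetic couplings $J_{uu'}(\varphi)=\tfrac12(\mathcal V^\beta_e(\varphi_u+\varphi_{u'})-\mathcal V^\beta_e(\varphi_u-\varphi_{u'}))\geq 0$, and to establish that this partition function is log-supermodular in $\varphi$ by combining the single-edge Turán inequality with GKS-type monotonicity of Ising correlations in the coupling constants. This chaining step — from the pointwise Turán inequality at a single edge to the global Holley condition — is where the technical difficulty lies.
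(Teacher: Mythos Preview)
Your proposal does not complete the proof: you yourself identify the ``chaining step'' from the single-edge Tur\'an inequality to the global Holley condition as the outstanding difficulty, and you give no argument for it. This is a genuine gap, and it is not clear how to close it along the lines you suggest. Writing $\nu(\varphi)$ as an Ising-type partition function $Z_{\mathrm{Ising}}(J(\varphi))$ times a product of single-edge factors is correct, but log-supermodularity of $\varphi\mapsto\log Z_{\mathrm{Ising}}(J(\varphi))$ does \emph{not} follow from GKS monotonicity alone: GKS gives monotonicity of $\log Z_{\mathrm{Ising}}$ in the couplings, not convexity or supermodularity, and the map $\varphi\mapsto J_{uu'}(\varphi)$ is neither coordinatewise monotone nor supermodular in general. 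In fact, the known sufficient criterion for absolute-value-FKG (Lammers--Ott) requires the \emph{second} discrete difference of $\mathcal V^\beta_e$ to be nonincreasing, a third-order condition strictly stronger than the convexity (Tur\'an) you invoke; and this stronger condition is only verified in the paper for $\beta\le 1$.

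The paper's proof is quite different and avoids the Holley route entirely. It first checks the Lammers--Ott super-Gaussian criterion directly for $\beta\le 1$ (Lemma~\ref{L: absFKG beta small}), which already gives absolute-value-FKG in that range. For general $\beta$ it uses a graph-modification trick: subdivide each edge of $G$ into $s=\lceil\beta\rceil$ consecutive edges with parameter $\beta/s\le 1$, observe via the Bessel convolution identity $\sum_m I_{k-m}(\beta)I_{m-l}(\beta')=I_{k-l}(\beta+\beta')$ that the restriction of the subdivided height function to the original faces has the same law as the original height function, and then note that absolute-value-FKG on the larger graph (now at $\beta/s\le 1$) restricts to absolute-value-FKG on the original faces. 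This bypasses any global lattice inequality for the pushforward measure at large $\beta$.
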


The proposition is easiest to prove for small $\beta$. We extend this to general $\beta$ afterwards.

\begin{lemma} \label{L: absFKG beta small}
	The above holds true for all $\beta \leq 1$. 
\end{lemma}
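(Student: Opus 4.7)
\medskip

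\noindent\textbf{Proof proposal.} The plan is to verify Holley's FKG lattice condition for the pushforward $\tilde\mu$ on $\nat^U$ of the measure $\mu_{G,\beta}$ under $h\mapsto|h|$, where $h$ is pinned at the outer face and distributed as $\mu(h)\propto\prod_{uu'\in E^\dagger}I_{|h(u)-h(u')|}(\beta)$. It suffices to establish
\[
\tilde\mu(\varphi\vee\psi)\,\tilde\mu(\varphi\wedge\psi)\;\geq\;\tilde\mu(\varphi)\,\tilde\mu(\psi)\qquad\text{for all }\varphi,\psi\in\nat^U,
\]
and by the standard Holley reduction it is enough to check this when $\varphi,\psi$ differ at a single face.

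The structural step is to express $\tilde\mu$ as a sign sum. Writing $h=\epsilon\cdot\varphi$ with $\epsilon\in\{\pm 1\}^U$ and $\epsilon(\text{outer})=+1$, and using that $|\epsilon(u)\varphi(u)-\epsilon(u')\varphi(u')|$ equals $|\varphi(u)-\varphi(u')|$ when $\epsilon(u)\epsilon(u')=+1$ and $\varphi(u)+\varphi(u')$ otherwise, one obtains
\[
\tilde\mu(\varphi) \;\propto\; \sum_{\epsilon}\prod_{uu'\in E^\dagger} I_{|\epsilon(u)\varphi(u)-\epsilon(u')\varphi(u')|}(\beta).
\]
Since $k\mapsto I_k(\beta)$ is strictly decreasing on $\nat$, this is (up to a $\varphi$-dependent prefactor) the partition function of a ferromagnetic Ising model on the dual graph with couplings $J_{uu'}(\varphi)=\tfrac{1}{2}\log\frac{I_{|\varphi(u)-\varphi(u')|}(\beta)}{I_{\varphi(u)+\varphi(u')}(\beta)}\geq 0$. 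Equivalently, $\tilde\mu(\varphi)\propto \prod_{uu'}\sqrt{I_{|\varphi(u)-\varphi(u')|}(\beta)\,I_{\varphi(u)+\varphi(u')}(\beta)}\cdot Z_{\mathrm{Ising}}(J(\varphi))$.

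Taking logarithms, the lattice inequality splits into a ``diagonal'' contribution coming from the square-root prefactor and the Ising free energy $\log Z_{\mathrm{Ising}}(J(\varphi))$. The diagonal part is log-supermodular on each dual edge by Turán's inequality~\eqref{E: Turan} applied twice (once to each of the two factors, comparing $\varphi$ with $\varphi\vee\psi$ and $\varphi\wedge\psi$ at the differing face). The main obstacle is the Ising piece: integrating out signs generically couples distant faces and may destroy log-supermodularity.

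The role of the hypothesis $\beta\leq 1$ is to control this Ising correction. The plan here is to go back to the current representation: for $\beta\leq 1$ one expands each Bessel factor via~\eqref{eq:potentialBessel} as a sum over sourceless currents $\n$ with weights $w_\beta(\n)=\prod(\beta/2)^{\n_e}/\n_e!$, and writes
\[
\tilde\mu(\varphi) \;\propto\; \sum_{\n\ \text{sourceless},\ |h_\n|=\varphi} w_\beta(\n).
\]
The inequality to prove now takes the form of a pairing: given two sourceless currents $\n^1,\n^2$ with $|h_{\n^i}|=\varphi^i$, produce a pair $(\n^\vee,\n^\wedge)$ of sourceless currents with $|h_{\n^\vee}|=\varphi^1\vee\varphi^2$ and $|h_{\n^\wedge}|=\varphi^1\wedge\varphi^2$ and $w_\beta(\n^\vee)w_\beta(\n^\wedge)\geq w_\beta(\n^1)w_\beta(\n^2)$. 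The hardest step is constructing this swap without generating sources; the factorial weights force any local exchange of oriented currents to be energetically favourable only when the per-edge factor $(\beta/2)^{\n_e}/\n_e!$ is log-concave in $\n_e$, which is exactly the regime $\beta/2\leq 1/2$, i.e., $\beta\leq 1$. This is where the restriction enters in an essential way; the extension to all $\beta$ will then presumably proceed by differentiation in $\beta$ together with Ginibre-type positivity.
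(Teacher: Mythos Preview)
Your proposal is not a proof: the central step---the ``swap'' that takes a pair of sourceless currents with absolute heights $\varphi^1,\varphi^2$ to a pair with absolute heights $\varphi^1\vee\varphi^2,\varphi^1\wedge\varphi^2$---is never constructed; you flag it as ``the hardest step'' and then stop. More seriously, the stated mechanism by which $\beta\leq 1$ enters is wrong. The sequence $n\mapsto(\beta/2)^{n}/n!$ is log-concave for \emph{every} $\beta>0$, since the ratio of consecutive terms is $(\beta/2)/(n+1)$, which is decreasing in $n$ regardless of $\beta$. So you have not identified where the restriction on $\beta$ is actually used, and without the swap map there is nothing to verify. (A smaller point: if $\varphi$ and $\psi$ differ at a single face then one dominates the other and the lattice condition is vacuous; the Holley reduction is to pairs differing at two sites, or equivalently to monotonicity of one-site conditionals---and for the non-Markovian pushforward under $|h|$ even this requires care.)

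The paper's argument is completely different and avoids any combinatorial construction. It quotes a criterion of Lammers and Ott: absolute-value-FKG holds whenever the second difference
\[
k\;\longmapsto\;\mathcal V_e^{\beta}(k-1)-2\mathcal V_e^{\beta}(k)+\mathcal V_e^{\beta}(k+1)\;=\;-\log\frac{I_{k-1}(\beta)I_{k+1}(\beta)}{I_k(\beta)^2}
\]
is nonincreasing on $\{0,1,\dots\}$. Setting $r_k=\beta^{-1}I_k(\beta)/I_{k-1}(\beta)$, this is $r_k^2\leq r_{k-1}r_{k+1}$. The Bessel recursion $I_{k-1}=(2k/\beta)I_k+I_{k+1}$ gives $r_k=(2k+\beta^2 r_{k+1})^{-1}$, and combining this with the Tur\'an inequality and the crude bound $\beta^2 r_{k+1}\leq \beta^2/(2k+2)$ reduces everything to a one-line estimate that holds precisely when $\beta\leq 1$. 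Thus the role of $\beta\leq 1$ is to push through a concrete inequality between modified Bessel functions, not to enforce log-concavity of Poisson weights.
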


\begin{proof}
	We rely on a result of Lammers and Ott \cite[Theorem 2.8]{LamOtt}, stating that if
	\begin{align*}
		\mathcal V_e^{\beta}(k - 1) - 2\mathcal V_e^{\beta}(k) + \mathcal V_e^{\beta}(k + 1) = -\log\Big(\frac{I_{k-1}(\beta)I_{k + 1}(\beta)}{I_k(\beta)^2}\Big)
	\end{align*}
	is a nonincreasing function of $k$ on $\{0,1,\ldots\}$, then $\mathbb P_{G,\beta}$ is absolute-value-FKG. We define $r_k = \tfrac{1}{\beta}\frac{I_{k }(\beta)}{I_{k-1}(\beta)}$, 
	and need to show that $r_k^2\leq r_{k-1}r_{k+1}$ for all $k\geq 0$.
	The well known recurrence relation 
	\[
	I_{k - 1}(\beta) = \tfrac{2k}{\beta} I_{k}(\beta) + I_{k + 1}(\beta) \qquad \text{yields} \qquad r_k = (2k + \beta^2r_{k + 1})^{-1}.
	\]
	Hence it is enough to prove that
	\[
		(2k + \epsilon_{k + 1})(2(k + 2) + \epsilon_{k + 3})\leq (2(k + 1) + \epsilon_{k + 2})^2,
	\]
	where $\epsilon_k = \beta^2r_{k}$. Using the Tur\'{a}n inequality \eqref{E: Turan}, it follows that $0 \leq r_{k + 1} \leq r_{k}$, and therefore it is sufficient to establish that
\[
		R_k:=(2k + \epsilon_{k + 1})(2k + 4 + \epsilon_{k + 1}) - (2k + 2)^2=4(k+1)\epsilon_{k + 1} + \epsilon_{k + 1}^2  - 4 \leq 0.
	\]
	At the same time, simply using the definition of $r_{k+1}$ and comparing the Taylor expansions~\eqref{eq:potentialBessel} of $I_{k+1}$ and $I_k$ term by term gives $\epsilon_{k+1} \leq \beta^2/(2k+2)$. Therefore, when $\beta\leq 1$, we have $R_k\leq \epsilon_{k + 1}^2  - 2 \leq 0$ for all $k\geq0$, which concludes the proof. 
\end{proof}

To treat general values of $\beta$, we will use a trick which consists in replacing each edge of $G$ by $s=\lceil\beta \rceil $ consecutive edges, and reducing the parameter $\beta$ by the factor $s$, 
together with the following convolution property of the modified Bessel functions.

\begin{lemma} \label{L: convolution of I}
	For all $k, l \in \ZZ$ and all $\beta,\beta'\geq0$, 
	\[
		\sum_{m \in \ZZ} I_{k - m}(\beta)I_{m - l}(\beta') = I_{k - l}(\beta+\beta'). 
	\]
\end{lemma}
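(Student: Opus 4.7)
The plan is to deduce the identity from the Jacobi--Anger expansion
\[
e^{\beta \cos\theta} \;=\; \sum_{k\in\ZZ} I_k(\beta)\,e^{ik\theta}, \qquad \theta\in\RR,\ \beta\geq 0,
\]
which reduces the convolution to a triviality: since the exponentials multiply, the claim will fall out by matching Fourier coefficients on both sides of $e^{\beta\cos\theta}\cdot e^{\beta'\cos\theta}=e^{(\beta+\beta')\cos\theta}$.

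The first step is to verify Jacobi--Anger directly from the series definition recorded in~\eqref{eq:potentialBessel}. Writing $\beta\cos\theta=(\beta/2)(e^{i\theta}+e^{-i\theta})$ and expanding the two exponential factors gives
\[
e^{\beta\cos\theta}\;=\;\sum_{a,b\geq 0}\frac{(\beta/2)^{a+b}}{a!\,b!}\,e^{i(a-b)\theta};
\]
grouping the double sum by $k=a-b$ and using $I_k=I_{|k|}$ recovers the expansion. The underlying double series is dominated by $\sum_{a,b}(\beta/2)^{a+b}/(a!\,b!)=e^{\beta}<\infty$, which makes every rearrangement below unconditional.

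The second step is to multiply the Jacobi--Anger expansions at parameters $\beta$ and $\beta'$ and regroup the result as a single Fourier series: the coefficient of $e^{ij\theta}$ in the product is exactly $\sum_{p+q=j} I_p(\beta)\,I_q(\beta')$. Applying Jacobi--Anger once more at parameter $\beta+\beta'$ to the left-hand side of $e^{\beta\cos\theta}\,e^{\beta'\cos\theta}=e^{(\beta+\beta')\cos\theta}$ and equating coefficients yields $\sum_{p+q=j} I_p(\beta)\,I_q(\beta')=I_j(\beta+\beta')$ for every $j\in\ZZ$. Substituting $j=k-l$, $p=k-m$, $q=m-l$ finishes the proof.

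I do not expect a genuine obstacle: the only nontrivial input is Jacobi--Anger, and its derivation just above is entirely routine. If one wished to avoid Fourier analysis altogether, the same identity admits a short purely combinatorial proof by inserting the binomial-theorem identity
\[
\frac{((\beta+\beta')/2)^{a+b}}{a!\,b!} \;=\; \sum_{\substack{a_1+a_2=a \\ b_1+b_2=b}} \frac{(\beta/2)^{a_1+b_1}(\beta'/2)^{a_2+b_2}}{a_1!\,a_2!\,b_1!\,b_2!}
\]
into the series for $I_{k-l}(\beta+\beta')$ and reindexing the resulting four-fold sum by $k-m:=a_1-b_1$; the inner double sum in $(a_2,b_2)$ at fixed $m$ is then $I_{m-l}(\beta')$ and the inner double sum in $(a_1,b_1)$ is $I_{k-m}(\beta)$, as required.
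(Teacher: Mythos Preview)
Your proof is correct. The Jacobi--Anger derivation from the series definition is clean, the absolute convergence justifies the rearrangement, and matching Fourier coefficients after multiplying gives exactly the convolution identity.

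The paper takes a different but equally short route: it observes that $I_k(\beta)/e^{\beta}=\PP(Z-Z'=k)$ for independent Poisson variables $Z,Z'$ of mean $\beta/2$, so the convolution identity becomes the statement that the sum of two independent Skellam variables (with parameters $\beta$ and $\beta'$) is again Skellam (with parameter $\beta+\beta'$), which in turn reduces to the additivity of independent Poissons. Your generating-function argument and the paper's probabilistic one are essentially Fourier duals of each other---Jacobi--Anger is the characteristic function of the Skellam law---so neither is more elementary, though the probabilistic version requires no computation at all once the Skellam interpretation is recognised, while yours is entirely self-contained from the series in~\eqref{eq:potentialBessel}. Your alternative combinatorial proof via the binomial expansion is also valid and is in fact the direct verification one would write if neither shortcut were known.
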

\begin{proof}
This is a classical identity which follows from the fact that $I_{k }(\beta)/e^{\beta}=\mathbb P(Z-Z'=k)$, where $Z,Z'$ are independent Poisson random variables with mean $\beta/2$, and 
the fact that a sum of independent Poisson random variables is Poisson.
\end{proof}

With this we can prove Proposition \ref{P: abs-FKG}.

\begin{proof}[Proof of Proposition \ref{P: abs-FKG}]
	Let $G_s = (V_s, E_s)$ be $G$ with each edge replaced by $s$ consecutive edges, and let $h_s$ be the height function on $G_s$ with law $\mu_{G_s, \beta/s}$.
	By Lemma \ref{L: convolution of I} (and an induction argument) the restriction of $h_s$ to $V$ has the same law as $h_1$. Moreover, $\beta /s \leq 1$ by definition of $s$, which by Lemma \ref{L: absFKG beta small} implies that $\mu_s$ is absolute-value-FKG. To finish the proof it is enough to notice that any increasing function on $\mathbb N^V$ is also increasing on $\mathbb N^{V_s}$.
	\end{proof}

\begin{remark}
An interesting consequence of the idea above (that we will not use in this article) is the following. Consider the case when $s$ from above is independent of $\beta$ and diverges to infinity. 
In this limit, the height function becomes well defined at every point of every dual edge. Here we think of the dual graph as the so called \emph{cable graph}, i.e., every 
dual
edge $e$ is identified with a continuum interval of length $J_e\beta$. Then the distribution of the height on an edge, when conditioned on the values at the endpoints, is one of the difference of two Poisson processes with intensity $J_e\beta/2$ each,
and conditioned on the value at the endpoints. One can check that the model exhibits a spatial Markov property on the full cable graph and not only on the vertices.
This is in direct analogy with the cable graph representation of the discrete Gaussian free field, where the vertex-field can be extended to the edges via Brownian bridges (see e.g.~\cite{Lupu} and the references therein).
\end{remark}

\section{Loop representation of currents and path reversal} \label{sec:singleswitch}
The purpose of this section is mainly to develop a loop representation for the two-point function of the XY model.
The important aspect of our approach is that the correlations are represented as probabilities for loop connectivities in random ensembles of closed loops.
This is in contrast with most of the classical representations that write correlation functions as ratios of partition functions of loops, where in the numerator, in addition to loops, one
also sums over open paths between the points of insertion in the correlator~\cite{Sym,BFS}.
We note that a similar idea to ours appears in the work of Benassi and Ueltschi~\cite{BenUel}, but due to technical differences in the framework (see Remark~\ref{rem:BenUel}), the formula for the two-point function obtained in~\cite{BenUel} is not as transparent as ours.

Let $G=(V,E)$ be a finite, not necessarily planar graph. 
We say that a multigraph $\mathcal M$ on $V$ is a \emph{submultigraph} of $G$ if after identifying the multiple copies of the same edge in $\mathcal M$ it is a subgraph of $G$.

\begin{definition}[Loop configurations outside $S$]
Let $\mathcal M$ be a submultigraph of $G$, and let $S\subseteq V$. A \emph{loop configuration (on $\mathcal M$) outside $S$} is 
a collection of  
\begin{itemize} 
\item unrooted directed loops on $\mathcal M$ avoiding $S$, and
\item directed open paths on $ \mathcal M$ starting and ending in $S$ (and not visiting $S$ except at their start and end vertex),
\end{itemize}
such that every edge of $\mathcal M$ is traversed exactly once by a loop or a path. 

We write $\DCWire^S$ for the set of all loop configurations outside $S$, and define a weight for $\omega\in \DCWire^S$ by
\begin{equation} \label{eq: colour weight}
	 \lambda^S_\beta(\dcwire) = \prod_{v \in V \setminus S} \frac{1}{(\deg_{\mathcal M}(v) / 2)!}  \prod_{e\in E} \frac{1}{\mathcal M_e!} \Big(\frac{\beta}{2}\Big)^{\mathcal M_e},
\end{equation}
where $\mathcal M$ is the underlying multigraph, and $\mathcal M_e$ is the number of copies of $e$ in $\mathcal M$.  
When $S=\emptyset$, a configuration is composed only of loops 
that can visit every vertex in $V$, and we simply call it a loop configuration. 
\end{definition}
An important feature of the weight~\eqref{eq: colour weight} is that it depends on $\dcwire$ only through $\mathcal M$. 
Also note, that if $S'\subseteq S$, then there is a natural map $\rho: \mathcal L^{S'} \to \mathcal L^S$ that consists in forgetting (or cutting) the loop connections at the vertices in $S\setminus S'$.
Under this map, each configuration in $ \mathcal L^S$ has $\prod_{v \in S \setminus S'} {(\deg_{\mathcal M}(v) / 2)!}$ preimages, each of them having the same weight, and hence
\begin{align}\label{eq:cutting}
\sum_{\tilde \omega \in \rho^{-1}[\omega]} \lambda^{S'}_\beta(\tilde \dcwire) =  \lambda^S_\beta(\dcwire).
\end{align}  
This consistency property will be useful later on.

For now, let $|\n|: E\to \mathbb N$ be the \emph{amplitude} of a current $\n$, i.e.
\[
|\n|_{vv'}:=\n_{(v,v')}+\n_{(v',v)}.
\]

\begin{definition}[Multigraph of a current and consistent configurations]
For a current $\n$, let $\mathcal M_{\n}$ be the submultigraph of $G$ where each edge $e\in E$ is replaced by $|\n|_e$ (possibly zero) parallel copies of $e$.
A loop configuration on $\mathcal M_{\n}$ is called \emph{consistent with $\n$} if for every edge $(v,v')\in\vec E$, the number of times the loops traverse a copy of $vv'$
in the direction of ${(v,v')}$ is equal to 
$\n_{(v,v')}$.
We define $ {\DCWire}^S_{\n}$ to be the set of all loop configurations on $\mathcal M_{\n}$ outside $S$ that are consistent with $\n$. 
\end{definition}

For $\varphi: V\to \mathbb Z$, let $\Omega_\varphi=\{\n : \delta \n =\varphi \}$,
\[
Z^{\varphi}_{G,\beta} =\sum_{\n \in \Omega_\varphi} w_{\beta}(\n),
\]
and $\mathcal S (\varphi)=\{ v\in V: \varphi_v\neq 0\}$.
For a current $\n$, with a slight abuse of notation, we also write $\mathcal S(\n)=\mathcal S(\delta \n)$.
Note that $ {\DCWire}^S_{\n}$ can be nonempty only if $\mathcal S(\n)\subseteq S$. Indeed, each path and loop that enters a vertex in $V\setminus S$ must also leave it,
and hence the total number of incoming and outgoing arrows at each such vertex must be the same.
For $\varphi: V\to\mathbb Z$, we also define
\[
  \DCWire_{\varphi}^S=\bigcup_{\n\in \Omega_{\varphi}}  \DCWire^S_{\n}.
\] 
Again, this is nonempty only if $\mathcal S(\varphi)\subseteq S$.
We will write $  \DCWire_{0}^S$, where $0$ denotes the zero function on $V$.

We now relate the weights of loops to those of currents. To this end, note that for each edge $vv'\in E$, there are exactly 
\[
	\frac{|\n|_{v v'}!}{\n_{(v, v')}!\n_{{(v', v)}}!}
\]
ways of assigning orientations to it so that the result is consistent with $\n$.
Moreover, independently of the choices of orientations, there are exactly $(\deg_{\mathcal M_{\n}}(v) / 2)!$ possible pairings of the incoming and outgoing edges at each vertex $v \in V\setminus S$.
Combining all this we arrive at a crucial loop representation for current weights: if $\mathcal S(\n)\subseteq S$, then	
\begin{align} \label{eq:loopexp}
w_\beta(\n)= \sum_{\dcwire \in  \DCWire^S_{\n}}  \lambda^S_\beta(\dcwire).
\end{align}
An important observation here is that the left-hand side is independent of $S$, and hence so is the right-hand side.

\subsection{Coupling with the height function}
We now apply this framework to the case of two sourceless currents and a coupling with the corresponding height function. From~\eqref{eq:loopexp} we have
\begin{align} \label{eq:squarepartition}
Z^0_{G,\beta}= \sum_{\dcwire \in  \DCWire^\emptyset_0}  \lambda^{\emptyset}_\beta(\dcwire) 
\end{align}
where $0$ denotes the zero function on $V$.
\begin{remark}
This loop representation of the partition function, though obtained via a different procedure, goes back to the work of Symanzik~\cite{Sym}, and Brydges, Fr\"{o}hlich and Spencer~\cite{BFS}.
\end{remark}

Moreover, in the case when $G$ is planar we immediately get the following distributional identity. 
Define $  {\mathbf{P}}_{G,\beta}$ to be the probability measure on $ {\mathcal L}_0:=\mathcal L^\emptyset_0$ induced by the weights $ \lambda_{\beta}:=\lambda^\emptyset_{\beta}$.
For each face $u\in U$ of $G$, and $\omega\in \mathcal L_0$, define $W_{\omega}(u)$ to be the total net winding
of all the loops in $\omega$ around $u$. 
\begin{proposition} \label{prop:netwinding}
The law of $(W(u))_{u\in U}$ under ${\mathbf{P}}_{G,\beta}$ is the same as the law of the height function $(h(u))_{u\in U}$ under ${\mathbb P}_{G,\beta}$.
\end{proposition}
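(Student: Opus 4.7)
The plan is to decompose the map $\omega \mapsto W_\omega$ through the intermediate map $\omega \mapsto \mathbf{n}_\omega$, where $\mathbf{n}_\omega$ is the current recording the number of directed traversals of each edge by the loops in $\omega$. By the definition of consistency, $\omega \in \mathcal{L}^\emptyset_{\mathbf{n}_\omega}$, and the relation \eqref{eq:loopexp} (taken with $S = \emptyset$ and summed over $\mathbf{n} \in \Omega_0$) shows that the pushforward of $\mathbf{P}_{G,\beta}$ under $\omega \mapsto \mathbf{n}_\omega$ is precisely the current measure with weights $w_\beta$ normalised by $Z^0_{G,\beta}$ (as in \eqref{eq:squarepartition}). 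Composing further with $\mathbf{n} \mapsto h_\mathbf{n}$ yields the measure $\mathbb{P}_{G,\beta}$ on height functions by its definition. It therefore suffices to prove the purely geometric identity $W_\omega(u) = h_{\mathbf{n}_\omega}(u)$ for every face $u$ and every $\omega \in \mathcal{L}_0$.

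The key identity is verified locally. Fix a loop configuration $\omega \in \mathcal{L}_0$ and two neighbouring faces $u, u'$ sharing a dual edge that crosses the primal edge $vv'$ from $u$ to $u'$ (i.e.\ in the orientation convention of the height function, the directed primal edge $(v,v')$ crosses the dual from right to left). Each directed traversal of $(v,v')$ by a loop in $\omega$ contributes $+1$ to the net winding $W_\omega(u) - W_\omega(u')$, while each traversal of $(v',v)$ contributes $-1$; summing over all traversals gives
\[
W_\omega(u) - W_\omega(u') = (\mathbf{n}_\omega)_{(v,v')} - (\mathbf{n}_\omega)_{(v',v)} = h_{\mathbf{n}_\omega}(u) - h_{\mathbf{n}_\omega}(u').
\]
Since $\omega$ is a finite collection of compact loops in the plane, its total winding around the unbounded face is zero, matching the normalisation $h_{\mathbf{n}_\omega}(u_\infty) = 0$. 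As $\Gamma^\dagger$ is connected, these two pieces of information pin down $W_\omega$ globally and show it coincides with $h_{\mathbf{n}_\omega}$.

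Combining the two steps, for any $\varphi : U \to \mathbb{Z}$,
\[
\mathbf{P}_{G,\beta}(W_\omega = \varphi) = \frac{1}{Z^0_{G,\beta}} \sum_{\mathbf{n} \in \Omega_0 : h_\mathbf{n} = \varphi} \; \sum_{\omega \in \mathcal{L}^\emptyset_\mathbf{n}} \lambda_\beta(\omega) = \frac{1}{Z^0_{G,\beta}} \sum_{\mathbf{n} \in \Omega_0 : h_\mathbf{n} = \varphi} w_\beta(\mathbf{n}) = \mathbb{P}_{G,\beta}(h = \varphi),
\]
as desired.

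The main obstacle is the geometric identification $W_\omega = h_{\mathbf{n}_\omega}$: one must check the sign convention relating primal edge orientation, dual edge orientation and winding, and confirm that the winding around the outer face of a family of compact planar loops vanishes. Everything else is a rearrangement of the identities \eqref{eq:loopexp} and \eqref{eq:squarepartition} already established.
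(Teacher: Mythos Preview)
Your proof is correct and is precisely the argument the paper has in mind: the paper does not spell out a proof at all, simply stating that ``in the case when $G$ is planar we immediately get the following distributional identity,'' and your proposal unpacks this immediacy in the natural way---factoring through the current $\mathbf n_\omega$, invoking \eqref{eq:loopexp} to identify the pushforward law, and checking the elementary geometric fact that the face-to-face increment of the winding number equals the net directed flow across the separating edge.
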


\subsection{The two point-function and path reversal} \label{sec:twopoint}
We now turn to the loop representation of the two-point function. 
For reasons that will become apparent soon, we need to consider the two-point function of the squares, i.e., $\langle \sigma^2_a \bar \sigma^2_b \rangle$.
We note that the more standard correlation function $\langle \sigma_a \bar \sigma_b \rangle$ (or rather its square) can be treated using our approach from Appendix~\ref{sec:doubleswitch}.

Since the resulting currents will have sources, we will need to consider nonempty $S$ in the construction above.
To this end, fix two vertices $a,b\in V$, and
and define $\varphi =2(\delta_a-\delta_b)$, where $\delta_a(v)=\id\{ a=v\}$.
To lighten the notation, will write $a,b$ instead of $\{a,b\}$ for the set $S$. 
As for the partition function, expanding the exponential in the Gibbs--Boltzmann weights~\eqref{def:xy} into a power series in $\tfrac12 \beta J_{vv'}\sigma_v\bar \sigma_{v'}$ for each directed $(v,v')\in \vec E$, and integrating out the~$\sigma$ variables, we classically get 
\begin{align} \label{eq:currentexp}
\langle \sigma^2_a \bar \sigma^2_b \rangle_{G,\beta}= \frac{Z^{\varphi}_{G,\beta} }{Z^0_{G,\beta}}
= \frac{\sum_{ \dcwire \in  \DCWire^{a,b}_{\varphi}}  \lambda^{a,b}_\beta(\dcwire)}{Z^0_{G,\beta}},
\end{align}
where the last equality is new and follows from \eqref{eq:loopexp}.

We will write $\mathcal P_{a,b}(\dcwire)$ for the set of paths in $\omega$ that start at $a$ and end at $b$, and define
\[
m_{a,b}(\omega)=|\mathcal P_{a,b}(\dcwire)|.
\]
We now want to ``erase the sources'' at $a$ and $b$ from the currents underlying $ \DCWire^{a,b}_{\varphi}$, and hence rewrite the numerator as a sum over $ \DCWire^{a,b}_{0}$. We will then ultimately connect the open paths at $a$ and $b$ in all possible ways, and hence get a sum over $ \DCWire^\emptyset_{0}$ (see Figure \ref{f:path-switching} for an example). To this end note that in each $\dcwire\in  \DCWire^{a,b}_{\varphi}$ there are exactly two more paths going from $a$ to $b$, than those going from $b$ to $a$, i.e., $m_{a,b}(\omega)=m_{b,a}(\omega)+2$. 
The elementary operation that we will perform on the former paths is reversal. To this end, denote by $r(\gamma)$ the path $\gamma$ with the orientation of all the visited edges reversed.
Obviously this does not change the underlying multigraph, and hence also the weight of the loop configuration. 
The crucial observation now is that it maps $\dcwire\in  \DCWire^{a,b}_{\varphi}$ to a configuration $ \dcwire' \in  \DCWire^{a,b}_{0}$, and hence erases the sources of the underlying currents.
Indeed one can easily check that after reversing a path, the number of incoming minus the number of outgoing edges at every vertex $v\notin  \{a,b\}$ in $\dcwire'$ is the same as in $\dcwire$, whereas at $a$ (resp.\ $b$) this number is decreased (resp.\ increased) by two. 
More precisely, our transformation maps bijectively a pair $(\dcwire,\gamma)$ where $\dcwire\in \DCWire^{a,b}_{\varphi}$ and $\gamma \in \mathcal P_{a,b}(\omega) $
to the pair $( \dcwire', r(\gamma))$ where $ \dcwire' \in  \DCWire^{a,b}_0$ and $r( \gamma )\in \mathcal P_{b,a}(\omega')$.
Moreover, $m_{b,a}( \dcwire')= m_{b,a}(\dcwire)+1$, which in particular means that $m( \dcwire')> 0$.
Since path reversal does not change the weight of a loop configuration, we obtain
\begin{align*}
\sum_{   \dcwire \in \DCWire^{a,b}_{\varphi}} \lambda^{a,b}_\beta( \dcwire) &=\sum_{  \dcwire \in  \DCWire^{a,b}_{\varphi}, \gamma \in  \mathcal P_{a,b}( \omega)}\frac{1}{m_{b,a}( \omega)+2} \lambda^{a,b}_\beta( \dcwire) \\
&=\sum_{   \dcwire' \in  \DCWire^{a,b}_0,  \gamma' \in  \mathcal P_{b,a}(  \omega')}\frac{1}{ m_{b,a}( \dcwire')+1 } \lambda^{a,b}_\beta( \dcwire') \id \{ m_{a,b}( \dcwire') > 0 \} \\ 
&={\sum_{   \dcwire '\in  \DCWire^{a,b}_0}}\frac{m_{b,a}( \dcwire') }{m_{b,a}( \dcwire')+1 } \lambda^{a,b}_\beta( \dcwire') \id\{ m_{b,a}( \dcwire') > 0 \} \\
&={\sum_{   \dcwire'' \in  \DCWire^\emptyset_0}}\frac{m_{b,a}( \dcwire'') }{m_{b,a}( \dcwire'')+1 } \lambda^\emptyset_\beta( \dcwire'') \id\{ m_{b,a}( \dcwire'') > 0 \} ,
\end{align*}
where in the second equality we used path reversal, the last equality follows from \eqref{eq:cutting} with $S'=\emptyset$, and where, with a slight abuse of notation, for $\dcwire''\in \DCWire^\emptyset_0$, $m_{b,a}( \dcwire'') $ is the number of pieces of loops going from $b$ to $a$ and not visiting $b$ nor $a$ except for the start and end vertex.
Recall that $  {\mathbf{P}}_{G,\beta}$ is the probability measure on $ \mathcal L^\emptyset_0$ induced by the weights~$ \lambda^\emptyset_{\beta}$,
and note that $m_{b,a}$ has the same distribution as $m_{a,b}$  under $  {\mathbf{P}}_{G,\beta}$ (the law on loops is invariant under a global orientation reversal).
We therefore obtain from \eqref{eq:squarepartition} and \eqref{eq:currentexp} the following loop representation of the two-point function. 

\begin{lemma} \label{L:SingleSwitching}
	Let $a,b\in V$ be distinct. Then
	\[
		\langle \sigma^2_{a}\bar{\sigma}^2_{b} \rangle_{G,\beta} = {\mathbf{E}}_{G,\beta}\Big[\frac{m_{a,b}}{m_{a,b} + 1}\Big], 
	\]
	and in particular
	\[
		\frac{1}{2} {\mathbf{P}}_{G,\beta}(m_{a,b} > 0) \leq \langle \sigma^2_{a}\bar{\sigma}^2_{b} \rangle_{G,\beta} \leq  {\mathbf{P}}_{G,\beta}(m_{a,b} >0). 
	\]
\end{lemma}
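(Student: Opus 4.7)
The plan is to chain together the ingredients already assembled above: the current expansion of $\langle\sigma_a^2\bar\sigma_b^2\rangle_{G,\beta}$, the loop representation \eqref{eq:loopexp} of current weights, and the consistency identity \eqref{eq:cutting}, glued by a path-reversal bijection that turns sourced loop configurations into sourceless ones. Starting from
\[
\langle\sigma_a^2\bar\sigma_b^2\rangle_{G,\beta} = \frac{Z^\varphi_{G,\beta}}{Z^0_{G,\beta}}, \qquad \varphi = 2(\delta_a-\delta_b),
\]
I would first apply \eqref{eq:loopexp} with $S=\{a,b\}$ to the numerator and \eqref{eq:squarepartition} to the denominator, reducing the problem to a manipulation of weighted sums over $\mathcal L^{a,b}_\varphi$ and $\mathcal L^\emptyset_0$ respectively.

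Next I would implement path reversal. Source balance at $a$ and $b$ forces every $\omega \in \mathcal L^{a,b}_\varphi$ to contain $m_{a,b}(\omega)=m_{b,a}(\omega)+2$ paths from $a$ to $b$. Marking one such path $\gamma$ and reversing it produces a pair $(\omega',\gamma')$ with $\omega' \in \mathcal L^{a,b}_0$ and $\gamma'$ a path from $b$ to $a$ in $\omega'$; since reversal preserves the underlying multigraph, the weight $\lambda^{a,b}_\beta$ is unchanged, and one checks the map is a bijection onto all such pairs. Averaging over the $m_{a,b}(\omega)$ choices of $\gamma$ and then summing over $\gamma'$ makes each $\omega' \in \mathcal L^{a,b}_0$ with $m_{b,a}(\omega')>0$ contribute $\tfrac{m_{b,a}(\omega')}{m_{b,a}(\omega')+1}\lambda^{a,b}_\beta(\omega')$. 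Invoking \eqref{eq:cutting} with $S'=\emptyset$ then collapses the sum over $\mathcal L^{a,b}_0$ into a sum over $\mathcal L^\emptyset_0$, where $m_{b,a}$ is reinterpreted as the number of loop arcs from $b$ to $a$ that avoid $\{a,b\}$ except at their endpoints. Dividing by $Z^0_{G,\beta}$ turns the resulting sum into an expectation under $\mathbf P_{G,\beta}$, and the obvious orientation-reversal symmetry gives $m_{b,a} \stackrel{d}{=} m_{a,b}$, producing the identity in the lemma. The two-sided bound is then immediate from
\[
\tfrac12\,\id\{m\geq 1\}\leq \frac{m}{m+1}\leq \id\{m\geq 1\}, \qquad m \in \mathbb N,
\]
together with monotonicity of expectation.

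The main obstacle is the bookkeeping of combinatorial factors in the reversal step: one must verify that the marking weight $1/m_{a,b}(\omega)$ becomes exactly $1/(m_{b,a}(\omega')+1)$ after reversal, using $m_{a,b}(\omega)=m_{b,a}(\omega)+2=m_{b,a}(\omega')+1$, and that the notion of ``path from $b$ to $a$'' transforms correctly when passing from $\mathcal L^{a,b}_0$ to $\mathcal L^\emptyset_0$ via \eqref{eq:cutting}, namely into the loop-arc count adopted in the definition of $m_{b,a}$ for the lemma. Once these two accounting points are nailed down, the rest is direct algebraic manipulation.
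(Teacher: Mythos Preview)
Your proposal is correct and follows the paper's own argument essentially step for step: the current expansion \eqref{eq:currentexp}, the path-reversal bijection between marked pairs $(\omega,\gamma)$ and $(\omega',\gamma')$, the resulting factor $\tfrac{m_{b,a}(\omega')}{m_{b,a}(\omega')+1}$, the passage from $\mathcal L^{a,b}_0$ to $\mathcal L^\emptyset_0$ via \eqref{eq:cutting}, and the final orientation-reversal symmetry are exactly what the paper does in Section~\ref{sec:twopoint}. The combinatorial bookkeeping you flag as the main obstacle (that $m_{a,b}(\omega)=m_{b,a}(\omega')+1$) is likewise the key identity used there.
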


\begin{figure}
	\begin{subfigure}{.2\textwidth}
		\centering
		\includegraphics[scale =0.11]{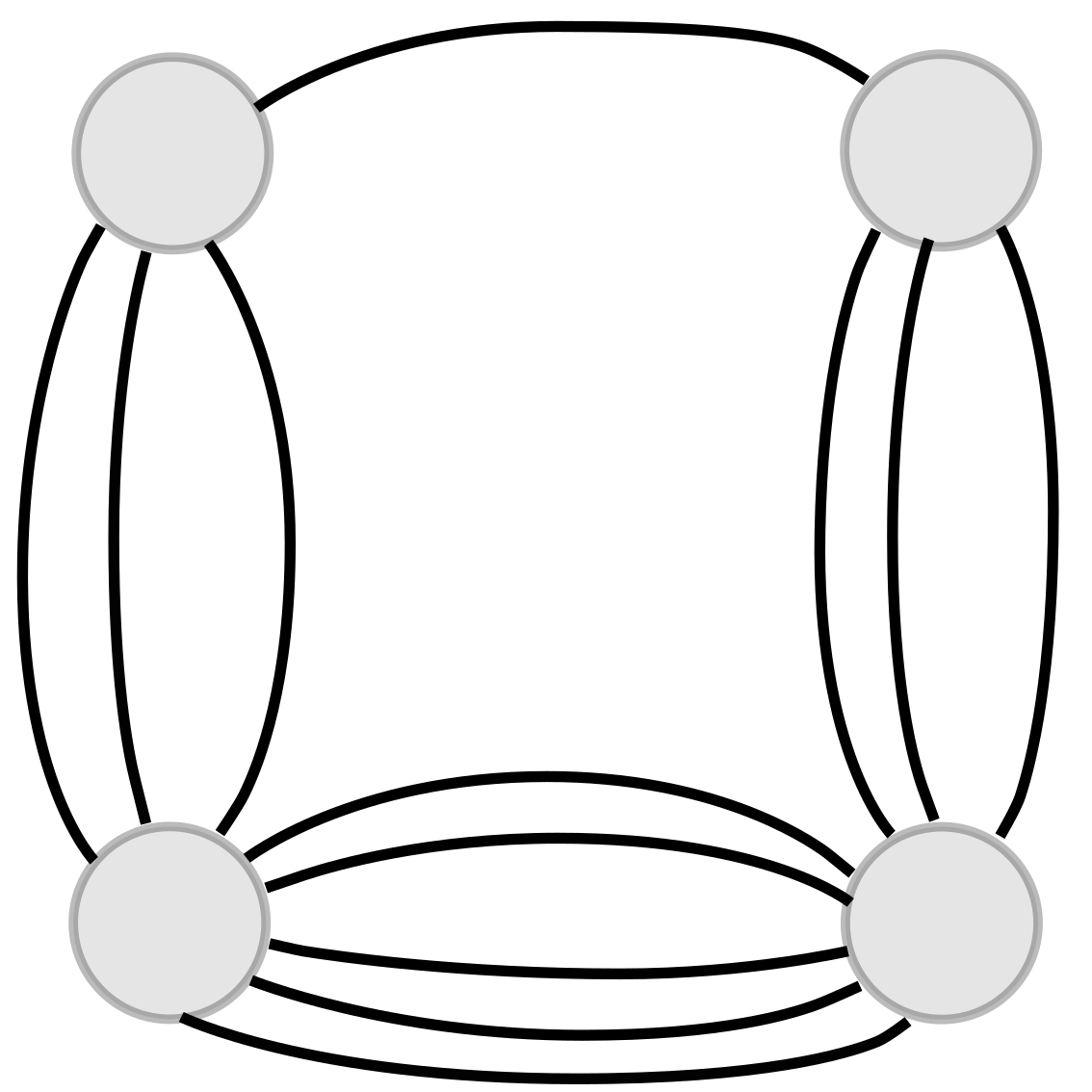}
	\end{subfigure} \hspace{0.5cm}
	\begin{subfigure}{.2\textwidth}
		\centering
		\includegraphics[scale =0.11]{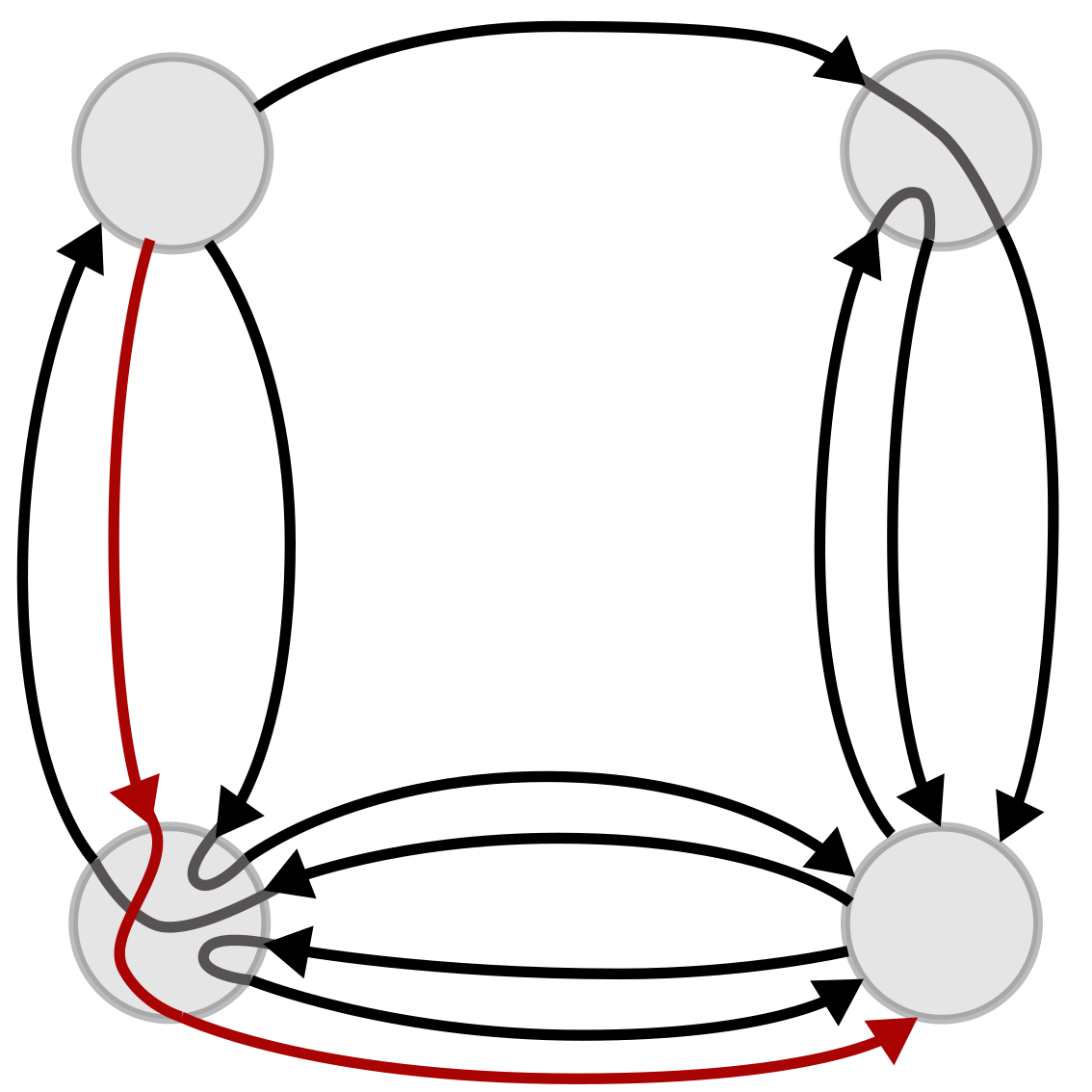} 
	\end{subfigure}  \hspace{0.5cm}
	\begin{subfigure}{.2\textwidth}
		\centering
		\includegraphics[scale =0.11]{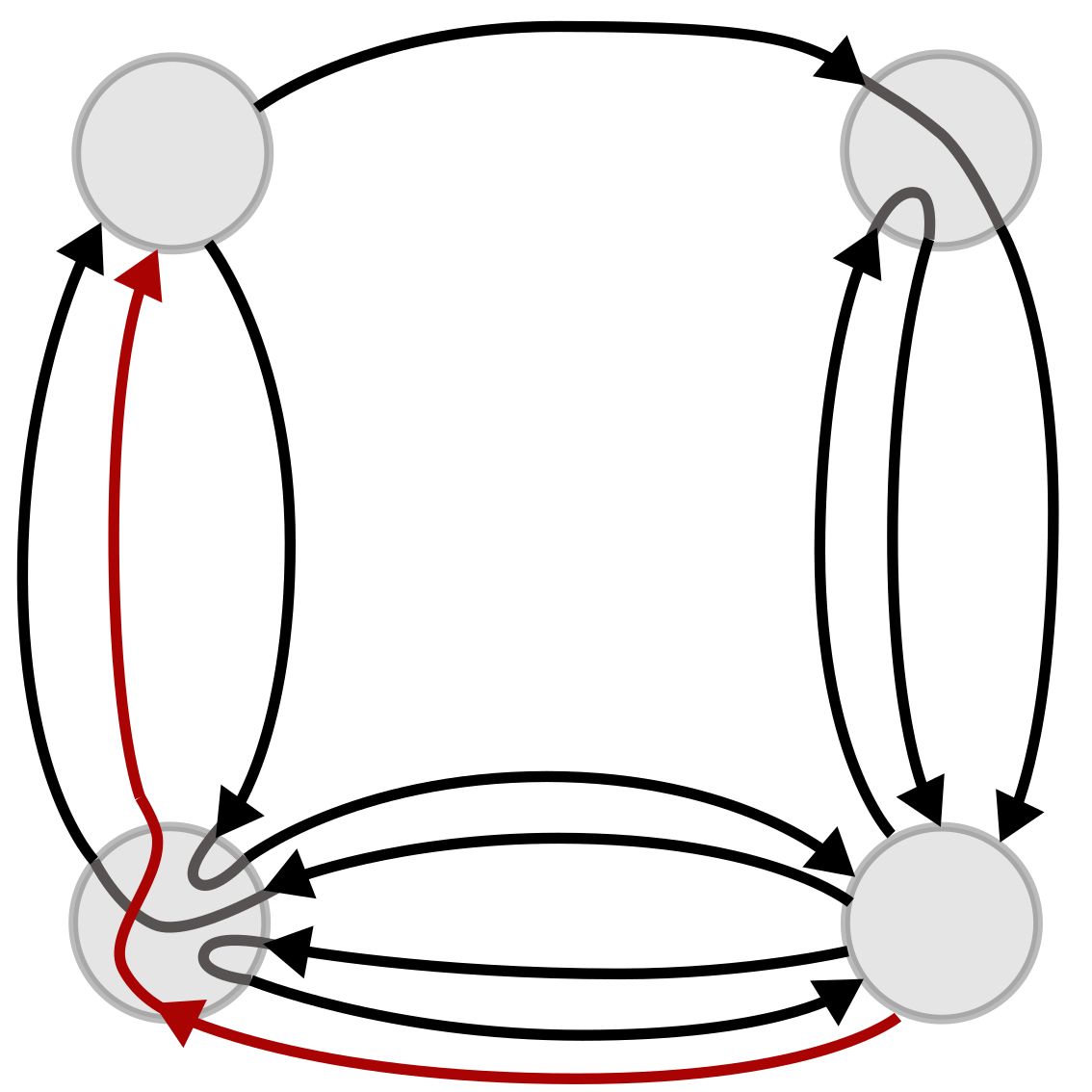} 
	\end{subfigure}  \hspace{0.5cm}
	\begin{subfigure}{.2\textwidth}
		\centering
		\includegraphics[scale =0.11]{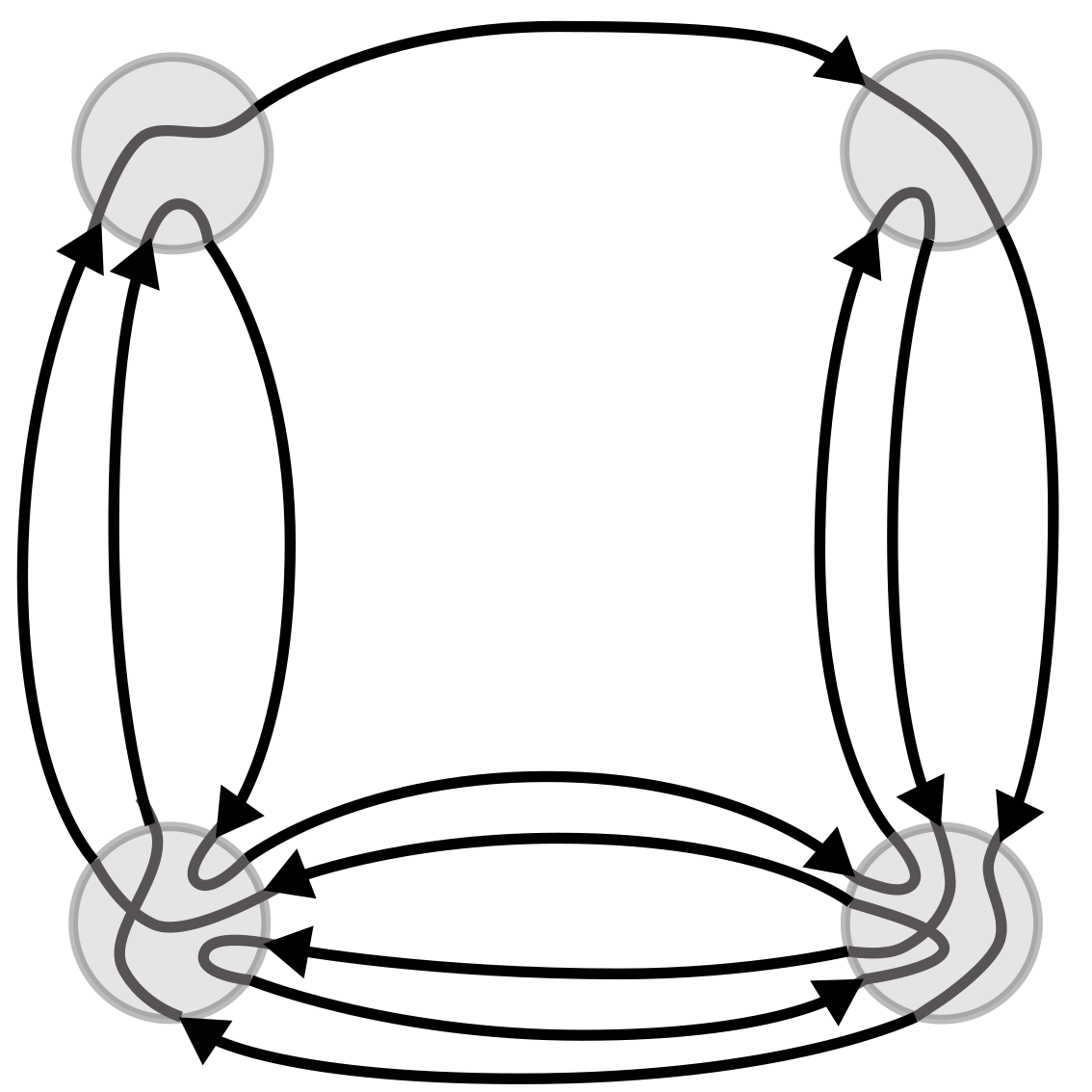}
	\end{subfigure}
	\caption{Left to right: an Eulerian multigraph $\mathcal M$; a loop configuration $\omega\in \mathcal L^{a,b}_{2(\delta_a-\delta_b)}$ on $\mathcal M$ ($a$ is the top left and $b$ the bottom right vertex) together with a path from $a$ to $b$ marked red; a loop configuration $\omega'\in \mathcal L^{a,b}_{0}$ with the path reversed; and one of the final loop configurations $\omega''\in\mathcal L^{\emptyset}_{0}$ corresponding to $\omega'$, i.e., such that $\rho(\omega'')=\omega'$.}
	\label{f:path-switching}
\end{figure}

Let us finish with a number of remarks.
\begin{remark}
We stress again that the crucial property of this loop representation is that the measure $ {\mathbf{P}}_{G,\beta}$ is supported on collections of closed loops, and is independent of the choice of $a$ and $b$.
A similar idea was used by Lees and Taggi~\cite{LeeTag} to study spin $O(n)$ models with an external magnetic field.
Moreover, by Proposition~\ref{prop:netwinding} and Lemma~\ref{L:SingleSwitching}, the random loops under ${\mathbf{P}}_{G,\beta}$ carry probabilistic information about \emph{both} the spin XY model (in terms of correlation functions)
and its dual height function (as an exact coupling). An analogous role for the Ising and Ashkin--Teller model is played by the (double) random current measure that encodes both an integer valued height function and the spin correlations~\cite{DCL,Lis19,LisAT}.
The difference is that for the XY model, the correlations are determined by loop connectivities instead of percolation connectivities. This comparison offers an alternative explanation for the different types of phase transition in discrete and continuous spin systems.
\end{remark}

\begin{remark}\label{rem:BenUel}
The approach above is different from~\cite{Sym,BFS,BenUel,LeeTag} in that in the loop configurations, we never make connections at vertices with sources. 
This leads to different combinatorics than in~\cite{BenUel}, and in particular a more transparent formula for the two-point function. See also Appendix~\ref{sec:doubleswitch} for a different construction
where we allow such connections.  
\end{remark}

\begin{remark}
We call a multigraph $\mathcal M$ \emph{Eulerian} if its degree is even at every vertex.
Another way to sample the loop configuration that easily follows from the above definitions is the following procedure: 
\begin{itemize}
\item First sample an Eulerian submultigraph $\mathcal M$ of $G$ with probability proportional to 
\[
\mathcal E(\mathcal M) \prod_{e\in E} \frac{1} {\mathcal M_e!}\Big(\frac{\beta}{2}\Big)^{\mathcal M_e},
\]
where $\mathcal E(\mathcal M)$ is the number of \emph{Eulerian orientations} of $\mathcal M$, i.e., assignments of orientations to every edge of $\mathcal M$ with an equal number of incoming and outgoing edges at every vertex.
\item Then choose uniformly at random an Eulerian orientation of $\mathcal M$.
\item Finally, at each vertex, independently of other vertices, connect the incoming edges with the outgoing edges uniformly at random.
\end{itemize}
\end{remark}

\begin{remark}
Using the same argument as above one obtains the following formula for higher power two-point functions. For $k\geq1$, we have
	\[
		\langle \sigma^{2k}_{a}\bar{\sigma}^{2k}_{b} \rangle_{G,\beta} = {\mathbf{E}}_{G,\beta}\Big[\frac{(m_{a,b})_k}{(m_{a,b} + k)_k}\Big],
	\]
where $(m)_k = m(m - 1)\cdots (m - k+ 1)$ is the falling factorial. One can also consider multi-point functions and get more complicated loop representation formulas. 
\end{remark}

\begin{remark}
This representation is valid on any, not necessarily planar, graph, and it is known that the XY model exhibits long-range order in dimension greater than two~\cite{FSS}. The disorder-order transition should coincide with the onset of infinite loops (biinfinite paths) on the current. 
The alternative heuristic for the lack of symmetry breaking in two dimensions arising from this picture is that planar simple random walk is recurrent (an hence does not produce infinite loops).
\end{remark}

\begin{remark}
The isomorphism theorem of Le Jan~\cite{LeJan} says that the discrete complex Gaussian free field can be coupled with a Poissonian collection of random walk loops, the so called \emph{random walk loop soup}, 
in such a way that one half of the square of the absolute value of the field is equal to the total occupation time of the random walk loops. On the other hand, it is immediate that conditioned on the absolute value of the field,
its complex phase is distributed like the XY model with coupling constants depending on this absolute value. 
With some work, e.g.\ using~\cite{LeJan1}, one can show that under this conditioning the random walk loops have the same distribution as the loops described above.
\end{remark}

\section{Delocalisation implies no exponential decay} \label{sec:noexpdec}
In this section we prove that if the height function delocalises, then the spin correlations are not summable along certain sets of vertices. In the next section, we will show how to apply this 
together with the delocalisation results of Lammers~\cite{Lammers} to deduce a BKT-type phase transition in a wide range of periodic planar graphs.

Suppose $\Gamma=(V,E)$ is a translation invariant planar graph, and write
\begin{align} \label{eq:definfvol}
\langle \sigma_a\bar\sigma_{b} \rangle_{\Gamma,\beta} = \lim_{G\nearrow \Gamma}\langle \sigma_a\bar\sigma_{b} \rangle_{G,\beta}
\end{align}
for the infinite volume two-point function,
where the limit is taken along any increasing sequence of subgraphs $G$ exhausting $\Gamma$. That this is well defined is guaranteed by the fact that the sequence is nondecreasing, i.e., 
$\langle \sigma_a\bar\sigma_{b} \rangle_{G,\beta}\leq \langle \sigma_a\bar\sigma_{b} \rangle_{G',\beta}$ if $G$ is a subgraph of $G'$, which in turn is a classical consequence of the Ginibre inequality~\cite{Gin}.
\begin{definition}Let $0$ be a distinguished face of $\Gamma$. A bi-infinite self-avoiding path in $\Gamma$ that goes through at least one edge 
	incident to $0$ is called a \emph{cut (at $0$)}. Note that a cut $L$ naturally splits into two infinite sets of vertices $L_{ +}$ and $L_{ -}$ with the property that any cycle in $\Gamma$ that surrounds $0$ must intersect both $L_+$ and $L_-$.
\end{definition}
The main quantity of interest for us will be the sum of correlations along cuts. To be more precise for $\varepsilon>0$,
let
\begin{align} \label{E: chi}
		\chi^{\epsilon}_{\Gamma,\beta}(L)=\sum_{a \in L_{ +}, b \in L_{ -}} (\langle \sigma_{a} \overline{\sigma}_{b} \rangle_{\Gamma,\beta})^{2-\varepsilon}. 
\end{align}

\begin{proposition} \label{T:BKT-transition}
	For every $\epsilon > 0$, there exists $C=C(\epsilon,\beta,\Gamma) < \infty$ such that for all finite subgraphs $G$ of $\Gamma$ containing $0$, we have
		\[
		\E_{ G, \beta}[ |h(0)|] \leq C\inf_{L}\chi^{\epsilon}_{\Gamma,\beta}(L),
	\]
	where the infimum is over all cuts at $0$.
\end{proposition}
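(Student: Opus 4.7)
The plan is to use the loop representation of Section~\ref{sec:singleswitch} to translate the bound on $\E_{G,\beta}[|h(0)|]$ into a sum of two-point functions along the cut. By Proposition~\ref{prop:netwinding}, under the loop measure $\IP_{G,\beta}$ the net signed winding $W(0)=\sum_{\gamma\in\omega}w_\gamma(0)$ of the random loops around the face $0$ has the same distribution as $h(0)$ under $\PP_{G,\beta}$. It therefore suffices to bound $\IE_{G,\beta}[|W(0)|]$.

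The crucial step is a pointwise topological inequality of the form
\[
|W(0)|\leq\sum_{a\in L_+,\,b\in L_-}\id\{E_{a,b}\}(\omega),
\]
where $E_{a,b}$ is the event that $a$ and $b$ are joined by a sub-arc of some loop in $\omega$ that avoids $L\setminus\{a,b\}$. Indeed, any loop $\gamma$ with $|w_\gamma(0)|\geq 1$ surrounds the face $0$ and must therefore cross every ray from $0$ to infinity, in particular both $L_+$ and $L_-$ (by the defining property of a cut). Since $|w_\gamma(0)|$ equals the absolute signed intersection number of $\gamma$ with $L_+$, the loop crosses $L_+$ at least $|w_\gamma(0)|$ times, and likewise for $L_-$. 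Decomposing $\gamma$ at its visits to $L$ then yields $|w_\gamma(0)|$ sub-arcs running from $L_+$ to $L_-$ and avoiding $L$ in between, which can be chosen to come from distinct pairs $(a,b)$. Summing over $\gamma$ gives the inequality.

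Taking expectations and applying the cutting identity~\eqref{eq:cutting} (with $S'=\{a,b\}$ and $S=L$) together with Lemma~\ref{L:SingleSwitching}, I would bound $\IE_{G,\beta}[\id\{E_{a,b}\}]\leq C\langle\sigma_a^2\bar\sigma_b^2\rangle_{G,\beta}$. A ferromagnetic inequality (Lemma~\ref{L: ferro-magnet inequal}) together with the standard monotonicity $\langle\sigma_a\bar\sigma_b\rangle_{G,\beta}\leq\langle\sigma_a\bar\sigma_b\rangle_{\Gamma,\beta}$ arising from the Ginibre inequality then converts the squared correlator into a fractional power: for every $\epsilon>0$,
\[
\langle\sigma_a^2\bar\sigma_b^2\rangle_{G,\beta}\leq C_\epsilon\,\langle\sigma_a\bar\sigma_b\rangle_{\Gamma,\beta}^{2-\epsilon}.
\]
Summing over $(a,b)$ and taking the infimum over cuts $L$ at $0$ yields the claim.

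The main obstacle is the topological step: dominating $|W(0)|$ by a sum of \emph{indicators} (rather than arc counts) over pairs on the cut. A single loop can have arbitrarily large winding around $0$, and each unit of winding must be charged to a distinct pair $(a,b)$ through which that loop passes; formalising this on the possibly self-intersecting loops of the random multigraph requires a careful planar/combinatorial argument. Once this topological bound is in place, the switching identity, the ferromagnetic comparison, and the passage from the finite graph $G$ to the infinite-volume two-point function on $\Gamma$ are comparatively routine.
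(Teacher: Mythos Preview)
The central topological inequality you propose,
\[
|W(0)|\;\le\;\sum_{a\in L_+,\,b\in L_-}\id\{E_{a,b}\}(\omega),
\]
is not a matter of formalisation: it is false. A single loop can wind around $0$ arbitrarily many times while touching the cut $L$ only at one vertex $a\in L_+$ and one vertex $b\in L_-$. For a concrete picture on $\mathbb Z^2$, take the face $0$ with corners $(0,0),(1,0),(0,1),(1,1)$ and let $L$ be the $x$-axis; the loop that traverses the four edges of $0$ counterclockwise $k$ times (each edge being present in $k$ copies in the multigraph) has winding $k$, yet visits $L$ only at $(0,0)$ and $(1,0)$. The right-hand side above is then $1$ while the left-hand side is $k$. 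Your claim that the $|w_\gamma(0)|$ sub-arcs ``can be chosen to come from distinct pairs $(a,b)$'' is precisely what fails.

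This is exactly why the paper works with the \emph{count} $m_{a,b}$ rather than indicators: one has the honest pointwise bound $|W(0)|\le\sum_{a\in L_+,b\in L_-} m_{a,b}$, but then needs a separate argument (Lemma~\ref{L: M_xy L2-bounded}) to pass from $\mathbf E_{G,\beta}[m_{a,b}]$ to a power of $\mathbf P_{G,\beta}(m_{a,b}>0)$. That step is a H\"older inequality against the $q$-th moment of the edge multiplicity, and the exponent loss there is the true origin of the $\epsilon>0$ in the statement; without it one would be proving the $\epsilon=0$ version, which your indicator approach would indeed yield if it were valid. In particular the $\epsilon$ does not come from Lemma~\ref{L: ferro-magnet inequal} or any ferromagnetic comparison of $\langle\sigma_a^2\bar\sigma_b^2\rangle$ with $\langle\sigma_a\bar\sigma_b\rangle^{2-\epsilon}$: the paper simply uses $\langle\sigma_a^2\bar\sigma_b^2\rangle\le 2\langle\sigma_a\bar\sigma_b\rangle^2$ (Lemma~\ref{lem:squares}), and the fractional exponent is already present from the H\"older step.
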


Before presenting the proof, let us mention that a direct corollary of this proposition is the following.
A natural example of a cut is any path that stays at a constant distance from a straight line going through $0$.
In this case it is easy to see that $\chi^{\epsilon}_{\Gamma,\beta}(L)$ is finite whenever there is exponential decay of spin correlations. We can now state the main conclusion of this section. 
\begin{corollary}\label{cor:implication}
If the height function delocalises in the sense of Theorem~\ref{T: dichotomy}, then \[
\chi^{\epsilon}_{\Gamma,\beta}(L)=\infty
\] 
for all $\varepsilon>0$ and all cuts $L$ at $0$. In particular the two-point function does not decay exponentially fast with the distance between the vertices. 
\end{corollary}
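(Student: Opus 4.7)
The plan is a direct combination of Theorem~\ref{T: dichotomy} and Proposition~\ref{T:BKT-transition}, with essentially no new work beyond reading off the logical consequence. First I would invoke the dichotomy: since $\Gamma$ is delocalised, the localisation alternative fails, which means $\E_{B_n,\beta}[|h(0)|]$ is not bounded uniformly in $n$. Consequently one can extract a subsequence $n_k\to\infty$ along which $\E_{B_{n_k},\beta}[|h(0)|]\to\infty$.

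Next, fix an arbitrary cut $L$ at $0$ and $\varepsilon>0$. Each $B_n$ is by construction a finite subgraph of $\Gamma$ containing the face $0$, so Proposition~\ref{T:BKT-transition} applies and yields
\[
\E_{B_n,\beta}[|h(0)|]\leq C(\varepsilon,\beta,\Gamma)\,\chi^{\epsilon}_{\Gamma,\beta}(L),
\]
with the right-hand side independent of $n$. Letting $n=n_k\to\infty$ then forces $\chi^{\epsilon}_{\Gamma,\beta}(L)=\infty$, which is the main assertion of the corollary.

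For the ``in particular'' statement I would argue by contradiction. Suppose there exist $c,C'>0$ such that $\langle\sigma_a\bar\sigma_b\rangle_{\Gamma,\beta}\leq C'e^{-c|a-b|}$ for all $a,b$. I would then specialise $L$ to a concrete cut lying within bounded distance of a fixed straight line through $0$ (for instance a horizontal axis through the origin face in $\mathbb Z^2$). Along such a cut, $L_+$ and $L_-$ parametrise as two disjoint rays emanating from a neighbourhood of $0$, and one has $|a-b|\geq c_0(\mathrm{dist}(a,0)+\mathrm{dist}(b,0))-O(1)$ for $a\in L_+$, $b\in L_-$, with $c_0>0$ depending only on the geometry. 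Substituting this into the definition~\eqref{E: chi} bounds $\chi^{\epsilon}_{\Gamma,\beta}(L)$ above by a convergent double geometric series, contradicting the infinitude established in the previous step.

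I do not anticipate any serious obstacle in the corollary itself: the substantive work, namely the passage from a height-function moment to a sum of spin correlations along a cut, is entirely encoded in Proposition~\ref{T:BKT-transition}. The only verification required is the admissibility of the exhausting sequence $B_n$ in that proposition, which is immediate from the three properties highlighted just before Theorem~\ref{T: dichotomy} (in particular, that each $B_n$ contains $0$ and is finite).
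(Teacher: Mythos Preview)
Your proposal is correct and follows essentially the same route as the paper: the paper's proof simply observes that delocalisation rules out alternative~(i) of Theorem~\ref{T: dichotomy}, so $\sup_n \E_{B_n,\beta}[|h(0)|]=\infty$, and then invokes Proposition~\ref{T:BKT-transition}, while the ``in particular'' is exactly the straight-line cut observation made just before the corollary. Your version is slightly more explicit (extracting a subsequence, spelling out the geometric-series bound), but the logic is identical.
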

\begin{proof}
We know that situation $(i)$ from Theorem~\ref{T: dichotomy} does not happen. This means that $\sup_n\E_{B_n,\beta}[ |h(0)|] =\infty$, 
and the claim follows directly from Proposition~\ref{T:BKT-transition}.
\end{proof}

\begin{remark}
One naturally expects that the localisation-delocalisation phase transition for the height function happens at the same temperature as the BKT transition for the XY model.
The remaining part of this prediction is therefore to show that if the spin correlations do not decay exponentially, then the height function delocalises. We do not do this in this article.
\end{remark}

Recall that $m_{a,b}$ is the number of paths (pieces of loops) in a loop configuration that go from $a$ to $b$.
We will need the following lemma.
\begin{lemma} \label{L: M_xy L2-bounded}
	For all $\beta > 0$ and $p> 1$, there exists a $C_p < \infty$ such that for all finite graphs $G=(V,E)$ and all $a,b\in V$,
	\[
		\mathbf{E}_{G,\beta}[m_{a,b}] \leq C_p \deg_{G}(a)\big(\mathbf{P}_{G,\beta}(m_{a,b} > 0) \big)^{\frac{1}{p}}.
	\]
\end{lemma}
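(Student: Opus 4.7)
The plan is to combine Hölder's inequality with a uniform bound on higher moments of $m_{a,b}$, the latter being obtained through the multigraph degree at $a$ and the standard current representation.

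Set $r := p/(p-1)$. By Hölder's inequality,
\[
\mathbf{E}_{G,\beta}[m_{a,b}] = \mathbf{E}_{G,\beta}[m_{a,b}\mathbf{1}\{m_{a,b}>0\}] \leq \mathbf{E}_{G,\beta}[m_{a,b}^r]^{1/r}\,\mathbf{P}_{G,\beta}(m_{a,b}>0)^{1/p},
\]
and Lyapunov's inequality lets us replace $r$ by any integer $r' \geq r$, so it suffices to prove $\mathbf{E}_{G,\beta}[m_{a,b}^{r'}]^{1/r'} \leq C_p \deg_G(a)$. Next, any $a$-to-$b$ piece of loop in $\omega \in \mathcal L^\emptyset_0$ starts with a distinct outgoing half-edge at $a$, and in a sourceless configuration the number of such half-edges equals $\tfrac{1}{2}\deg_{\mathcal M_\omega}(a)$; hence the pointwise bound
\[
m_{a,b}(\omega) \leq \tfrac{1}{2}\deg_{\mathcal M_\omega}(a) = \tfrac{1}{2}\sum_{v'\sim a}|\n|_{av'}
\]
holds, where $\n$ is the current underlying $\omega$. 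Combined with the power-mean inequality $(\sum_{v'\sim a} |\n|_{av'})^{r'} \leq \deg_G(a)^{r'-1}\sum_{v'\sim a}|\n|_{av'}^{r'}$, this reduces the problem to a uniform-in-$G$ bound $\mathbf{E}_{G,\beta}[|\n|_e^{r'}] \leq C'(\beta,r')$.

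The one remaining ingredient is a factorial moment identity. The marginal of $\mathbf{P}_{G,\beta}$ on currents coincides with the current measure defined by the weights $w_\beta$ via \eqref{eq:loopexp}, and the substitution $\n_{(v,v')} \mapsto \n_{(v,v')} - k$ in \eqref{eq:XYweight} maps the sourceless constraint to $\delta\n = k(\delta_{v'}-\delta_v)$, yielding
\[
\mathbf{E}_{G,\beta}\big[(\n_{(v,v')})_k\big] = \Big(\tfrac{\beta}{2}\Big)^k \frac{Z^{k(\delta_{v'}-\delta_v)}_{G,\beta}}{Z^0_{G,\beta}} = \Big(\tfrac{\beta}{2}\Big)^k \langle \sigma_{v'}^k \bar\sigma_v^k\rangle_{G,\beta} \leq \Big(\tfrac{\beta}{2}\Big)^k,
\]
where the middle equality is the power-$k$ analogue of the XY--current correspondence \eqref{eq:currentexp} and the inequality uses $|\sigma_{v'}^k \bar\sigma_v^k| = 1$. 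Expanding ordinary moments in terms of falling factorials via Stirling numbers of the second kind then bounds $\mathbf{E}_{G,\beta}[\n_{(v,v')}^{r'}]$, and hence $\mathbf{E}_{G,\beta}[|\n|_e^{r'}]$ through $(a+b)^{r'} \leq 2^{r'-1}(a^{r'}+b^{r'})$, by a constant depending only on $\beta$ and $r'$.

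No step is genuinely delicate: the entire argument reduces to standard moment inequalities together with one factorial-moment identity, which is a direct instance of the XY--current correspondence already developed in Section~\ref{sec:singleswitch}. The only mildly subtle point is tracking the correct source when shifting the current by $k$ on a single directed edge.
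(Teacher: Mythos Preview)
Your proof is correct and follows the same overall architecture as the paper's: bound $m_{a,b}$ pointwise by the multigraph degree at $a$, apply H\"older to isolate $\mathbf{P}_{G,\beta}(m_{a,b}>0)^{1/p}$, and reduce everything to a uniform bound on $\mathbf{E}_{G,\beta}[|\n|_e^{r}]$. The only cosmetic difference in this part is that the paper applies H\"older \emph{after} passing to edge-occupation variables, while you apply it directly to $m_{a,b}$ and then use the power-mean inequality; both lead to the same reduction.

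Where your argument genuinely differs is in the uniform moment bound on $|\n|_e$. The paper proves this via two auxiliary lemmas (Lemmas~\ref{L: integr of gradient} and~\ref{L: bd moments Xk}): it splits $|\n|_e = |\nabla h|_e + 2X_e$, bounds the gradient moments by an explicit computation involving $I_l(\beta)$ and the partition function on $G\setminus e$, and bounds the moments of $X_e$ via a separate falling-factorial computation for the conditional law $Y_k$. Your route is shorter: the identity
\[
\mathbf{E}_{G,\beta}\big[(\n_{(v,v')})_k\big]=\Big(\tfrac{\beta}{2}\Big)^k\langle \sigma_{v'}^k\bar\sigma_v^k\rangle_{G,\beta}\le\Big(\tfrac{\beta}{2}\Big)^k
\]
handles the directed-current moments in one stroke, and Stirling numbers plus $(a+b)^{r'}\le 2^{r'-1}(a^{r'}+b^{r'})$ finish the job. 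This bypasses the Bessel-function estimates and the $|\nabla h|$/$X$ decomposition entirely. The paper's decomposition, on the other hand, yields slightly more structural information (e.g.\ explicit tail bounds on the height-function gradient), which may be of independent interest but is not needed for the lemma as stated.
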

\begin{proof}
	Fix $\beta > 0$, $G=(V,E)$ and $a,b\in V$, and let $\omega \in \c{L}_0$ be a loop configuration on $G$. 
	Denote by $\omega_e$, the number of visits of all loops in $\omega$ to an undirected edge $e\in E$. If there are $m \geq 1$ paths going from $a$ to $b$ in $\omega$, then in particular $\sum_{c\sim a} \omega_{\{ a, c\}} \geq m$. 
	This implies that
		\begin{align*}
			\mathbf{E}_{G,\beta}[m_{a, b}] \leq \mathbf{E}_{G,\beta}\Big[\sum_{c \sim a}\omega_{\{ a, c\}} \id\{m_{a, b }> 0\} \Big] \leq \deg_G(a) \max_{c \sim a}\mathbf{E}_{G,\beta}[\omega_{\{ a, c\}} \id\{m_{a, b} > 0\}].
		\end{align*}
	Applying H\"{o}lder's inequality gives
		\begin{align*}
			\mathbf{E}_{G,\beta} [\omega_{\{a, c\}}  \id\{m_{a, b} > 0\}] \leq \big(\mathbf{E}_{G,\beta}[\omega_{\{ a, c\}}^q]\big)^{1/q}\mathbf{P}_{G,\beta}(m_{a, b} > 0)^{1/{p}},
		\end{align*}
	where $1/p+1/q=1$. We now notice that by definition, $\omega_e$ under $\mathbf{P}_{G,\beta}$ has the same distribution as the amplitude $|\n|_e$ under $\mathbb{P}_{G,\beta}$. Therefore, to finish the proof it is enough to show that for all $p>1$, there exists $C_p<\infty$ depending on $\beta$ but independent of $G$ such that
	\begin{align} \label{eq:amplitudemoment}
	\mathbb{E}_{G,\beta}[|\n|^p_e]\leq C_p.
	\end{align}
	We postpone the proof of this bound to Lemma~\ref{L: integr of gradient} and Lemma~\ref{L: bd moments Xk}.
\end{proof}
The last ingredient that we will need is the following inequality
\begin{lemma} \label{lem:squares}
For any $a,b\in V$, we have
\begin{align*} \label{eq:2to1}
\langle \sigma^2_{a}\bar{\sigma}^2_{b} \rangle_{G,\beta} \leq2 \langle \sigma_{a}\bar{\sigma}_{b} \rangle^2_{G,\beta}.
\end{align*}
\end{lemma}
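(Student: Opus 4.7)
The plan is to derive the inequality as the diagonal $c=a$, $d=b$ case of the Lieb--Rivasseau 4-point inequality for the XY model. For the XY model on a finite graph $G$, that inequality~\cite{Lieb,Riv} asserts that for any (not necessarily distinct) vertices $a,b,c,d \in V$,
\[
\langle \sigma_a \bar\sigma_b \sigma_c \bar\sigma_d \rangle_{G,\beta} \leq \langle \sigma_a \bar\sigma_b \rangle_{G,\beta}\,\langle \sigma_c \bar\sigma_d \rangle_{G,\beta} + \langle \sigma_a \bar\sigma_d \rangle_{G,\beta}\,\langle \sigma_c \bar\sigma_b \rangle_{G,\beta}.
\]
All quantities here are real: the two-point functions by the $U(1)$ symmetry $\sigma \mapsto e^{i\alpha}\sigma$, and the 4-point function by the additional symmetry $\sigma \mapsto \bar\sigma$. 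Specializing $c=a$ and $d=b$, the left-hand side becomes $\langle \sigma_a^2 \bar\sigma_b^2\rangle_{G,\beta}$, while each of the two terms on the right becomes $\langle \sigma_a \bar\sigma_b\rangle_{G,\beta}^2$, which is precisely the stated bound.

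Since the lemma is placed in Section~\ref{sec:noexpdec} but the Lieb--Rivasseau inequality is first invoked later in Section~\ref{sec:Liebsharp}, it is preferable to supply a self-contained proof that does not rely on the full four-point inequality. Such a proof can be built from the double-loop representation developed in Appendix~\ref{sec:doubleswitch}: $\langle \sigma_a\bar\sigma_b\rangle^2$ admits a representation as a sum over pairs of currents $(\mathbf{n}_1,\mathbf{n}_2)$, each with sources $\delta_a - \delta_b$, while $\langle \sigma_a^2\bar\sigma_b^2\rangle$ has the single-current representation with sources $2(\delta_a - \delta_b)$. Superimposing the pair into a combined current $\mathbf{n}=\mathbf{n}_1+\mathbf{n}_2$ (which also carries sources $2(\delta_a-\delta_b)$) and matching to single currents via a path-reversal/source-switching scheme adapted from Section~\ref{sec:singleswitch} then allows a direct comparison of the two sides. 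The factor of $2$ arises naturally from the two essentially distinct ways of pairing the four source-endpoints $\{a,a,b,b\}$ of the combined current into pairs $\{a,b\},\{a,b\}$ along the resulting loop configuration.

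The main obstacle is executing this combinatorial matching cleanly enough to extract exactly the constant $2$. The path-reversal argument of Section~\ref{sec:singleswitch} is designed for a single current; to handle two superimposed currents one must, at each shared vertex, additionally track how the incoming and outgoing strands of the two currents are reconnected when splitting or merging the sources. This bookkeeping is precisely what the double-loop framework of Appendix~\ref{sec:doubleswitch} is set up to perform, so once that framework is in place the proof of Lemma~\ref{lem:squares} reduces to a loop-counting argument of complexity comparable to that of Lemma~\ref{L:SingleSwitching}, yielding the inequality independently of the full Lieb--Rivasseau bound.
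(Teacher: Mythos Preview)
Your first paragraph already constitutes a complete and correct proof: specialising the four-point (Lebowitz-type) inequality for plane rotators to $c=a$, $d=b$ gives exactly $\langle \sigma_a^2\bar\sigma_b^2\rangle \leq 2\langle\sigma_a\bar\sigma_b\rangle^2$. This is a different route from the paper's. The paper instead invokes a Ginibre inequality in real coordinates,
\[
\big\langle \Im(\sigma_a)\Im(\sigma_b)\Re(\sigma_a)\Re(\sigma_b)\big\rangle_{G,\beta}\le \big\langle \Im(\sigma_a)\Im(\sigma_b)\big\rangle_{G,\beta}\,\big\langle \Re(\sigma_a)\Re(\sigma_b)\big\rangle_{G,\beta},
\]
and then rearranges using the $\pi/2$-rotation symmetry $\langle\Re\sigma_a\Re\sigma_b\rangle=\langle\Im\sigma_a\Im\sigma_b\rangle=\tfrac12\langle\sigma_a\bar\sigma_b\rangle$ together with $\sin 2\theta=2\sin\theta\cos\theta$. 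Your argument is more direct (no trigonometric rearrangement), at the cost of importing a slightly stronger inequality than the paper needs; the paper's argument stays closer to the most elementary Ginibre bound. One caution on attribution: what the paper calls the ``Lieb--Rivasseau inequality'' (Lemma~\ref{L: LR-ineq}) is the boundary-sum bound, not the four-point bound you quote, so if you keep this proof you should cite the four-point inequality separately.

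Your second and third paragraphs are not a proof but a programme. You correctly identify that the double-current framework of Appendix~\ref{sec:doubleswitch} is the right setting, and indeed the paper confirms in Remark~\ref{rem:squares} that this route works and in fact yields the sharper constant~$1$: one starts from coloured loop configurations in $\tilde{\mathcal L}^{a,b}_{0,2\varphi}$ rather than $\tilde{\mathcal L}^{a,b}_{\varphi,\varphi}$ and compares via path switching. But as written you have not carried out the matching, only described its shape, so this portion should be regarded as motivation rather than argument.
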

\begin{proof}
A version of the Ginibre inequality (see e.g.~\cite{BLU}) says that
\begin{align*}
\big\langle \Im(\sigma_a) \Im(\sigma_b)\Re(\sigma_a)\Re(\sigma_b)\big\rangle_{G,\beta} &\leq \big\langle  \Im(\sigma_a) \Im(\sigma_b)\big\rangle_{G,\beta}  \big\langle  \Re(\sigma_a)\Re(\sigma_b)\big\rangle_{G,\beta}.
\end{align*}
which after rearrangement gives the desired inequality.
\end{proof}
We note that the constant in the inequality above can be improved to $1$ using our switching techniques from Appendix~\ref{sec:doubleswitch} (see Remark~\ref{rem:squares}).

We are now ready to prove the main theorem.

\begin{proof}[Proof of Proposition~\ref{T:BKT-transition}]
Fix a finite subgraph $G$ and a cut $L$. By Proposition~\ref{prop:netwinding} the height function $h(0)$ under $\mathbb{P}_{G,\beta}$ has the sam law as $W(0)$ -- the total net winding {around $0$} of all loops in a loop configuration -- drawn according to $\mathbf{P}_{G,\beta}$.
	Moreover, any piece of a loop that adds to the winding (in any orientation) must intersect both $ L_{+}$ and $ L_{ -}$ by definition of a cut. Therefore, taking $p=2/(2-\varepsilon)$, we have
	\begin{align*}
			\E_{G,\beta}[|h(0)|] &= \mathbf E_{G,\beta}[|W(0)|]  \\
		&\leq \sum_{a\in L_{+}, b \in L_{ -}} \mathbf E_{G,\beta}[m_{a,b}] \\ 
		&\leq \tilde C\sum_{a \in L_{+}, b \in L_{ -}} ( \mathbf P_{G,\beta}(m_{a,b} > 0))^{1/p} \\
		& \leq 2\tilde C\sum_{a \in L_{+}, b \in L_{ -}} (\langle \sigma^2_{a}\bar{\sigma}^2_{b} \rangle_{G,\beta})^{1-\varepsilon/2} \\
		& \leq 4\tilde C\sum_{a \in L_{+}, b \in L_{ -}} (\langle \sigma_{a}\bar{\sigma}_{b} \rangle_{G,\beta})^{2-\varepsilon} \\
		& \leq C\chi^{\epsilon}_{\Gamma,\beta}(L).
	\end{align*}
	where the third line follows from Lemma \ref{L: M_xy L2-bounded}, the forth one from Lemma~\ref{L:SingleSwitching}, the fifth one from Lemma~\ref{lem:squares}, and the last one from~\eqref{eq:definfvol}.
	This completes the proof. 
\end{proof}

It therefore remains to show \eqref{eq:amplitudemoment}, which will directly follow from Lemma~\ref{L: integr of gradient} and Lemma~\ref{L: bd moments Xk} below. To that end, define for $k \in \NN$ and $\beta > 0$, a random variable $Y_{k}$ by 
\[
	\PP_{\beta}(Y_{k} = i) \propto\frac{1}{i!(i + k)!} \big(\tfrac{\beta}{2}\big)^{2i + k},
\]
so that the normalizing constant is $I_k(\beta)$. For $e=vv'$, let 
\[
|\nabla h|_e= |\n_{(v, v')} - \n_{(v', v)}|
\]
be the absolute value of the gradient of the height function across the dual edge $e^\dagger$. Note that the random variables $(X_{e} = X_e(\n))_{e \in E}$ defined through
\[
	X_{e} = \frac{|\n|_e - |\nabla h|_e}{2}
\]
have the same distribution as $Y_{|\nabla h|_e}$. Moreover, conditionally on $|\nabla h|$, they are an independent family. To show \eqref{eq:amplitudemoment} it is enough to bound the moments of $|\nabla h|_e$ and $X_e$ separately, which we will now do.

\begin{lemma} \label{L: integr of gradient}
	For all $\beta > 0$ and all $r \in \mathbb N$, there exists a $C_r < \infty$ such that for all finite planar graphs $G = (V, E)$ and all $e \in E$, 
	\[
		\E_{G,\beta}[|\nabla h|_e^r] \leq C_r.
	\]
\end{lemma}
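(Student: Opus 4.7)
The strategy will be to compute the marginal law of the pair $(\n_{(v,v')},\n_{(v',v)})$ under $\mathbb P_{G,\beta}$ explicitly, and then to chain three elementary comparisons. Set $e=vv'$, $a=\n_{(v,v')}$, $b=\n_{(v',v)}$, and let $\tilde\n$ denote the restriction of $\n$ to $E\setminus\{e\}$. Sourcelessness of $\n$ forces $\delta\tilde\n(v)=b-a$, $\delta\tilde\n(v')=a-b$, and $\delta\tilde\n(u)=0$ for $u\notin\{v,v'\}$. Summing $w_\beta(\tilde\n)$ over all $\tilde\n$ with this prescribed divergence yields the source-partition function $Z^{(b-a)(\delta_v-\delta_{v'})}_{G\setminus e,\beta}$, so that
\[
\mathbb P_{G,\beta}\bigl(\n_{(v,v')}=a,\,\n_{(v',v)}=b\bigr)
=\frac{(\beta/2)^{a+b}}{a!\,b!}\cdot\frac{Z^{(b-a)(\delta_v-\delta_{v'})}_{G\setminus e,\beta}}{Z^0_{G,\beta}}.
\]

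I would then sum over $(a,b)$ with $|a-b|=k$. Using the defining series identity $\sum_{a,b\ge 0,\,a-b=k}\frac{(\beta/2)^{a+b}}{a!\,b!}=I_k(\beta)$ and the current-reversal symmetry $Z^\varphi=Z^{-\varphi}$ (obtained by swapping $\n_{(v,v')}\leftrightarrow\n_{(v',v)}$ on every edge), this produces for every $k\ge 1$
\[
\mathbb P_{G,\beta}\bigl(|\nabla h|_e=k\bigr)
=\frac{2\,I_k(\beta)\,Z^{k(\delta_v-\delta_{v'})}_{G\setminus e,\beta}}{Z^0_{G,\beta}},
\]
and the analogous expression (without the factor of $2$) for $k=0$.

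It will then remain to combine three elementary bounds. First, a verbatim repetition of the derivation of \eqref{eq:currentexp} with exponent $k$ in place of $2$ gives $Z^{k(\delta_v-\delta_{v'})}_{G\setminus e,\beta}=\langle\sigma_v^k\bar\sigma_{v'}^k\rangle_{G\setminus e,\beta}\cdot Z^0_{G\setminus e,\beta}$, and together with the trivial bound $|\langle\sigma_v^k\bar\sigma_{v'}^k\rangle|\le 1$ (from $|\sigma_v^k\bar\sigma_{v'}^k|=1$) this yields $Z^{k(\delta_v-\delta_{v'})}_{G\setminus e,\beta}\le Z^0_{G\setminus e,\beta}$. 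Second, isolating the diagonal $a=b$ in the analogous decomposition of $Z^0_{G,\beta}$ gives $Z^0_{G,\beta}\ge I_0(\beta)\,Z^0_{G\setminus e,\beta}$. Third, the term-by-term inequality $(i+k)!\ge i!\,k!$ in the defining series for $I_k(\beta)$ yields $I_k(\beta)\le \tfrac{(\beta/2)^k}{k!}\,I_0(\beta)$. Multiplying these three estimates,
\[
\mathbb P_{G,\beta}\bigl(|\nabla h|_e=k\bigr)\le 2\cdot\frac{(\beta/2)^k}{k!}
\qquad\text{for all }k\ge 1,
\]
uniformly in the finite planar graph $G$ and in $e$. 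The right hand side has finite $r$-th moments, so $\mathbb E_{G,\beta}[|\nabla h|_e^r]\le 2\sum_{k\ge 1}\tfrac{k^r(\beta/2)^k}{k!}=:C_r(\beta)<\infty$.

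I do not anticipate a real obstacle in this plan: it is a single marginalisation identity followed by three trivial comparisons. Notably it uses neither absolute-value-FKG, nor planarity of $G$, nor any Ginibre-type inequality beyond the trivial bound $|\langle\sigma^k\bar\sigma^k\rangle|\le 1$. The only care needed is in aligning the signs that relate edge orientations to the induced source vectors $\varphi$; this bookkeeping is absorbed by the symmetry $Z^\varphi=Z^{-\varphi}$.
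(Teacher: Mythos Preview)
Your proposal is correct and follows essentially the same route as the paper: decompose currents across the edge $e$ to express $\mathbb P_{G,\beta}(|\nabla h|_e=k)$ as $2I_k(\beta)\,Z^{k(\delta_v-\delta_{v'})}_{G\setminus e,\beta}/Z^0_{G,\beta}$, then bound the ratio of partition functions by the trivial estimate $\langle\sigma_v^k\bar\sigma_{v'}^k\rangle_{G\setminus e,\beta}\le 1$ together with $Z^0_{G\setminus e,\beta}\le Z^0_{G,\beta}$, and finish with the termwise bound $I_k(\beta)\le (\beta/2)^k I_0(\beta)/k!$. The only cosmetic difference is that you isolate the diagonal $a=b$ to obtain the slightly sharper $Z^0_{G,\beta}\ge I_0(\beta)\,Z^0_{G\setminus e,\beta}$, which cancels the stray $I_0(\beta)$ that appears in the paper's final constant; otherwise the arguments coincide step for step.
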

\begin{proof}
	Fix a finite planar graph $G$, and let $ e=vv' \in E$. 
	We write $G \setminus e$ for the graph without the edge $e$. For $l\in \mathbb Z$, we define $\Omega_{ l}(G)  =\{ \n \text{ on } G: \delta \n =l(\delta_v-\delta_{v'})\}$, and
	\[
		Z^l_G = \sum_{\n \in \Omega_{r} (G)} w_{\beta}(\n), 
	\]
	 and analogously $Z^l_{G \setminus e}$. Similarly to \eqref{eq:currentexp}, we get from the current expansion of correlation functions that
	\[
		\langle \sigma_v^l \bar\sigma_{v'}^{l} \rangle_{G\setminus e, \beta} = \frac{Z_{G \setminus e}^{l}}{Z^0_{G \setminus e}}.  
	\]
By the definition of the height function and currents, we therefore have 
\begin{align*}
 \PP_{G,\beta}(|\nabla h|_e= l) = I_l(\beta)\frac{(Z^l_{G\setminus e} +Z^{-l}_{G\setminus e}) }{Z^0_G}=2I_l(\beta)\frac{Z^l_{G\setminus e} }{Z^0_{G\setminus e} } \frac{Z^0_{G\setminus e} }{Z^0_{G} } \leq 2 I_l(\beta) \leq 2I_0(\beta) \frac{ \beta^l}{2^ll!},
\end{align*}
where we used the obvious bounds $\langle \sigma_v^l \bar\sigma_{v'}^{l} \rangle_{G\setminus e, \beta} \leq 1$, and ${Z^0_{G\setminus e} }/{Z^0_{G}} \leq 1$, and the last inequality follows easily from the definition of $I_r(\beta)$.
Finally, 
	\begin{align*}
		\E_{G,\beta}[|\nabla h|_e^r] &= \sum_{l \geq 1} l^r\PP_{G,\beta}(|\nabla h|_e = l) \leq 2 I_0(\beta)\sum_{l \geq 1}l^r\frac{\beta^l}{2^ll!}=: C_r < \infty.
	\end{align*}
	The last bound is independent of $G$ and $e$ which completes the proof.
\end{proof}

\begin{lemma} \label{L: bd moments Xk}
	For all $\beta > 0$ and all $r \in \mathbb N$, there exists a $\tilde C_r < \infty$ such that for all finite planar graphs $G = (V, E)$ and $e \in E$, 
	\[
		\E_{G,\beta}[|X_e|^r] \leq \tilde C_r. 
	\]
\end{lemma}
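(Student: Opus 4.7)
The plan is to show that the random variable $Y_k$ itself has moments bounded uniformly in $k$, which then yields the lemma via the tower property: since $X_e$ conditional on $|\nabla h|_e = k$ is distributed as $Y_k$, one has
\[
\E_{G,\beta}[|X_e|^r] = \E_{G,\beta}\bigl[\E[Y_{|\nabla h|_e}^r \mid |\nabla h|_e]\bigr] \leq \sup_{k\geq 0}\E[Y_k^r],
\]
which does not depend on $G$ or $e$. In particular, unlike for Lemma \ref{L: integr of gradient}, the distribution of $|\nabla h|_e$ will play no role.

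First I would compute the falling-factorial moments of $Y_k$ explicitly. Writing $(Y_k)_r = Y_k(Y_k-1)\cdots(Y_k-r+1)$, a direct manipulation of the defining mass function (substitute $j=i-r$ in the series, pull out the factor $(\beta/2)^{2r}$, and recognize the Bessel series of order $k+r$) gives
\[
\E[(Y_k)_r] = \Bigl(\tfrac{\beta}{2}\Bigr)^{\!r}\, \frac{I_{k+r}(\beta)}{I_k(\beta)}.
\]

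Next I would control the Bessel ratio. A termwise comparison of the power series \eqref{eq:potentialBessel} for $I_{j+1}(\beta)$ and $I_j(\beta)$ immediately yields
\[
\frac{I_{j+1}(\beta)}{I_j(\beta)} \leq \frac{\beta}{2(j+1)},
\]
an estimate which is essentially already used in the proof of Lemma \ref{L: absFKG beta small}. Iterating this over $r$ consecutive indices gives
\[
\frac{I_{k+r}(\beta)}{I_k(\beta)} = \prod_{j=0}^{r-1}\frac{I_{k+j+1}(\beta)}{I_{k+j}(\beta)} \leq \Bigl(\tfrac{\beta}{2}\Bigr)^{\!r} \prod_{j=1}^{r}\frac{1}{k+j} \leq \frac{(\beta/2)^r}{r!},
\]
so that $\E[(Y_k)_r] \leq (\beta/2)^{2r}/r!$ uniformly in $k\geq 0$.

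To close, I would invoke the standard expansion of the monomial $Y_k^r$ as a nonnegative integer combination of the falling factorials $(Y_k)_j$ for $0\leq j\leq r$ (with Stirling numbers of the second kind as coefficients); taking expectations gives a bound $\E[Y_k^r]\leq \tilde C_r$ depending only on $r$ and $\beta$ and uniform in $k$, which combined with the tower property above finishes the proof. I do not expect a genuine obstacle: the whole argument is a brief series manipulation, and the only nontrivial input (the Bessel-ratio bound) has already been put in place earlier in the paper.
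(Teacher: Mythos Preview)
Your proof is correct and follows essentially the same approach as the paper: compute the falling-factorial moments of $Y_k$ via the Bessel series, bound the resulting ratio $I_{k+r}(\beta)/I_k(\beta)$ uniformly in $k$, convert to ordinary moments, and conclude by conditioning on $|\nabla h|_e$. The only cosmetic differences are that the paper bounds the Bessel ratio using the Tur\'an inequality (giving $I_{k+r}/I_k \le I_r/I_0$) rather than your iterated termwise estimate, and it passes from falling factorials to monomials via the elementary bound $i^r \le 2^r((i)_r + r^r)$ rather than Stirling numbers; neither difference is material. (One tiny slip: in your series manipulation you pull out $(\beta/2)^r$, not $(\beta/2)^{2r}$, but your stated formula for $\E[(Y_k)_r]$ is the right one.)
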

\begin{proof}
For two nonnegative integers $i, r$, let $(i)_r = i(i - 1)\cdots (i - r + 1)$ be the falling factorial with the convention that $(i)_0=1$. Note that $(i)_r=0$ whenever $ i < r$. It will be convenient to look at the falling factorial moments.
First note that  by definition of $Y_k$,  
	\begin{align*}
		\E_\beta[(Y_k)_r] &= \frac1{I_k(\beta)}\sum_{i \geq 0} \frac{(i)_r}{i!(i + k)!} \big( \tfrac{\beta}{2}\big)^{2i + k} =   \frac{\big(\tfrac{\beta}{2}\big)^{r}}{I_k(\beta)}\sum_{i \geq 0} \frac1{i!(i + k + r)!}\big( \tfrac{\beta}{2}\big)^{2i + k + r} = (\tfrac{\beta}{2}\big)^{r}  \frac{I_{k+r}(\beta)}{I_k(\beta)}.
	\end{align*}
	By the Tur\'{a}n inequality \eqref{E: Turan}, the map $k \mapsto I_{k+1}(\beta) / I_k(\beta)$ is decreasing and hence
	\[
		\E_\beta[(Y_k)_{r}] = \big(\tfrac{\beta}{2}\big)^{r } \frac{I_{k + r }(\beta)}{I_k(\beta)} \leq \big(\tfrac{\beta}{2}\big)^{r }\frac{I_{r}(\beta)}{I_0(\beta)} =: C.
	\]
	Now note that $(i)_r \geq |i-r|^r$ when $i\geq r$, and hence $i^r\leq2^{r-1}( |i-r|^r +r^r)\leq 2^{r}( (i)_r +r^r)$. Finally
\[
		\E_{\beta}[|X_e|^r \mid |\nabla h|_e = k] = \E_{ \beta}[|Y_k|^r] \leq 2^{r}( C +r^r) := \tilde C_r,  
	\]
	where the last bound does not depend on $k$. Integrating over the possible values of $|\nabla h|_e$ concludes the proof. 
\end{proof}

\section{Existence of phase transition in the XY model} \label{sec:existence}
In this section, we prove that for all translation invariant planar graphs $\Gamma = (V, E)$, the XY model undergoes a non-trivial phase transition in terms of the quantity $\chi^{\varepsilon}_{\beta}(L)$.  
As before, let $0$ denote an arbitrary distinguished face of $\Gamma$. We define
\begin{align*}
\beta_0=\inf\{\beta>0: \text{for all ${\varepsilon>0}$ and all cuts $L$ at $0$, $\chi^{\varepsilon}_{\beta}(L) =\infty$}\}.
\end{align*}
\begin{theorem} \label{T:BKT planar lattice}
	Let $\Gamma$ be as above. Then $\beta_0<\infty$.
\end{theorem}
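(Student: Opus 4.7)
The plan is to reduce $\Gamma$ to a triangulation, where Lammers' delocalisation theorem applies directly, and then transfer back to $\Gamma$ via Ginibre's inequality.

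First, I construct a translation-invariant triangulation $\tilde\Gamma$ of $\Gamma$ by inserting chord edges inside each non-triangular face (iteratively if necessary), so that every face of $\tilde\Gamma$ is a triangle and hence $\tilde\Gamma^\dagger$ is trivalent. The couplings on $\tilde\Gamma$ are set to $\tilde J_e = 1$ uniformly. Since $I_k(\beta)/I_0(\beta) \to 1$ as $\beta\to\infty$, there exists $\beta_1 < \infty$ such that for all $\beta \geq \beta_1$ the Bessel potential satisfies Lammers' condition $\mathcal V^\beta_e(\pm 1) \leq \mathcal V^\beta_e(0) + \log 2$, equivalently $I_1(\beta) \geq I_0(\beta)/2$, on every edge. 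By Theorem~\ref{thm:lammers}, there is then no translation-invariant Gibbs measure on $\tilde\Gamma^\dagger$ at $\beta \geq \beta_1$, and the dichotomy (Theorem~\ref{T: dichotomy}) applied to $\tilde\Gamma$ forces delocalisation of the height function on $\tilde\Gamma^\dagger$. Corollary~\ref{cor:implication} applied to $\tilde\Gamma$ then gives $\chi^\varepsilon_{\tilde\Gamma,\beta}(\tilde L) = \infty$ for all $\varepsilon > 0$ and all cuts $\tilde L$ at $0$ in $\tilde\Gamma$.

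The final step is to transfer the non-summability from $\tilde\Gamma$ back to $\Gamma$. Here the Ginibre inequality enters, providing monotonicity of spin correlations in the coupling constants; in particular $\Gamma$ can be realised as $\tilde\Gamma$ with all chord couplings set to zero. Combined with the structural relation between $\Gamma^\dagger$ and $\tilde\Gamma^\dagger$ (the latter obtained by splitting each dual vertex corresponding to a non-triangular face into a tree of trivalent ones joined by Bessel-potential edges), this comparison should give $\chi^\varepsilon_{\Gamma,\beta}(L) = \infty$ for every cut $L$ at $0$ in $\Gamma$, whence $\beta_0 \leq \beta_1 < \infty$.

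The hard part is this transfer: the naive Ginibre inequality reads $\langle \sigma_a \bar\sigma_b \rangle_\Gamma \leq \langle \sigma_a \bar\sigma_b \rangle_{\tilde\Gamma}$, which is the wrong direction for deducing $\chi^\varepsilon_\Gamma = \infty$ directly from $\chi^\varepsilon_{\tilde\Gamma} = \infty$. The correct framing is probably to avoid comparing $\chi^\varepsilon$ on the two graphs and instead compare the height-function expectations $\mathbb{E}_{B_n,\beta}^{\Gamma}[|h(0)|]$ and $\mathbb{E}_{\tilde B_n,\beta}^{\tilde\Gamma}[|h(0)|]$ directly, using Ginibre in combination with the absolute-value-FKG property (Proposition~\ref{P: abs-FKG}) and the loop/current coupling (Proposition~\ref{prop:netwinding}), to show that delocalisation on $\tilde\Gamma^\dagger$ forces delocalisation on $\Gamma^\dagger$; one can then apply Corollary~\ref{cor:implication} to $\Gamma$ itself (rather than $\tilde\Gamma$) to conclude.
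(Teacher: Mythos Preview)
Your proposal has the right ingredients but the triangulation runs in the wrong direction, and you already see this: adding chord edges with positive coupling makes $\langle\sigma_a\bar\sigma_b\rangle_{\tilde\Gamma}\geq\langle\sigma_a\bar\sigma_b\rangle_{\Gamma}$, so $\chi^\varepsilon_{\tilde\Gamma}=\infty$ tells you nothing about $\chi^\varepsilon_\Gamma$. Your suggested fallback---transferring delocalisation of the height function from $\tilde\Gamma^\dagger$ to $\Gamma^\dagger$ via absolute-value-FKG and the loop coupling---is not an argument but a hope; there is no obvious monotonicity of $\mathbb E[|h(0)|]$ under edge insertion in the primal graph, and the paper explicitly states that it does \emph{not} establish delocalisation on $\Gamma^\dagger$ for non-triangulations.

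The paper's actual trick is to build a triangulation $\Gamma'$ on which correlations are \emph{smaller} than on $\Gamma$, so that Ginibre points the right way. Concretely (for the square lattice, with a zig-zag variant for general faces): double certain edges and halve their couplings; split a vertex into two copies joined by a new edge $e$ with coupling $+\infty$ (this leaves the spin law unchanged); then lower the coupling on $e$ to $0$, i.e.\ delete it. By Lemma~\ref{L: xy increasing cor} this last step can only decrease correlations, yielding $\langle\sigma_a\bar\sigma_b\rangle_{\Gamma,\beta}\geq\langle\sigma_a\bar\sigma_b\rangle_{\Gamma',\beta/2}$. The resulting $\Gamma'$ is a translation-invariant triangulation, so Lammers' condition is met for large $\beta$, delocalisation holds on $(\Gamma')^\dagger$, Corollary~\ref{cor:implication} gives $\chi^\varepsilon_{\Gamma',\beta/2}(L)=\infty$, and hence $\chi^\varepsilon_{\Gamma,\beta}(L)=\infty$ as well. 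The missing idea in your attempt is precisely this ``add an infinite-coupling edge, then remove it'' manoeuvre, which converts a face subdivision into an inequality with the correct sign.
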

By Corollary~\ref{cor:implication} it is enough to show that for any such $\Gamma$, there exists a finite $\beta_0 > 0$ such that the associated height function delocalises in the sense that there are no translation invariant Gibbs measures on the dual $\Gamma^\dagger$. 
We first implement this strategy for triangulations, where delocalisation can be shown directly using the general result of Lammers~\cite{Lammers} (Theorem~\ref{thm:lammers}).
\begin{proof}[Proof of Theorem \ref{T:BKT planar lattice} for triangulations]
	Let $\Gamma$ be a translation invariant triangulation. Note that condition \eqref{E: exited potential} in our case is equivalent to ${I_{1}(\beta)}/{I_0(\beta)} \geq \frac{1}{2}$. 
	It is known that this fraction converges to $1$ as $\beta \to \infty$ (see for example \cite{Segura}), and therefore in light of Theorem~\ref{thm:lammers}, there are no translation invariant Gibbs measures for $\beta$ large enough. 
\end{proof}

To extend beyond triangulations, we will use a different approach. We stress that in particular, we will not show delocalisation of the height function on graphs that are not triangulations.
Instead, we exploit monotonicity in coupling constants to bound from below the spin correlations on an arbitrary translation invariant graph by correlations on a modified graph that is a triangulation. We explain this procedure in detail for the square lattice, and briefly mention the extension to other lattices at the end. 

In what follows, we will need the following well known monotonicity of spin correlations that is a classical consequence of the Ginibre inequality~\cite{Gin}.
\begin{lemma} \label{L: xy increasing cor}
	For each (infinite or finite) graph $G = (V, E)$, $\beta > 0$, and $e\in E$, the function
	\[
		J_e \mapsto \langle \sigma_v\bar \sigma_{v'} \rangle_{G,\beta} 
	\]
	is nondecreasing. 
\end{lemma}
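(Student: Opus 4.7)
The plan is a classical application of the Ginibre correlation inequality, and the argument splits into two stages. First I would reduce to the finite-volume setting: for an infinite graph $G$, the two-point function is defined (as in~\eqref{eq:definfvol}) as the monotone limit $\langle\sigma_v\bar\sigma_{v'}\rangle_{G,\beta}=\lim_{G'\nearrow G}\langle\sigma_v\bar\sigma_{v'}\rangle_{G',\beta}$ along an exhaustion by finite subgraphs, and a nondecreasing limit of nondecreasing functions of $J_e$ is nondecreasing. It therefore suffices to prove the lemma on finite $G$.

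On a finite $G$, I would differentiate the Gibbs--Boltzmann weight~\eqref{def:xy} under the integral with respect to $J_e$, where $e=\{v_1,v_2\}$. This yields
\[
\tfrac{\partial}{\partial J_e}\langle\sigma_v\bar\sigma_{v'}\rangle_{G,\beta}=\tfrac{\beta}{2}\Big(\big\langle\sigma_v\bar\sigma_{v'}(\sigma_{v_1}\bar\sigma_{v_2}+\bar\sigma_{v_1}\sigma_{v_2})\big\rangle_{G,\beta}-\big\langle\sigma_v\bar\sigma_{v'}\big\rangle_{G,\beta}\big\langle\sigma_{v_1}\bar\sigma_{v_2}+\bar\sigma_{v_1}\sigma_{v_2}\big\rangle_{G,\beta}\Big).
\]
Writing $\sigma_w=e^{i\theta_w}$ and using the invariance of $\mu_{G,\beta}$ under the global reflection $\theta_w\mapsto -\theta_w$, the imaginary parts of every correlator appearing above vanish, so each factor $\sigma_w\bar\sigma_{w'}+\bar\sigma_w\sigma_{w'}$ may be replaced by $2\cos(\theta_w-\theta_{w'})$. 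The right-hand side thus becomes $\beta$ times the truncated correlation of $\cos(\theta_v-\theta_{v'})$ and $\cos(\theta_{v_1}-\theta_{v_2})$.

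The last step is to invoke Ginibre's second inequality~\cite{Gin} in its standard form for the XY model: since the Hamiltonian in~\eqref{def:xy} lies in the convex cone generated (with nonnegative coefficients) by cosines of integer linear combinations of the $\theta_w$'s, and both observables $\cos(\theta_v-\theta_{v'})$ and $\cos(\theta_{v_1}-\theta_{v_2})$ belong to this same cone, their truncated correlation under $\mu_{G,\beta}$ is nonnegative. This delivers $\tfrac{\partial}{\partial J_e}\langle\sigma_v\bar\sigma_{v'}\rangle_{G,\beta}\geq 0$ and completes the proof. There is no real obstacle beyond checking admissibility: the Hamiltonian and both observables are manifestly in the Ginibre cone, so the inequality applies verbatim.
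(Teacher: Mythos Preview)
Your argument is correct and is precisely the classical derivation the paper has in mind: the paper does not spell out a proof but simply states the lemma as ``a classical consequence of the Ginibre inequality~\cite{Gin}'', and your differentiation-plus-Ginibre computation is exactly that classical consequence made explicit. There is nothing to compare; you have faithfully expanded the one-line citation.
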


\begin{figure}
	\begin{subfigure}{.2\textwidth}
		\centering
		\includegraphics[scale =0.11]{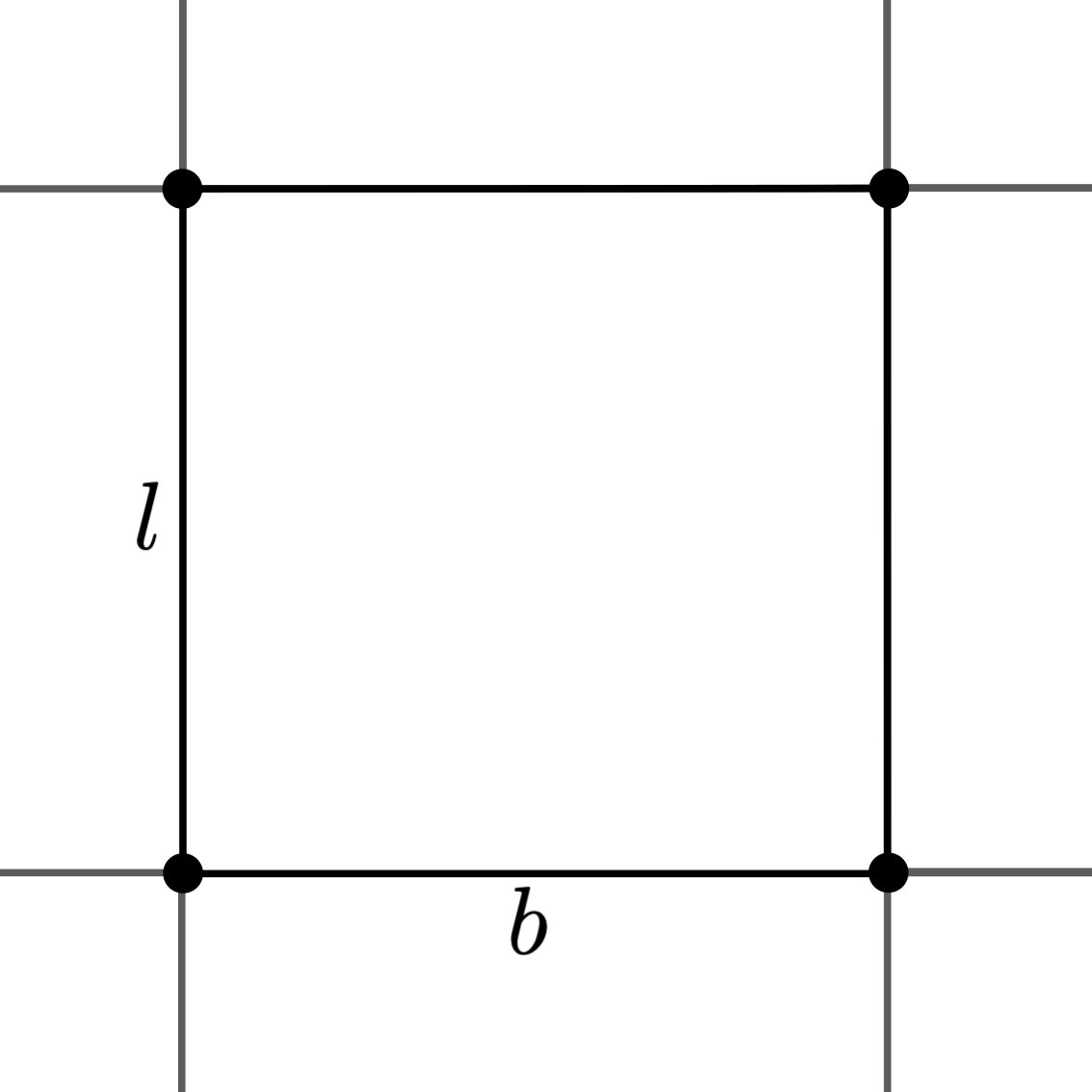}
	\end{subfigure} \hspace{0.5cm}
	\begin{subfigure}{.2\textwidth}
		\centering
		\includegraphics[scale =0.11]{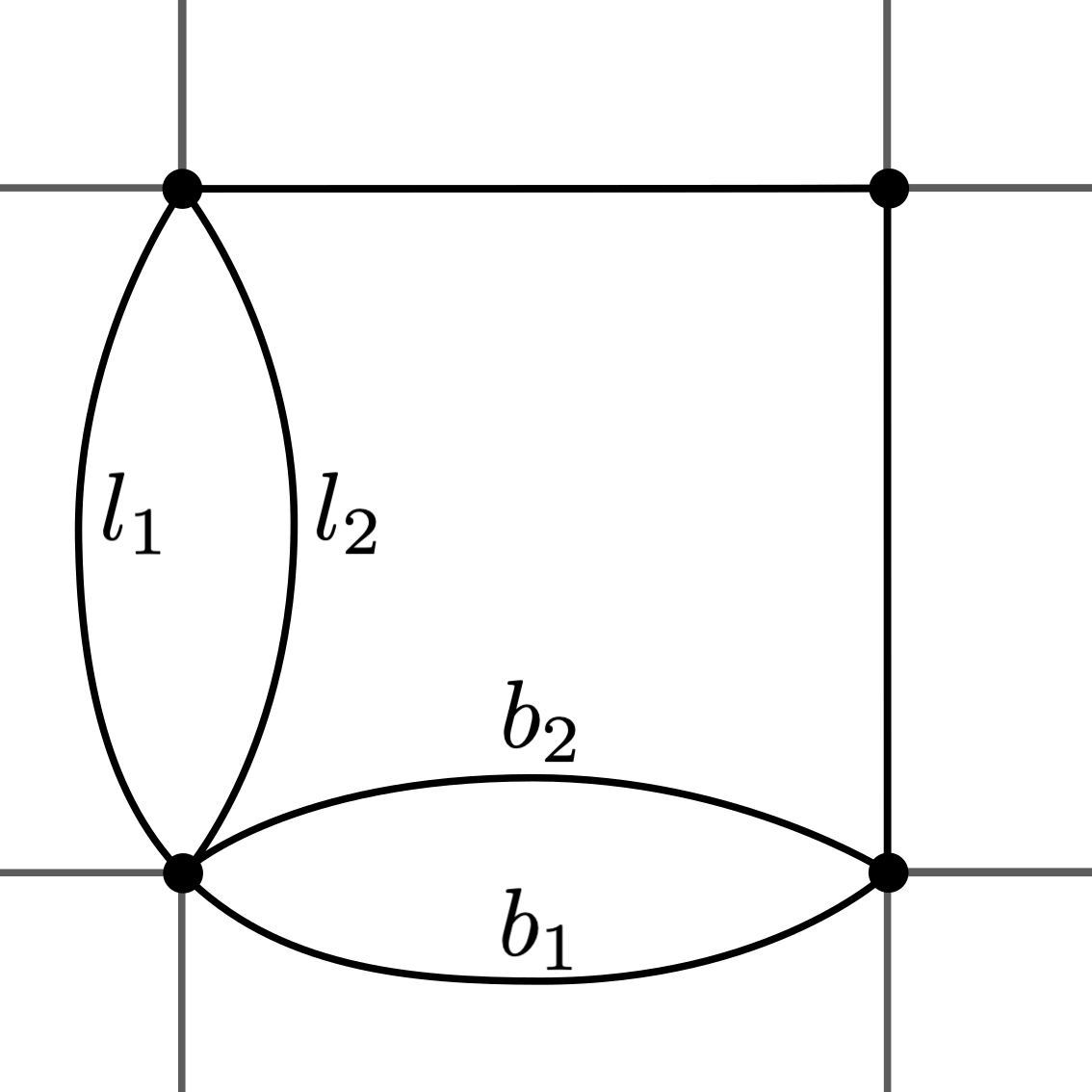} 
	\end{subfigure}  \hspace{0.5cm}
	\begin{subfigure}{.2\textwidth}
		\centering
		\includegraphics[scale =0.11]{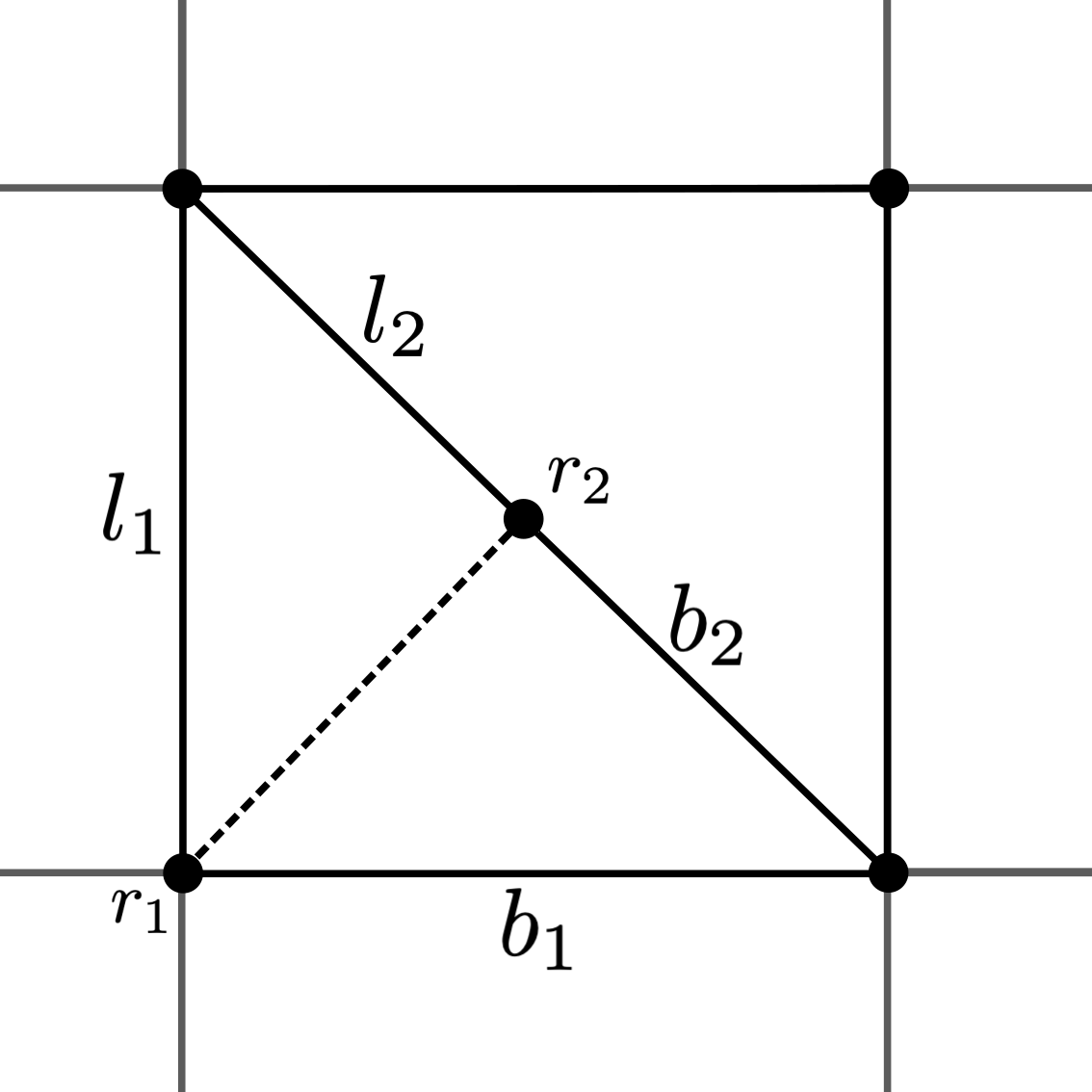} 
	\end{subfigure}  \hspace{0.5cm}
	\begin{subfigure}{.2\textwidth}
		\centering
		\includegraphics[scale =0.11]{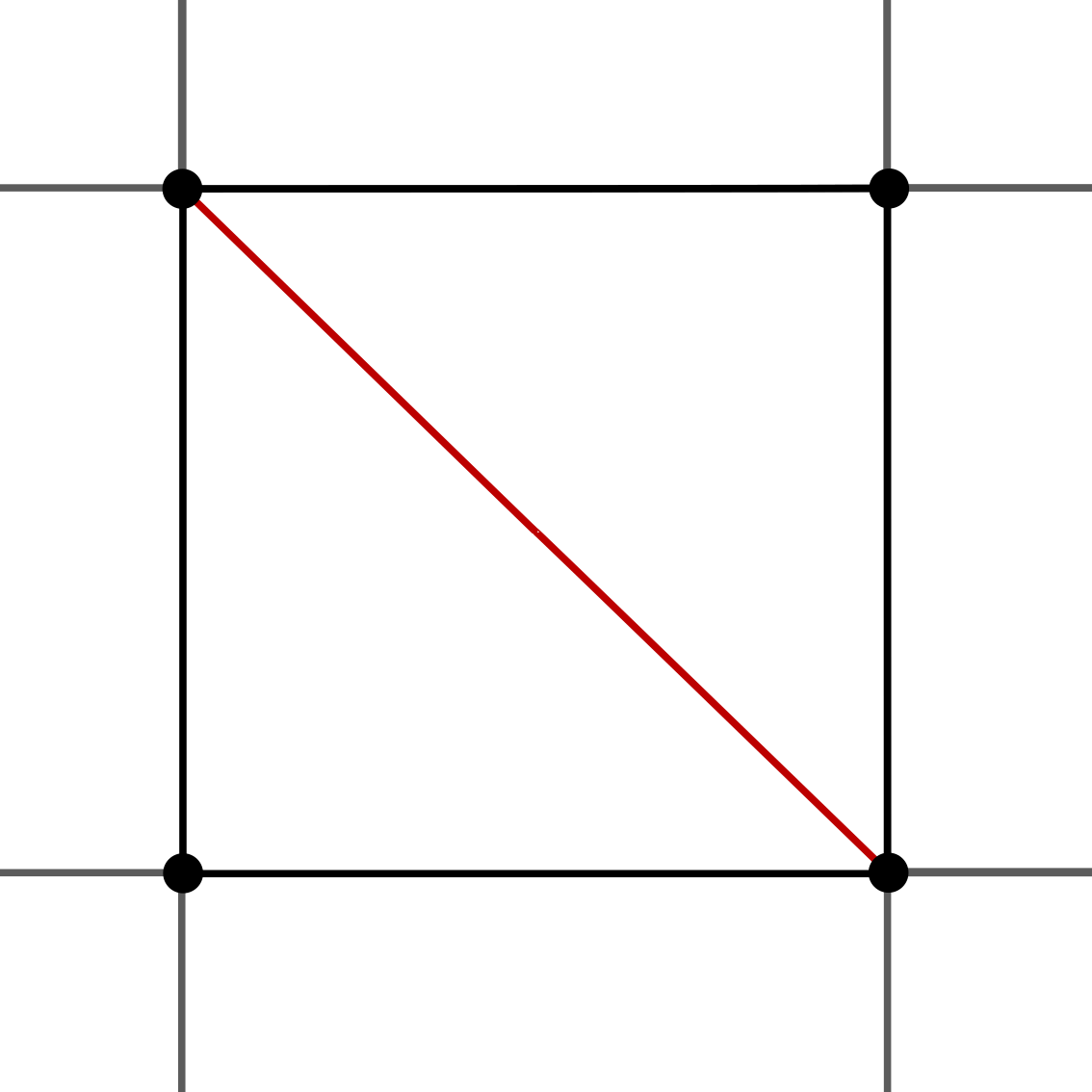}
	\end{subfigure}
	\caption{The transformation to a triangulation. The red edge on the right is the edge with different potential.}
	\label{f: square lattice d3}
\end{figure}

\begin{proof}[Proof of Theorem \ref{T:BKT planar lattice} for the square lattice]
	Let $\Gamma = (V, E)$ denote the square lattice. 
	
	In order to use \eqref{E: exited potential}, we need to transform $\Gamma$ into a triangulation. See Figure \ref{f: square lattice d3} for guidance. Fix a square and double the bottom and left edge and put coupling constants $\beta / 2$ on the doubled edges instead of $\beta$. Next, double the common vertex of the left and bottom edge and add an additional edge $e$, on which we set the coupling constant to infinity. This does not change the distribution of the spins.
	Finally, set the coupling constant on the edge $e$ to $0$, which is equivalent to removing the edge from the square, and repeat the procedure for all other squares. In this way, we obtain a new lattice $\Gamma'$, which consists of squares with a diagonal on which there is an additional vertex. Note that all coupling constants are now equal to $\beta / 2$. By Lemma \ref{L: xy increasing cor}, 
	\begin{align} \label{eq:cordec}
		\langle \sigma_a \bar{\sigma}_{b} \rangle_{\Gamma, \beta} \geq \langle \sigma_a \bar{\sigma}_{b} \rangle_{\Gamma', \beta/2}
	\end{align}
	for all pairs of vertices $a,b$ in $\Gamma$, using the natural embedding of $\Gamma$ on $\Gamma'$. 
	
	Since $\Gamma'$ is a translation invariant graph, the dichotomy statement of Theorem \ref{T: dichotomy} holds. 	
	To show that there are no translation invariant Gibbs measures for the associated height function, notice that the dual $(\Gamma')^\dagger$ of $\Gamma'$ (after collapsing the doubled edges to a single edge) is trivalent. 
	Moreover, the height function on any finite subgraph of $(\Gamma')^\dagger$ has a potential given by $\c{V}_e' = \c{V}^{\beta/2}_e$ for the nondiagonal edges and $\c{V}'_e = 2\c{V}^{\beta/2}_e$ otherwise, and the potential $\c{V}'$ satisfies Lammers' condition \eqref{E: exited potential} precisely when $\left({I_1(\beta/2)}/{I_0(\beta / 2)}\right)^2 \geq \frac{1}{2}$.
	Since the fraction on the left-hand side tends to $1$ as $\beta \to \infty$, we can choose $\beta$ large enough so that there are no translation invariant Gibbs measures for the height function on $(\Gamma')^\dagger$. 
	
	Note that every cut on $\Gamma$ embeds naturally as a cut on $\Gamma'$. Therefore, by Proposition~\ref{T:BKT-transition} together with \eqref{eq:cordec}, we have that
	for each cut $L$ on $\Gamma$ and each $\epsilon > 0$, 
	\[
	\chi_{\Gamma,\beta}^\epsilon(L)\geq \chi_{\Gamma',\beta/2}^\epsilon(L) = \infty.
	\]
	 This finishes the proof.
\end{proof}

To extend this proof to general graphs, we make each face into a triangulation by ``zig-zagging'' (see Figure \ref{f: face degree to 3 general}).
\begin{figure}
	\begin{subfigure}{.2\textwidth}
		\centering
		\includegraphics[scale =0.1]{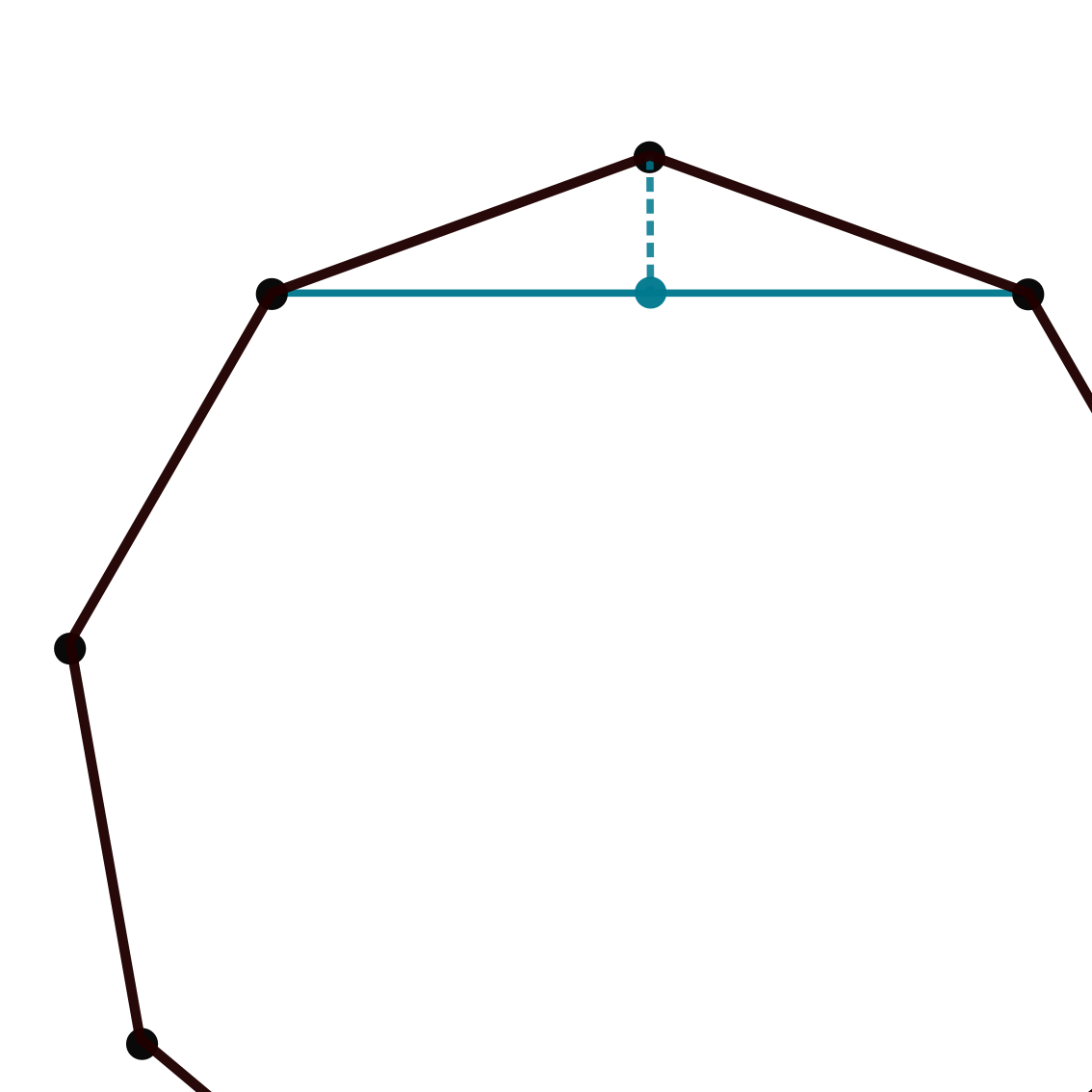}
	\end{subfigure}
	\begin{subfigure}{.2\textwidth}
		\centering
		\includegraphics[scale =0.1]{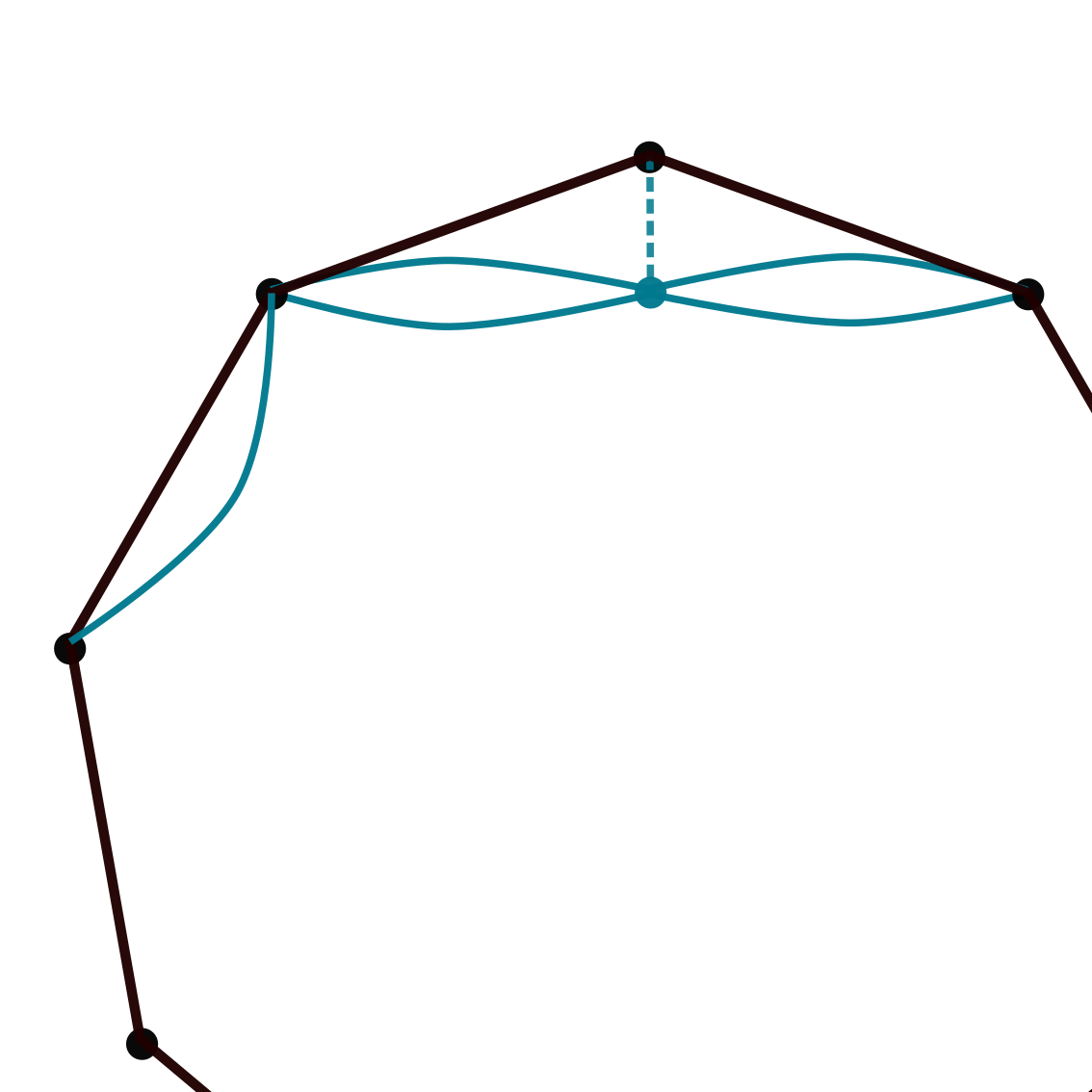}
	\end{subfigure}
	\begin{subfigure}{.2\textwidth}
		\centering
		\includegraphics[scale =0.1]{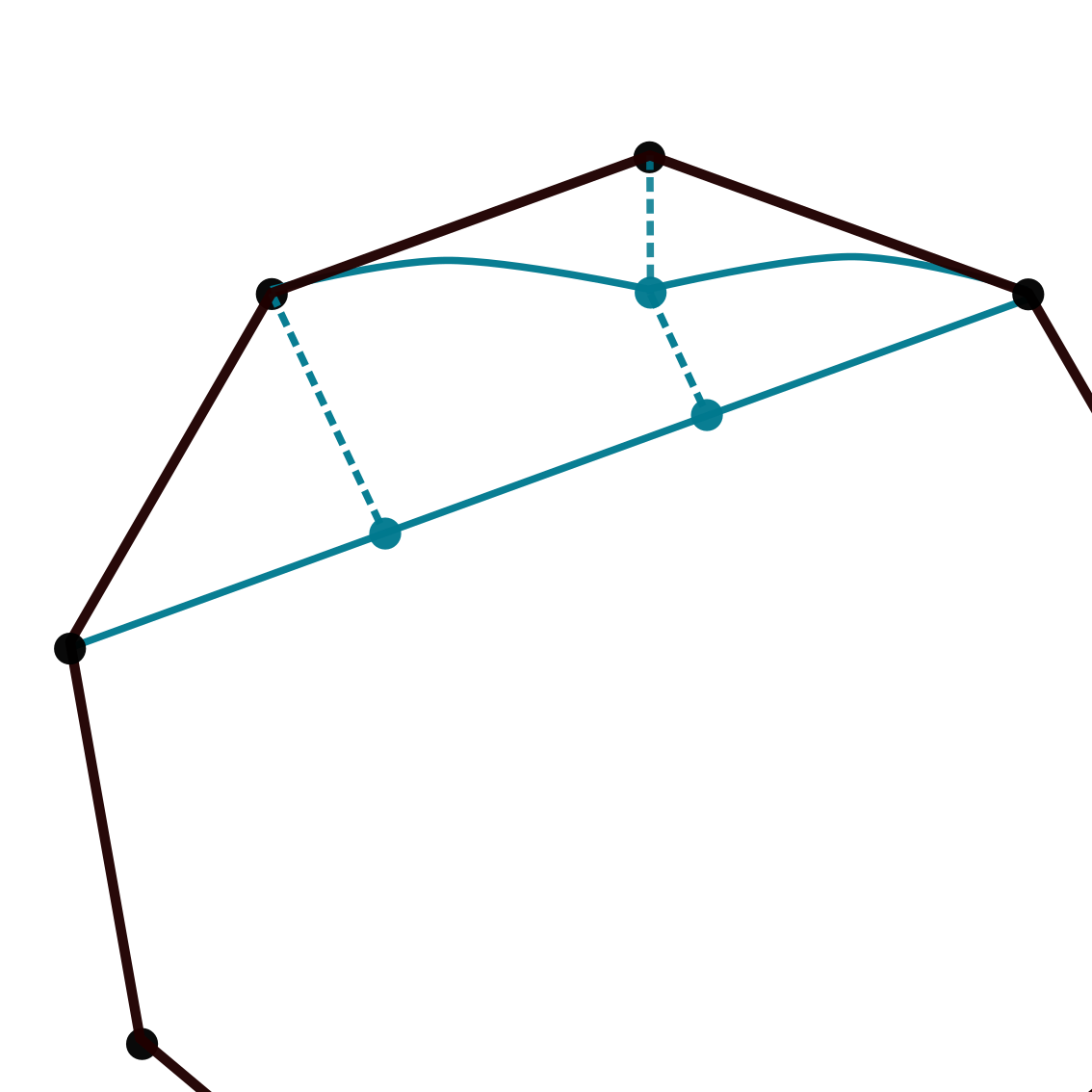}
	\end{subfigure}
	\begin{subfigure}{.2\textwidth}
		\centering
		\includegraphics[scale =0.12]{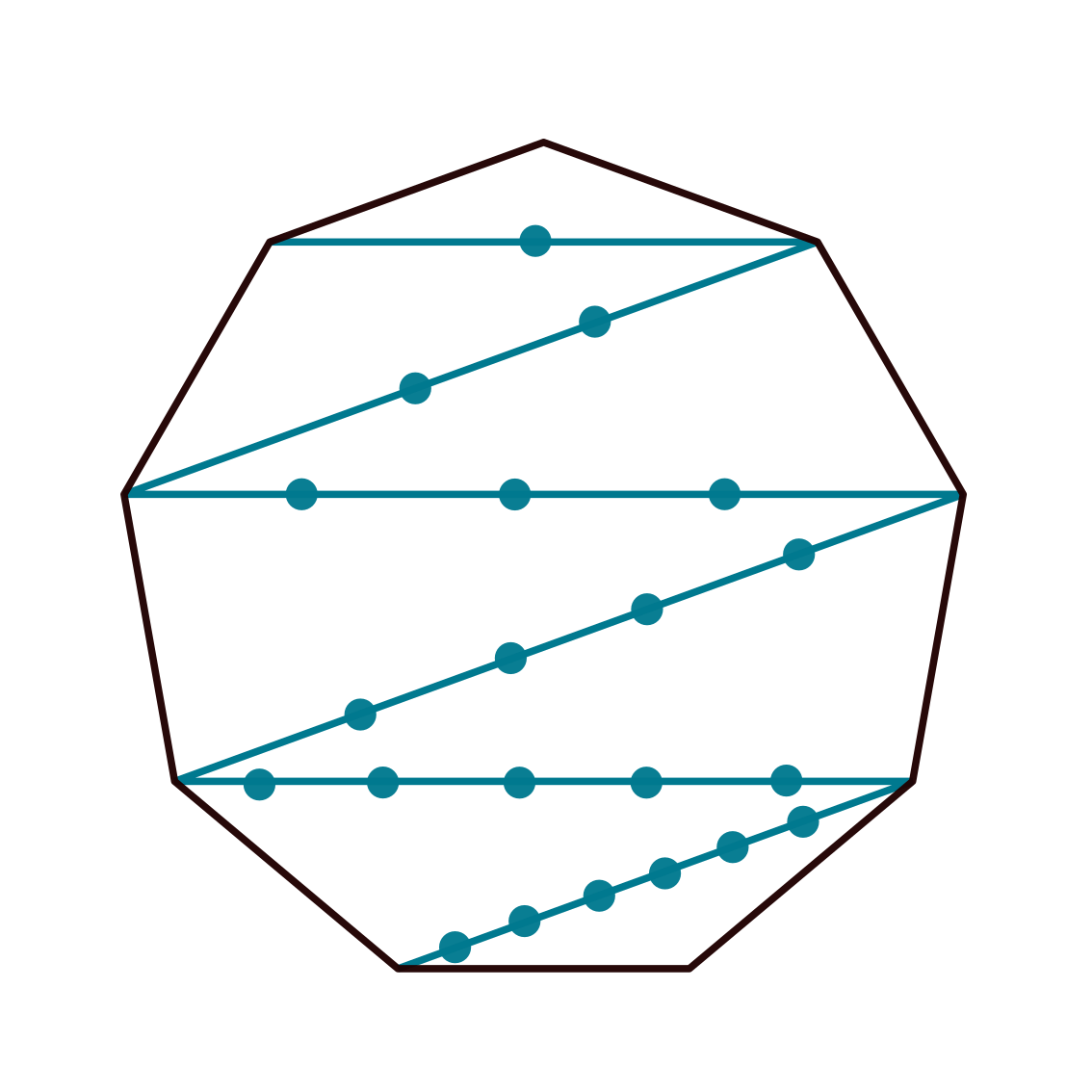}
	\end{subfigure}
	\caption{The transformation of a general graph to a triangulation (after identifying the resulting multiple edges). The dashed edges are such that the coupling constant is set to infinity first, and then to zero (which is equivalent to removing the edges) and hence the spin correlations in the final graph are smaller than in the original graph. }
	\label{f: face degree to 3 general}
\end{figure}

\section{No exponential decay implies a power-law lower bound} \label{sec:Liebsharp}
In this section we finish the proof of the main theorem by showing that the absence of exponential decay implies a power-law lower bound 
on the two-point function when $\Gamma=\mathbb Z^2$. Similar arguments can be applied to other graphs that in addition to being translation invariant possess reflection and rotation symmetries.

\begin{proof}[Proof of Theorem \ref{T:main_BKT}]
Let $0$ denote the vertex at the origin. For a finite subgraph $G$ of $\mathbb Z^2$ containing $0$, let 
\[
\varphi_{G,\beta} =\sum_{w\in \partial G} \langle \sigma_0\bar \sigma_w\rangle_{G,\beta},
\]
where $\partial G$ is the set of vertices of $G$ adjacent to at least one vertex outside $G$.
Define
\begin{align} \label{eq:betac}
\beta_c= \sup \{\beta: \text{there exists finite $G$ with $\varphi_{G,\beta} <1$}  \}.
\end{align}
We will show that $\beta_c$ satisfies the properties listed in Theorem \ref{T:main_BKT}. To this end first fix $\beta<\beta_c$. By Lemma~\ref{L: xy increasing cor}, there exists a finite graph $G$ with $\varphi_{G,\beta} <1$.
Using a standard argument that consists in iteratively applying the Lieb--Rivasseau inequality~\cite{Lieb,Riv} (see Lemma~\ref{L: LR-ineq}) to translates of $G$, we obtain that the two-point functions decay exponentially fast, and hence \emph{(i)} holds true.

To conclude \emph{(ii)}, note that for each finite $G$, $\varphi_{G,\beta}$ is a continuous function of $\beta$, and hence the set in \eqref{eq:betac} is open.
This means that for every $\beta\geq \beta_c$, we have $\varphi_{G,\beta} \geq1$ for all finite subgraphs~$G$. 

\begin{figure}
		\centering
		\includegraphics[scale =0.2]{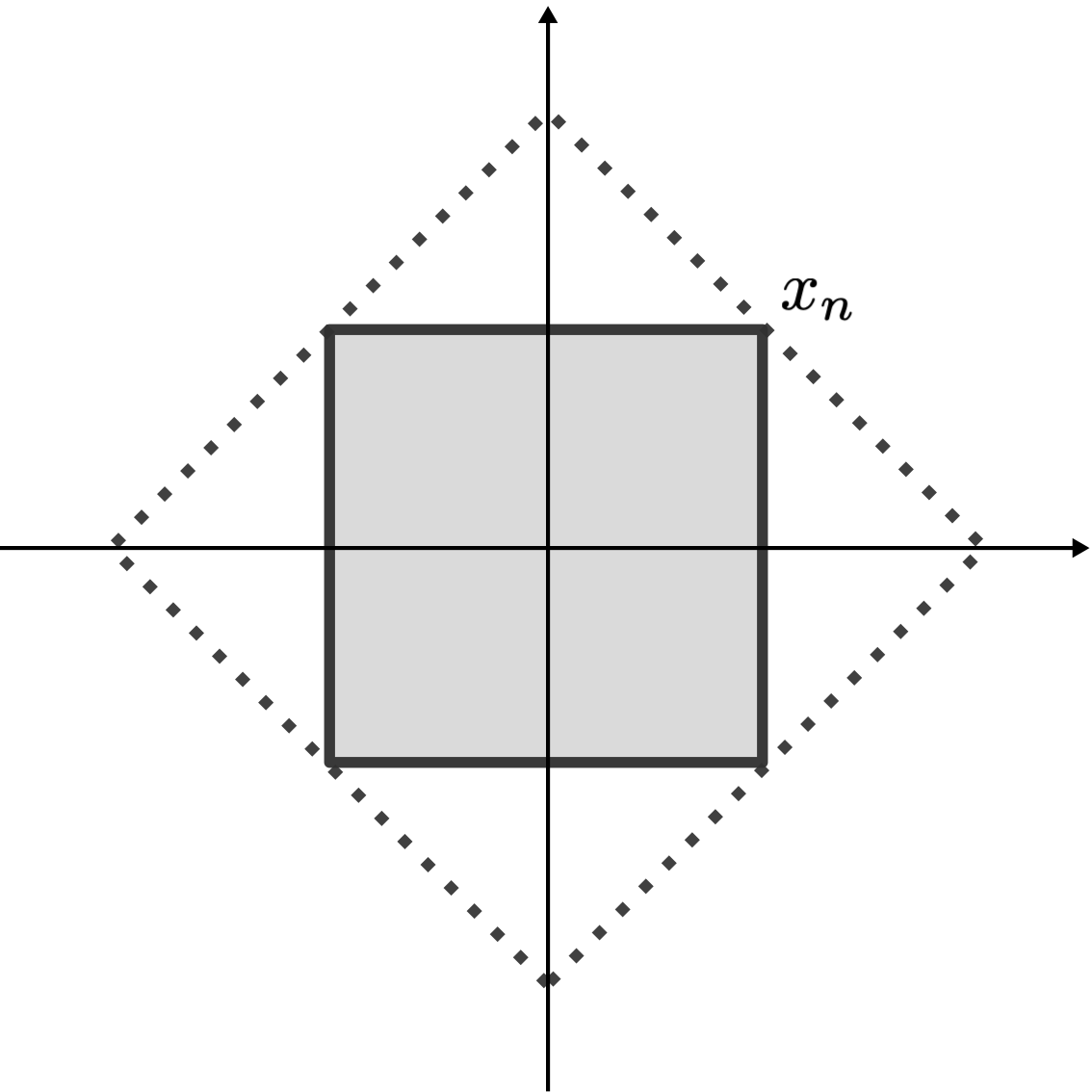}
		\caption{The $[-n, n]^2$ box $\Lambda_n$ shaded in grey and the $L^1$ ball $\Lambda'_n$ of radius $2n$. }
		\label{fig:balls}
\end{figure}

Now let $\Lambda_n$ be the box $[-n, n]^2$, and let $\Lambda'_n$ be the ball in $L^1$ of radius $2n$ (see Figure \ref{fig:balls}). We write $x_n:=(n,n)\in \partial \Lambda_n \cap \partial \Lambda'_n$ and $a_n =\langle \sigma_0\bar \sigma_{x_n}\rangle_{\mathbb Z^2,\beta}$. By rotation symmetry and the Messager--Miracle-Sole~\cite{MMS} inequality (see Lemma~\ref{L: MM-S inequal}), we have
\[
a_n =\min_{v\in \partial \Lambda_n}\langle \sigma_0\bar \sigma_{v}\rangle_{\mathbb Z^2,\beta} = \max_{v\in \partial \Lambda'_n}\langle \sigma_0\bar \sigma_{v}\rangle_{\mathbb Z^2,\beta}.
\]
For $\beta\geq \beta_c$, we moreover have
\[
 \sum_{w \in \partial \Lambda'_n} \langle \sigma_0 \overline{\sigma}_w \rangle_{\mathbb Z^2, \beta} \geq \varphi_{\Lambda_n',\beta} \geq 1.
\] 
These two observations together imply that for any $v\in \partial \Lambda_n$,
	\[
		\langle \sigma_0\bar \sigma_{v}\rangle_{\mathbb Z^2,\beta}  \geq a_n \geq \frac{1}{|\partial \Lambda'_n|} = \frac{1}{8n} \geq  \frac{1}{8 |v|} 
	\]
which implies $(ii)$.
	
Finally by Theorem~\ref{T:BKT planar lattice} we know that there exists a finite $\beta$ at which there is no exponential decay, and by classical expansions there exists a nonzero $\beta$ at which there is exponential decay (see e.g.~\cite{AizSim}). We conclude that $0<\beta_c<\infty$.
\end{proof}

\appendix
\include{appendix_name}
\section{Double currents, path switching and correlation inequalities} \label{sec:doubleswitch}
The main purpose of this section is to present a new technique that may be applied to a further study of the XY model (possibly in higher dimensions).
We develop a loop representation for \emph{squares and products} of correlation functions.
This is a generalization of the construction from Section~\ref{sec:singleswitch}, and to the best of our knowledge has not yet been described in the literature.
It is also analogous to the double random current representation of the Ising model~\cite{GHS,aizenman,ADCS} but is more subtle as one has to deal with path switching rather than connection switching in a percolation model. We stress the fact that we do not use any of the results from this section in the proof of the main theorem, except for the well known inequalities of Lieb and Rivasseau, and Messager and Miracle-Sole.

There will be two major differences in the definition of a loop configuration compared to Section~\ref{sec:singleswitch}: the edges will come in two colours, red and blue, corresponding to two currents $\re$ and $\bl$ respectively,
and we will allow the paths to enter vertices $v$ at which the number of incoming and outgoing edges is not the same, i.e., $\delta (\re+\bl)_v\neq 0$. To be more precise, consider the following definition.
\begin{definition}[Coloured loop configurations outside $S$ with sources $\varphi$]
Let $\mathcal M$ be a multigraph on $V$, let $S\subseteq V$, and $\varphi:V\to \mathbb Z$ with $\sum_{v\in V}\varphi_v=0$. A \emph{coloured loop configuration (on $\mathcal M$) outside $S$ with sources $\varphi$} is 
\begin{itemize} 
\item an assignment of a red or blue color to each edge of $\mathcal M$, together with 
\item a collection of  
\begin{itemize}
\item unrooted directed loops on $\mathcal M$ avoiding $S$, and
\item directed open paths on $ \mathcal M$ not visiting $S$ except possibly at their start and end vertex,
\end{itemize}
such that 
\begin{itemize}
\item every edge of $\mathcal M$ is traversed exactly once by a loop or a path, and
\item at each vertex $v\in V\setminus S$, there are exactly $\varphi_v \id\{ \varphi_v>0\}$ outgoing and $-\varphi_v \id\{ \varphi_v<0\}$ incoming paths.
\end{itemize}
\end{itemize}
We write $\tilde \DCWire^S_{\varphi}$ for the set of all coloured loop configurations outside $S$ with sources $\varphi$, and define a weight on $\tilde \DCWire^S_{\varphi}$ by
\begin{equation} \label{eq: colour weight1}
	\tilde \lambda^S_\beta(\dcwire) = \prod_{v \in V \setminus S} \frac{|\varphi_v|!}{((\deg_{\mathcal M}(v) +|\varphi_v|)/ 2)!}  \prod_{e\in E} \frac{1}{\mathcal M_e!} \Big(\frac{\beta}{2}\Big)^{\mathcal M_e},
\end{equation}
where $\mathcal M$ is the underlying multigraph, and $\mathcal M_e$ is the number of copies of $e$ in $\mathcal M$.  
\end{definition}
Note that this weight no longer only depends on the multigraph $M$ and on $S$, but also on $|\varphi(\omega)|$, where $\varphi(\omega)$ are the sources of $\omega$.
Also note that in the above definition $S$ and $\varphi$ can be chosen independently. $S$ denotes the set of vertices where we do not resolve any connections between paths and loops, and $\varphi$ prescribes where the sources and sinks are (vertices with nonzero value of $\varphi$). At any such vertex $v$, we resolve as many connections as possible leaving only $|\varphi_v|$ incoming or outgoing arrows unmatched, depending on the sign of $\varphi_v$. This is the reason why $\varphi$ appears in the above weight, which was not the case in Section~\ref{sec:singleswitch}.

As before if $S'\subseteq S$, then there is a natural map $\rho: \mathcal L^{S'} \to \mathcal L^S$ that consists in forgetting (or cutting) the loop and path connections at the vertices in $S\setminus S'$, and 
\begin{align}\label{eq:cutting1}
\sum_{\tilde \omega \in \rho^{-1}[\omega]} \tilde \lambda^{S'}_\beta(\tilde \dcwire) =  \tilde \lambda^S_\beta(\dcwire) .
\end{align}

\begin{definition}[Coloured currents and consistent configurations]
We will consider a pair of currents $\re,\bl$ that we think of as \emph{red} and \emph{blue} respectively.
A coloured loop configuration $\omega$ on $\mathcal M_{\re + \bl }$ is called \emph{consistent with $\re$ and $\bl$} if for every edge $vv'\in E$, the number of times the loops and paths traverse a red (resp.\ blue) copy of $vv'$
in the direction of ${(v,v')}$ is equal to 
$\re_{(v,v')}$ (resp.\ $\bl_{(v,v')}$). In particular $\omega$ has sources $\delta(\re+\bl)$.
We define $ \tilde{\DCWire}^S_{\re,\bl}$ to be the set of all coloured loop configurations on $\mathcal M_{\re+\bl }$ outside $S$ that are consistent with $\re$ and $\bl$. 
\end{definition}
For $\varphi,\psi: V\to\mathbb Z$, we also define
\[
  \tilde{\DCWire}_{\varphi,\psi}^S= \bigcup_{\re\in \Omega_{\varphi}, \bl \in \Omega_{\psi}}  \tilde{\DCWire}^S_{\re,\bl}\subseteq   \tilde{\DCWire}_{\varphi+\psi}^S.
\] 
where the union is clearly disjoint. For brevity, we will write $ \tilde \DCWire_{0}^S$ instead of $\tilde \DCWire_{0,0}^S$, where $0$ denotes the zero function on $V$. 

We now relate the weights of loops to those of pairs of currents. To this end, note that for each edge $vv'\in E$, there are exactly 
\[
	\frac{|\re+\bl|_{vv'}!}{\re_{(v, v')}!\re_{{(v', v)}}!\bl_{(v, v')}!\bl_{(v', v)}!}
\]
ways of assigning colour to the copies of $vv'$ in $\mathcal M_{\re + \bl }$, and to orient them in the two possible ways so that the result is consistent with $\re$ and $\bl$.
Moreover, independently of the choices of colours and orientations, there are exactly 
\[
\frac{((\deg_{\mathcal M_{\re + \bl}}(v) +|\varphi_v|)/ 2)!}{|\varphi_v|!}
\] possible pairings of the incoming and outgoing edges at each vertex $v \in V\setminus S$ such that there are exactly $\varphi_v \id\{ \varphi_v>0\}$ outgoing and $-\varphi_v \id\{ \varphi_v<0\}$ incoming edges unpaired. This is equivalent to choosing the possible steps that all the loops and paths in the configuration make
at $v$.
Combining all this, we get the following identity: 
\begin{align} \label{eq:loopexp1}
\sum_{\dcwire \in  \tilde \DCWire^S_{\re,\bl}}\tilde  \lambda^S_\beta(\dcwire) = w_\beta(\re)w_\beta(\bl).
\end{align}
An important observation again is that the right-hand side is independent of $S$, and hence so is the left-hand side.

In particular, for two sourceless currents, we have
\begin{align} \label{eq:squarepartition1}
\sum_{\dcwire \in  \tilde\DCWire^\emptyset_0} \tilde \lambda^{\emptyset}_\beta(\dcwire) = \Big(\sum_{\n \in\Omega_{0}} w_\beta(\n)\Big)^2=(Z^0_{G,\beta})^2.
\end{align}

Again, in the case when $G$ is planar we get the following distributional identity. 
Let $ \tilde {\mathbf{P}}_{G,\beta}$ to be the probability measure on $\tilde {\mathcal L}_0:=\tilde{ \mathcal L}^\emptyset_0$ induced by the weights $\tilde \lambda_{\beta}:=\tilde{ \lambda}^\emptyset_{\beta}$.
For each face $u\in U$ of $G$, and $\omega\in \tilde{ \mathcal L}_0$, define $W_{\omega}(u)$ to be the total net winding
of all the loops in $\omega$ around~$u$. 
\begin{proposition} \label{prop:netwinding1}
The law of $(W(u))_{u\in U}$ under $\tilde {\mathbf{P}}_{G,\beta}$ is the same as the law of the sum of two independent height functions $(h(u)+ h'(u))_{u\in U}$ under ${\mathbb P}_{G,\beta}$.
\end{proposition}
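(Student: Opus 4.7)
The plan is to reduce the statement to Proposition~\ref{prop:netwinding} applied separately to each colour, by combining two observations: the underlying pair of currents $(\re,\bl)$ is distributed as a product measure, and the winding functional decomposes additively across the two colours.

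First I would identify the marginal law of $(\re,\bl)$ under $\tilde{\mathbf P}_{G,\beta}$. Applied with $S=\emptyset$ to sourceless pairs, identity \eqref{eq:loopexp1} gives
\[
\sum_{\omega \in \tilde{\mathcal L}^{\emptyset}_{\re,\bl}} \tilde\lambda^{\emptyset}_{\beta}(\omega) \;=\; w_{\beta}(\re)\,w_{\beta}(\bl).
\]
Combined with \eqref{eq:squarepartition1} and the fact that $\tilde{\mathcal L}^{\emptyset}_{0}$ is the disjoint union of the $\tilde{\mathcal L}^{\emptyset}_{\re,\bl}$ over sourceless pairs, this shows that the pushforward of $\tilde{\mathbf P}_{G,\beta}$ under $\omega \mapsto (\re(\omega),\bl(\omega))$ has density proportional to $w_{\beta}(\re)\,w_{\beta}(\bl)$, i.e.\ it is the independent product $\mathbb P_{G,\beta}\otimes\mathbb P_{G,\beta}$.

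Second I would check that, for any $\omega \in \tilde{\mathcal L}^{\emptyset}_{0}$ and any face $u\in U$,
\[
W_{\omega}(u) \;=\; h_{\re(\omega)}(u) \;+\; h_{\bl(\omega)}(u).
\]
This is essentially by definition: fix a dual path from the outer face to $u$; the winding of $\omega$ around $u$ equals the signed count of edges of $\omega$ crossing this dual path (traversals from right to left minus traversals from left to right), and this count is blind to the colour. Splitting it by colour separates the contributions from red edges (whose net signed crossing is $h_{\re}(u)$, as in the construction of the dual height function in Section~\ref{sec:HF}) and blue edges (giving $h_{\bl}(u)$). The same additivity is, of course, how Proposition~\ref{prop:netwinding} identifies winding with the height function in the single-current case.

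Putting these together, the pushforward of $\tilde{\mathbf P}_{G,\beta}$ under $\omega\mapsto (W_{\omega}(u))_{u\in U}$ equals the pushforward of $\mathbb P_{G,\beta}\otimes\mathbb P_{G,\beta}$ under $(\re,\bl)\mapsto h_{\re}+h_{\bl}$, which by Proposition~\ref{prop:netwinding} is the law of $h+h'$ for two independent samples from $\mathbb P_{G,\beta}$. There is no real obstacle here beyond a careful bookkeeping check that the weight decomposition in \eqref{eq:loopexp1} is compatible with the $\omega\mapsto(\re,\bl)$ projection and that the winding decomposition is consistent with the conventions in Section~\ref{sec:HF}; the argument is a direct analogue of Proposition~\ref{prop:netwinding} with an extra colour bookkeeping step.
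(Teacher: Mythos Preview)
Your proposal is correct and follows exactly the (implicit) approach of the paper: the proposition is stated there without proof as an immediate consequence of \eqref{eq:loopexp1} and \eqref{eq:squarepartition1}, and you have simply spelled out the two natural steps---that the pushforward to $(\re,\bl)$ is the product $\mathbb P_{G,\beta}\otimes\mathbb P_{G,\beta}$, and that the winding decomposes additively as $h_{\re}+h_{\bl}$. One minor remark: the final appeal to Proposition~\ref{prop:netwinding} is not really needed, since once you know $(\re,\bl)$ is a product and $W_\omega=h_{\re}+h_{\bl}$ deterministically, the conclusion is immediate from the definition of $\mathbb P_{G,\beta}$ as a measure on currents (equivalently height functions).
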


\begin{figure}[h]
	\begin{subfigure}{.2\textwidth}
		\centering	
		\includegraphics[scale =0.12]{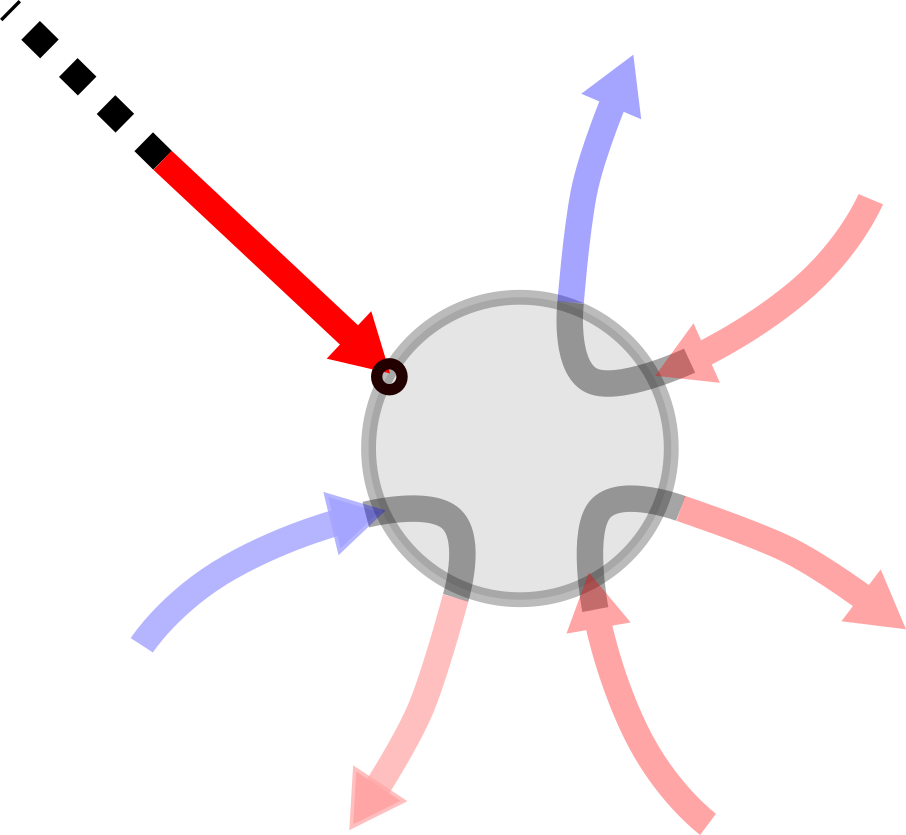}
	\end{subfigure}
	\begin{subfigure}{.2\textwidth}
		\centering
			\includegraphics[scale =0.12]{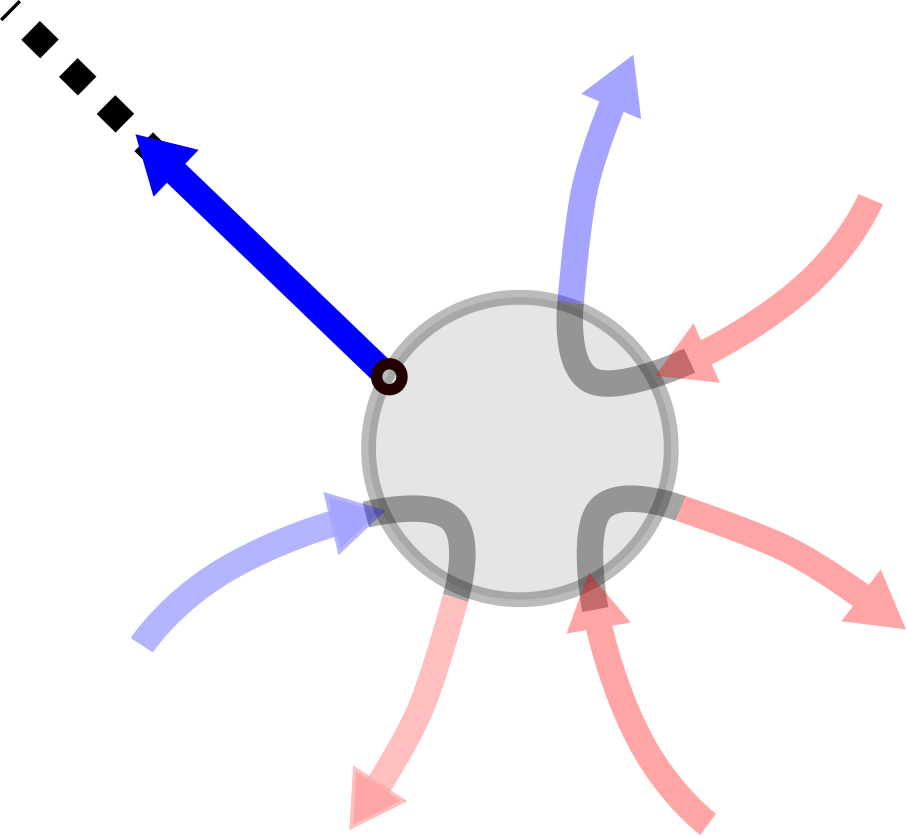}			
	\end{subfigure}
	\begin{subfigure}{.2\textwidth}
		\centering
				\includegraphics[scale =0.12]{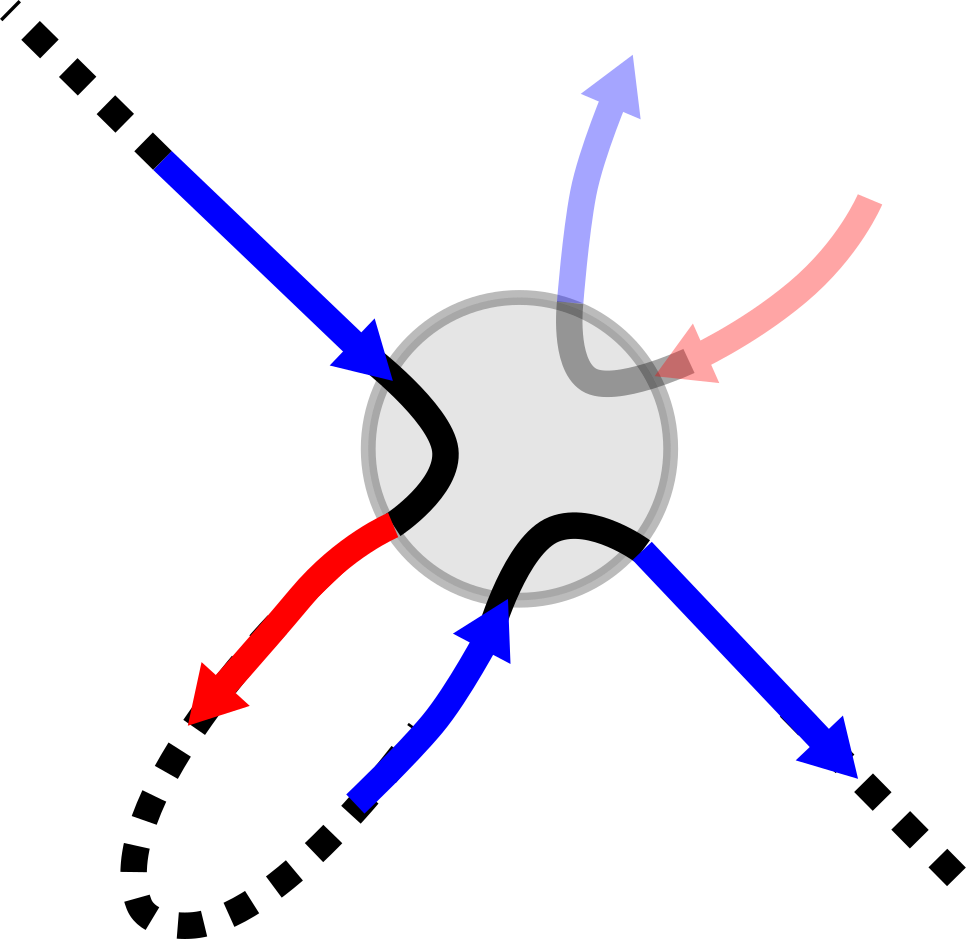}
	\end{subfigure}
	\begin{subfigure}{.2\textwidth}
		\centering
	\includegraphics[scale =0.12]{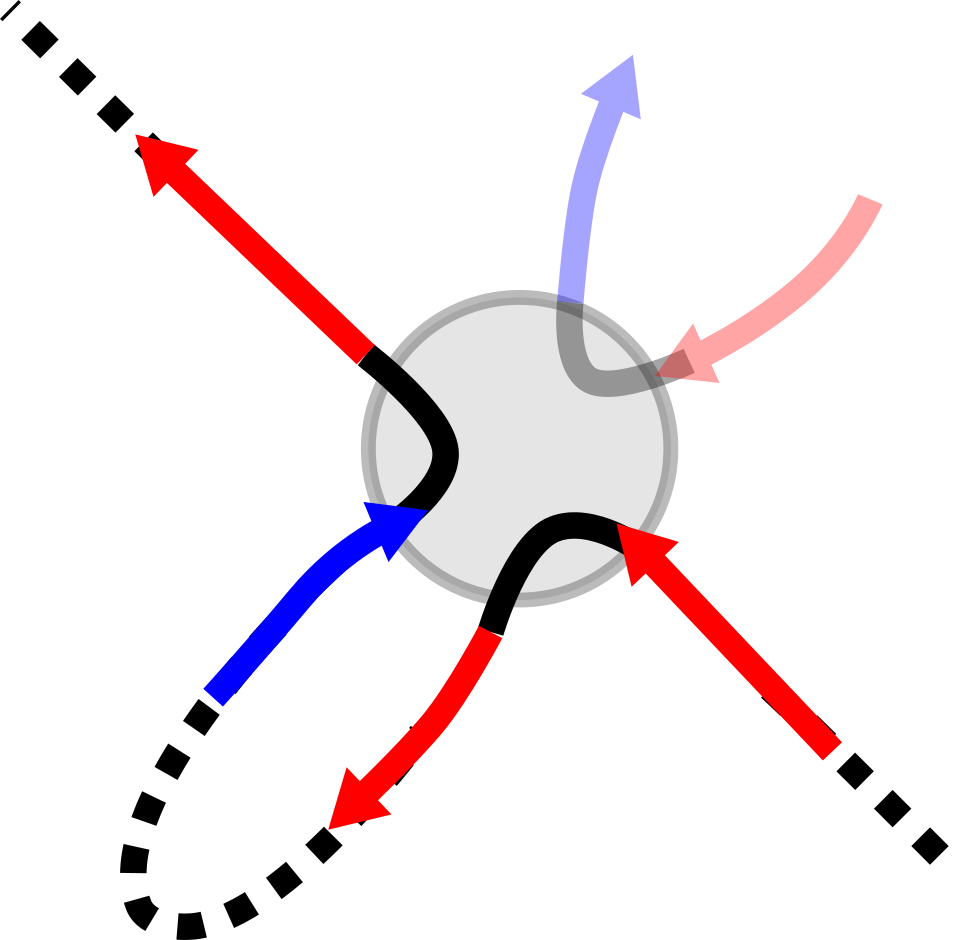}
	\end{subfigure}
	\caption{Path switching behaviour at a vertex $v$ that is (resp.\ is not) the start or end-point of the switched path. Left: the values of \emph{both} $\delta\re_v$ and $\delta\bl_v$ are increased by one after switching. Right: the values are not changed.}
	\label{f: double-switching}
\end{figure}

\subsection{The two point-function and path switching}
We now turn to the loop representation of the square of the two-point function.
To this end, write $\varphi=\delta_a-\delta_b$.
Similar to \eqref{eq:currentexp}, we get 
\begin{align} \label{eq:currentexp1}
\langle \sigma_a \bar \sigma_b \rangle^2_{G,\beta}= \Bigg(\frac{\sum_{\n \in\Omega_{\varphi}} w_\beta(\n) }{Z^0_{G,\beta}}\Bigg)^2=\frac{\sum_{\re,\bl \in\Omega_{\varphi}} w_\beta(\re)w_\beta(\bl) }{(Z^0_{G,\beta})^2}
= \frac{\sum_{ \dcwire \in  \DCWire^{a,b}_{\varphi,\varphi}}  \tilde \lambda^{a,b}_\beta(\dcwire)}{(Z^0_{G,\beta})^2},
\end{align}
where the last equality follows from \eqref{eq:loopexp1}.

As before, we now want to reverse some of the paths. However, this time we also need to take care of the colours of the edges visited by a path. This motivates the following definition.
\begin{definition}[Path switching]
For a path $\gamma$ in a coloured loop configuration $\omega$, we define $s(\gamma)$ to be the path obtained from $\gamma$ by
\begin{itemize}
\item reversing the orientation of $\gamma$, and 
\item swapping the colours of the edges visited by $\gamma$. 
\end{itemize}
We also define $\omega'$ to be the configuration where $\gamma$ is replaced by $s(\gamma)$. 
This operation does not change the underlying multigraph. 
Moreover if $\gamma$ starts at $a$ and ends at $b$, then for any $\varphi,\psi:V\to \mathbb Z$, path switching maps 
\[
\omega \in \tilde{ \mathcal L}^S_{\varphi,\psi}\qquad \textnormal{ to } \qquad \omega'\in \tilde {\mathcal L}^S_{\varphi-\delta_a+\delta_b,\psi-\delta_a+\delta_b}
\]
(see Figure~\ref{f: double-switching}).
\end{definition}
We note that there are two important cases in which path switching does not change the weight $\tilde \lambda^S_\beta$. The first one is when $\{a,b\}\subset S$, and the second one is when
$\varphi_a+\psi_a=1$, and $\varphi_b+\psi_b=-1$, since then the absolute value of the sources of the configuration does not change.

Again the crucial observation now is that switching a path going from $a$ to $b$ maps $\dcwire\in\tilde  \DCWire^{a,b}_{\varphi,\varphi}$ to $ \dcwire' \in \tilde \DCWire^{a,b}_{0}$, and hence erases the sources and sinks of the underlying currents.
Indeed one can easily check (see Figure~\ref{f: double-switching}) that after reversing a path and swapping the colours, the number of incoming minus the number of outgoing red and blue edges at every vertex $v\notin  \{a,b\}$ in $\dcwire'$ is the same as in $\dcwire$, whereas at $a$ and $b$ this number is decreased by one. 
Since we did not change the sources outside $\{a,b\} $, we do not change the weight of a loop configuration, and hence obtain in the same way as in Section~\ref{sec:twopoint} that
\begin{align*}
	\sum_{   \dcwire \in\tilde \DCWire^{a,b}_{\varphi,\varphi}} \tilde \lambda^{a,b}_\beta( \dcwire)={\sum_{   \dcwire \in \tilde \DCWire^\emptyset_0}}\frac{m_{b,a}( \dcwire) }{m_{b,a}( \dcwire)+1 } \tilde \lambda^\emptyset_\beta( \dcwire) \id\{ m_{b,a}( \dcwire) > 0 \}.
\end{align*}
Together with \eqref{eq:currentexp1} this implies the following loop representation of the square of the two-point function. 

\begin{proposition} \label{L:DoubleSwitching}
	Let $a,b\in V$ be distinct. Then
	\[
		\langle \sigma_{a}\bar{\sigma}_{b} \rangle_{G,\beta}^2 =\tilde {\mathbf{E}}_{G,\beta}\Big[\frac{m_{a,b}}{m_{a,b} + 1}\Big], 
	\]
	and in particular
	\[
		\frac{1}{2} \tilde {\mathbf{P}}_{G,\beta}(m_{a,b} > 0) \leq \langle \sigma_{a}\bar{\sigma}_{b} \rangle_{G,\beta}^2 \leq \tilde {\mathbf{P}}_{G,\beta}(m_{a,b} >0). 
	\]
\end{proposition}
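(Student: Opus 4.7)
The plan is to mimic the proof of Lemma~\ref{L:SingleSwitching} verbatim, with single path reversal replaced by the coloured path switching operation just defined. Starting from the double current expansion \eqref{eq:currentexp1} with $\varphi = \delta_a - \delta_b$, the numerator is
\[
\sum_{\omega \in \tilde\DCWire^{a,b}_{\varphi,\varphi}} \tilde\lambda^{a,b}_\beta(\omega),
\]
while the denominator $(Z^0_{G,\beta})^2$ equals $\sum_{\omega'' \in \tilde\DCWire^\emptyset_0} \tilde\lambda^\emptyset_\beta(\omega'')$ by \eqref{eq:squarepartition1}. It therefore suffices to rewrite the numerator as a sum over $\tilde\DCWire^\emptyset_0$ weighted by $m_{a,b}/(m_{a,b}+1)$.

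Each $\omega \in \tilde\DCWire^{a,b}_{\varphi,\varphi}$ has total sources $2(\delta_a-\delta_b)$, so $m_{a,b}(\omega) = m_{b,a}(\omega) + 2 \geq 2$. I would pick a path $\gamma \in \mathcal P_{a,b}(\omega)$ and apply path switching to obtain $(\omega', s(\gamma))$ with $\omega' \in \tilde\DCWire^{a,b}_0$ and $s(\gamma) \in \mathcal P_{b,a}(\omega')$; this is manifestly a bijection, with inverse given by applying path switching to an arbitrary element of $\mathcal P_{b,a}(\omega')$. Summing the identity $\tilde\lambda^{a,b}_\beta(\omega) = \tilde\lambda^{a,b}_\beta(\omega')$ with the counting weight $1/m_{a,b}(\omega) = 1/(m_{b,a}(\omega')+1)$ over $\gamma$ yields
\[
\sum_{\omega \in \tilde\DCWire^{a,b}_{\varphi,\varphi}} \tilde\lambda^{a,b}_\beta(\omega) = \sum_{\omega' \in \tilde\DCWire^{a,b}_0} \frac{m_{b,a}(\omega')}{m_{b,a}(\omega')+1}\, \tilde\lambda^{a,b}_\beta(\omega').
\]
Passing from $S = \{a,b\}$ to $S = \emptyset$ via the consistency identity \eqref{eq:cutting1}, reinterpreting $m_{b,a}$ as counting pieces of loops from $b$ to $a$ (with the abuse of notation of Section~\ref{sec:twopoint}), and finally invoking the global orientation-reversal symmetry of $\tilde{\mathbf P}_{G,\beta}$ to replace $m_{b,a}$ by $m_{a,b}$ in distribution, gives the claimed identity. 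The two-sided bound follows from the elementary inequalities $\tfrac12 \id\{m_{a,b}>0\} \leq m_{a,b}/(m_{a,b}+1) \leq \id\{m_{a,b}>0\}$.

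The main subtlety, and the step I would verify most carefully, is that path switching preserves the weight $\tilde\lambda^{a,b}_\beta$ in \eqref{eq: colour weight1}. The factors indexed by edges are clearly preserved since the underlying multigraph $\mathcal M$ is unchanged by a direction reversal combined with a colour swap. The delicate factors are the vertex terms $|\varphi_v|!/((\deg_{\mathcal M}(v)+|\varphi_v|)/2)!$ at internal vertices $v \notin \{a,b\}$, where $\varphi_v = \delta(\re+\bl)_v$. A short case analysis at such a $v$ along $\gamma$, using that direction reversal negates an edge's contribution to each of $\delta\re_v, \delta\bl_v$ while the colour swap interchanges these two contributions, shows that \emph{both} $\delta\re_v$ and $\delta\bl_v$ are individually preserved at every internal $v$, as illustrated on the right of Figure~\ref{f: double-switching}. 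In particular the sources of $\omega'$ vanish at all vertices other than $a, b$, confirming $\omega' \in \tilde\DCWire^{a,b}_0$ and that the weight is invariant; this is precisely what makes the coloured analogue of the single-switching argument go through.
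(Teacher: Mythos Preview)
Your proposal is correct and follows essentially the same route as the paper: apply the path-switching bijection to the numerator of \eqref{eq:currentexp1}, use the consistency relation \eqref{eq:cutting1} to pass from $S=\{a,b\}$ to $S=\emptyset$, and then invoke the global reversal symmetry of $\tilde{\mathbf P}_{G,\beta}$. Your explicit verification that both $\delta\re_v$ and $\delta\bl_v$ are individually preserved at internal vertices is more detailed than the paper's, which simply states that ``we did not change the sources outside $\{a,b\}$''; this is the content of the right panel of Figure~\ref{f: double-switching}.
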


\begin{remark}\label{rem:squares}
The constant in the inequality of Lemma~\ref{lem:squares} can be improved to $1$ using the same method as above but starting from coloured loop configurations 
in $\tilde {\mathcal L}^{a,b}_{0,2\varphi}$ instead of $\tilde {\mathcal L}^{a,b}_{\varphi,\varphi}$.
\end{remark}

\subsection{Application to some inequalities}
As a further application we now prove an inequality that is related to but independent of the Ginibre inequality.

\begin{lemma} \label{L: ferro-magnet inequal} Let $a,b,c\in V$. Then
	\[
	\langle \sigma_{a}\bar{\sigma}_{b} \rangle_{G,\beta} \geq \langle \sigma_{a}\bar{\sigma}_{c} \rangle_{G,\beta}\langle \sigma_{c}\bar{\sigma}_{b} \rangle_{G,\beta} \geq   \langle \sigma_{a}{\sigma}_{b}\bar{\sigma}^2_{c}\rangle_{G,\beta}.
	\]
\end{lemma}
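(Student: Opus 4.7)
The plan is to first establish the right inequality using path switching in the double-current loop representation developed in this appendix, and then to deduce the left inequality from it via Ginibre's inequality.

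For the right inequality $\langle\sigma_a\bar\sigma_c\rangle_{G,\beta}\langle\sigma_c\bar\sigma_b\rangle_{G,\beta}\ge\langle\sigma_a\sigma_b\bar\sigma_c^{2}\rangle_{G,\beta}$, I would combine the current expansion of correlation functions with~\eqref{eq:loopexp1} applied with $S=\{a,b,c\}$ to obtain
\[
(Z^{0}_{G,\beta})^{2}\langle \sigma_{a}\bar\sigma_{c}\rangle_{G,\beta}\langle \sigma_{c}\bar\sigma_{b}\rangle_{G,\beta}=\sum_{\omega\in \tilde\DCWire^{a,b,c}_{\delta_{a}-\delta_{c},\,\delta_{c}-\delta_{b}}}\tilde\lambda^{a,b,c}_{\beta}(\omega),
\]
\[
(Z^{0}_{G,\beta})^{2}\langle \sigma_{a}\sigma_{b}\bar\sigma_{c}^{2}\rangle_{G,\beta}=\sum_{\omega\in \tilde\DCWire^{a,b,c}_{\delta_{a}+\delta_{b}-2\delta_{c},\,0}}\tilde\lambda^{a,b,c}_{\beta}(\omega).
\]
Crucially, since all sources lie inside $S$, the weight $\tilde\lambda^{a,b,c}_{\beta}$ is a function of the underlying undirected multigraph of $\omega$ only, and is therefore preserved by any operation that leaves this multigraph intact. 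I would then construct a weight-preserving injection from the second index set into the first by path-switching (simultaneous orientation reversal and color swap) a canonical red $b\to c$ path: such a path is forced to exist since the red source is $\delta_a+\delta_b-2\delta_c$, and switching it produces red sources $\delta_a-\delta_c$ and blue sources $\delta_c-\delta_b$ while leaving the underlying multigraph intact. The inverse operation switches a canonical blue $c\to b$ path in the image configuration, which is guaranteed to exist because its blue source is $\delta_c-\delta_b$.

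The main technical obstacle is making this map rigorously injective when configurations carry several candidate paths. In the generic case where the red sub-current in the preimage admits a unique $b\to c$ path and the blue sub-current in the image admits a unique $c\to b$ path, the two operations are mutual inverses and injectivity is automatic. In the presence of extra unmatched edges at vertices of $S$, several candidates may appear, so one must fix a tie-breaking rule, for instance a lexicographic ordering on edges, chosen compatibly on both sides; alternatively, one can enlarge the domain to pairs (configuration, marked path), obtain a bijection at this extended level, and then sum over the auxiliary data while accounting for the resulting multiplicities.

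With the right inequality in hand, the left one follows from Ginibre's inequality in its classical XY form
\[
\langle \cos(\theta_a-\theta_c)\cos(\theta_c-\theta_b)\rangle_{G,\beta}\ge \langle \cos(\theta_a-\theta_c)\rangle_{G,\beta}\langle \cos(\theta_c-\theta_b)\rangle_{G,\beta},
\]
combined with the product-to-sum identity $2\cos A\cos B=\cos(A-B)+\cos(A+B)$ applied with $A=\theta_a-\theta_c$ and $B=\theta_c-\theta_b$, which rewrites the left-hand side as $\tfrac12\langle\sigma_a\bar\sigma_b\rangle_{G,\beta}+\tfrac12\langle\sigma_a\sigma_b\bar\sigma_c^2\rangle_{G,\beta}$. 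Substituting the already-established bound $\langle\sigma_a\sigma_b\bar\sigma_c^2\rangle_{G,\beta}\le\langle\sigma_a\bar\sigma_c\rangle_{G,\beta}\langle\sigma_c\bar\sigma_b\rangle_{G,\beta}$ and rearranging then yields $\langle\sigma_a\bar\sigma_b\rangle_{G,\beta}\ge\langle\sigma_a\bar\sigma_c\rangle_{G,\beta}\langle\sigma_c\bar\sigma_b\rangle_{G,\beta}$, the desired left inequality.
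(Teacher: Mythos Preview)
Your argument for the right inequality has a genuine gap stemming from the choice $S=\{a,b,c\}$. With all three vertices in $S$, the loop configuration is cut at each of them, and nothing forces a $b\to c$ path to exist. For a concrete counterexample take $G$ to be the path graph $b-a-c$, the red current $\re_{(b,a)}=1$, $\re_{(a,c)}=2$ (all other entries zero), and $\bl\equiv 0$; then $\delta\re=\delta_a+\delta_b-2\delta_c$ and the only paths in the configuration are $b\to a$ and two copies of $a\to c$, so there is no $b\to c$ path to switch. Your proposed fixes do not help: a tie-breaking rule presupposes at least one candidate, and enlarging to pairs $(\omega,\gamma)$ with $\gamma$ a marked $b\to c$ path only yields the identity $\sum_\omega m_{b,c}(\omega)\tilde\lambda(\omega)=\sum_{\omega'} m_{c,b}(\omega')\tilde\lambda(\omega')$, which does not compare the unweighted sums. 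The phrase ``red $b\to c$ path'' is also ambiguous: paths in the coloured loop configuration are not monochromatic, and if you instead mean a path in an auxiliary decomposition of the red sub-current, switching it is not a well-defined operation on the loop configuration (the vertex pairings get disrupted).

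The remedy is exactly the paper's choice $S=\{c\}$. Then $b\notin S$ and the total source at $b$ equals $1$, so the configuration contains a \emph{unique} path starting at $b$; since $a\notin S$ also has positive total source, no path can terminate there, and this unique path is forced to end at~$c$. Switching it lands in $\tilde\DCWire^{c}_{\delta_a-\delta_c,\,\delta_c-\delta_b}$, preserves $|\varphi_v|$ at every $v\neq c$, and hence preserves the weight, giving the injection cleanly. This is precisely what the paper means by ``almost the same proof'' as for the first inequality.

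Your deduction of the left inequality from the right one together with Ginibre is correct: Ginibre gives $\langle\sigma_a\bar\sigma_b\rangle+\langle\sigma_a\sigma_b\bar\sigma_c^2\rangle\ge 2\langle\sigma_a\bar\sigma_c\rangle\langle\sigma_c\bar\sigma_b\rangle$, and substituting the right inequality yields the left. However, the paper proves the left inequality directly by the same path-switching with $S=\{c\}$ (switch the unique path starting at $a$), and explicitly remarks afterwards that the lemma is \emph{independent} of Ginibre. Your route, while valid, forfeits that independence.
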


\begin{proof}
The two inequalities have, maybe quite surprisingly, almost the same proof. We only prove the first and leave the second to the reader. We set $S = \{ c\}$ and will write $c$ instead of $\{c\}$ in our notation. We also define $\varphi = \delta_a - \delta_c$, $\psi = \delta_b - \delta_c$,
and note that $\psi -\varphi=\delta_b - \delta_a$. Also note that for each $\dcwire \in \tilde{\DCWire}^{c}_{\varphi, \psi}$, the unique path starting at $a$ must end at $c$. Consider the map $\omega \mapsto \omega'$ that switches this path. 
Clearly this is a bijection between $\tilde{\DCWire}^c_{\varphi, \psi}$ and 
\[
\{\dcwire \in \tilde{\DCWire}^{c}_{0, \psi-\varphi}: \text{the unique path ending at $a$ starts at $c$}\}.
\] 
Moreover, we have $|\varphi_v(\omega)|=|\varphi_v(\omega')|$ for all $v\neq c$, and hence the weights $\tilde \lambda^c_\beta$ are preserved.
This means that
\begin{align*}
	\langle \bar{\sigma}_{a} \sigma_c \rangle_{G, \beta} \langle \bar{\sigma}_b \sigma_c \rangle_{G, \beta} (Z_{G, \beta}^0)^2 = \sum_{\dcwire \in \tilde{\DCWire}^{c}_{\varphi, \psi}} \tilde{\lambda}_{\beta}^{c}(\dcwire) = \sum_{\dcwire \in \tilde{\DCWire}_{0, \psi-\varphi}^{c}} \id{\{c \to a \text{ in } \omega\}} \tilde{\lambda}_{\beta}^{c}(\dcwire) \leq \langle \sigma_a \bar{\sigma}_b \rangle_{G, \beta} (Z_{G, \beta}^{0})^2,
\end{align*}
where we used \eqref{eq:loopexp1} twice. This finishes the proof.
\end{proof}

\begin{remark}
	Note that the Ginibre inequality in e.g.\ \cite[Theorem 2.3]{PelSpi} is equivalent to a relation between increments given by
	\[
		\langle \sigma_a \bar{\sigma}_b \rangle_{G, \beta} - \langle \sigma_a \bar{\sigma}_c \rangle_{G, \beta} \langle \sigma_{c}\bar{\sigma}_b \rangle_{G, \beta} \geq \langle \sigma_a \bar{\sigma}_c \rangle_{G, \beta} \langle \sigma_{c}\bar{\sigma}_b \rangle_{G, \beta} - \langle \sigma_a \sigma_b \bar{\sigma}_c^2 \rangle_{G,\beta}.
	\]
	Comparing this with the statement of Lemma \ref{L: ferro-magnet inequal}, the latter proves nonnegativity of the increments. Hence, the Ginibre inequality and Lemma \ref{L: ferro-magnet inequal} do not imply one another. 
\end{remark}

The purpose of the remainder of this section is to give more applications of the representation introduced above.
We start with two new bijective proofs of the classical inequalities that we used in the proof of our main theorem.
\begin{lemma}[Lieb--Rivasseau inequality~\cite{Lieb,Riv}] \label{L: LR-ineq}
	Let $G=(V,E)$ be any graph. Let $a,b \in V$ be distinct, and let $H$ be a finite subgraph of $G$ containing $a$ and not containing $b$, and let $\partial H$ be the set of vertices of $H$ adjacent to at least one vertex outside $H$. Then
	\[
		\langle \sigma_a \bar {\sigma}_{b} \rangle_{G,\beta} \leq \sum_{c \in \partial H} \langle \sigma_a \bar{\sigma}_c \rangle_{H,\beta} \langle \sigma_c  \bar{\sigma}_{b}\rangle_{G,\beta}.
	\]
\end{lemma}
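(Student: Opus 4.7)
The plan is to adapt the standard random-current proof of the Ising Lieb--Rivasseau inequality to the XY setting, using the coloured loop representation and path-switching developed above. Writing $\varphi_{xy} = \delta_x - \delta_y$ and using $\langle \sigma_a \bar{\sigma}_b \rangle_G = Z^{\varphi_{ab}}_G / Z^0_G$ together with $Z^{-\varphi}_H = Z^{\varphi}_H$ (by global direction reversal), the claim is equivalent to the partition-function inequality
\[
Z^{\varphi_{ab}}_G \cdot Z^0_H \;\leq\; \sum_{c \in \partial H} Z^{\varphi_{cb}}_G \cdot Z^{\varphi_{ca}}_H.
\]

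I would first use \eqref{eq:loopexp1} with $S = \emptyset$ to expand the LHS as
\[
Z^{\varphi_{ab}}_G \cdot Z^0_H = \sum_{\omega \in \tilde{\mathcal{L}}^\emptyset_{\varphi_{ab}, 0}[G, H]} \tilde{\lambda}^\emptyset_\beta(\omega),
\]
where $[G, H]$ denotes those coloured loop configurations whose red edges lie in $G$ and whose blue edges lie in $H$. Since the combined sources are $\delta_a - \delta_b$ and every vertex is fully resolved, each such $\omega$ contains a unique directed path $\gamma$ from $a$ to $b$ in the underlying multigraph. Because $a \in V(H)$ and $b \notin V(H)$, this path must first leave $H$ at some vertex $c(\omega) \in \partial H$; split $\gamma = \gamma_1 \cdot \gamma_2$ at that vertex, with $\gamma_1 \subseteq H$. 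For each $c \in \partial H$, apply path-switching to $\gamma_1$: reverse its direction and swap the colours along it. The result $\omega'$ has the same underlying multigraph; its red current $\re'$ on $G$ has sources $\varphi_{cb}$, and its blue current $\bl'$ has sources $\varphi_{ca}$ and remains supported in $H$ (as $\gamma_1 \subseteq H$). The map $\omega \mapsto \omega'$ is an injection into the coloured-loop space corresponding to $Z^{\varphi_{cb}}_G \cdot Z^{\varphi_{ca}}_H$, with image the set of configurations whose $c \to a$ path stays in $H$ and whose $c \to b$ path leaves $H$ on its first edge. Summing over $c$ and using that the image is a subset of the full target then yields the inequality.

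The main obstacle---and what I expect to be the delicate step---is matching the weights $\tilde{\lambda}_\beta$ on the two sides of the injection. Path-switching changes the combined source at $c$ from $0$ to $+2$, which modifies the factorial factor $|\varphi_v|!/((\deg_{\mathcal{M}}(v) + |\varphi_v|)/2)!$ appearing in $\tilde{\lambda}^S_\beta$ at $v = c$. The natural remedy is to take different source sets $S$ on the two sides (for instance $S = \emptyset$ on the LHS and $S = \{c\}$ or $\{a, b, c\}$ on the RHS) and use the cutting relation \eqref{eq:cutting1} to absorb the pairings at $c$ in the original configuration. Calibrating these choices so that the injection is weight-preserving---i.e.\ so that switching followed by the appropriate $\rho$-projection converts $\tilde{\lambda}^{\emptyset}_\beta(\omega)$ exactly into $\tilde{\lambda}^{S'}_\beta(\omega')$---is the crux of the argument; once it is in place, the inequality follows by the subset-image argument and summation over $c \in \partial H$, after which division by $Z^0_G \cdot Z^0_H$ produces the stated bound on the two-point functions.
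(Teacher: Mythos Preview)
Your approach is essentially the paper's (with the roles of the red and blue currents swapped). The weight-matching obstacle you flag is resolved exactly as you anticipate: for each $c\in\partial H$ one first applies the cutting map $\rho$ to pass from $S=\emptyset$ to $S=\{c\}$ via \eqref{eq:cutting1}, and only then switches the $a\to c$ piece of the path; with $c\in S$ the vertex factor at $c$ is absent from $\tilde\lambda^{\{c\}}_\beta$, so the switch is weight-preserving, and the subset-image bound together with \eqref{eq:loopexp1} then yields the inequality.
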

\begin{proof}
It is enough to assume that $G$ is finite, and then approximate an infinite graph by finite subgraphs. The proof is similar to the previous one. Assume $a\notin \partial H$. Otherwise, there is nothing to prove. Fix $c \in \partial H$ and $S = \{ c\}$. We will write $c$ instead of $\{c\}$ in our notation. Let $\varphi= \delta_c - \delta_a$, $\psi= \delta_c - \delta_b$, and note that $\psi - \varphi=\delta_a - \delta_b$. 

Write $\tilde{\mathcal  L}_c$ for the collection of coloured loop configurations $\dcwire \in \tilde \DCWire^{c}_{0, \psi-\varphi}$ with the property that the unique path starting at $a$ exits $H\setminus \partial H$ at $c$, and $\omega$ has no red edges outside of $H$.
For $\omega\in \tilde {\mathcal L}_c$ consider a coloured loop configuration where this path is switched. Clearly this is a bijection between $\tilde {\mathcal L}_c$ and the set of configurations 
 $\omega'\in \tilde \DCWire^{c}_{\varphi, \psi}$ that have no red edges outside $H$, and for which the unique path ending at $a$ stays within $H\setminus \partial H$ until it hits $c$. Denote this collection of configurations by $\tilde {\mathcal L}'_c$. Moreover, we have $|\varphi_v(\omega)|=|\varphi_v(\omega')|$ for all $v\neq c$, and hence the weights $\tilde \lambda^c_\beta$ are preserved.

Let $\tilde{\c{E}}_c$ be the collection of $\dcwire \in \tilde{\DCWire}_{0, \psi-\varphi}$ with the property that the unique path from $a$ to $b$ exits $H \setminus \partial H$ in $c$, and $\omega$ does not have red edges outside of $H$. Clearly, the subset of $\tilde \DCWire_{0, \psi-\varphi}$ consisting of configurations with no red edges outside of $H$ equals the disjoint union $\cup_{c \in \partial H} \tilde{\c{E}}_c$ and cutting $\omega \in \tilde{\c{E}}_c$ at $c$ gives an element of $\tilde{\c{L}}_c$. In light of \eqref{eq:cutting1}, we therefore have
	\begin{align*}
		\langle \sigma_a \bar{\sigma}_{b} \rangle_{G, \beta} Z^0_{G, \beta} Z^0_{H, \beta} = \sum_{c \in \partial H} \sum_{\dcwire \in\tilde{\mathcal L}_c} \tilde \lambda^{c}_{\beta}(\dcwire) 
		= \sum_{c \in \partial H} \sum_{\dcwire' \in \tilde{\mathcal L}_c'}   \tilde \lambda^{c}_{\beta}(\dcwire') 
		\leq \sum_{c \in \partial H} \langle \sigma_c \bar{\sigma}_a \rangle_{H, \beta} \langle \sigma_c \bar{\sigma}_b \rangle_{G, \beta}Z^0_{G, \beta} Z^0_{H, \beta},
	\end{align*}
	which completes the proof. 		
\end{proof}

We are also able to use the coloured loop representation to prove the Messager--Miracle-Sole inequality. 
\begin{lemma}[Messager--Miracle-Sole inequality~\cite{MMS}] \label{L: MM-S inequal}
For any $n\in \mathbb Z$, the two sequences $\langle \sigma_0 \bar{\sigma}_{(n,k)} \rangle_{\ZZ^2, \beta}$ and $\langle \sigma_0 \bar{\sigma}_{(n+k,n-k)} \rangle_{\ZZ^2, \beta}$  
are nonincreasing in $k$ for $k\geq 0$.
\end{lemma}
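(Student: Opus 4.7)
The plan is to prove the horizontal case $\langle \sigma_0 \bar\sigma_{(n,k)}\rangle_{\mathbb Z^2,\beta} \geq \langle \sigma_0 \bar\sigma_{(n,k+1)}\rangle_{\mathbb Z^2,\beta}$ via a reflection-switching bijection in the coloured loop representation from the appendix; the diagonal case follows by the same argument using reflection through the appropriate diagonal line. As the bounds pass to the limit, it suffices to work on finite subgraphs $G$ of $\mathbb Z^2$ containing all the relevant vertices.

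Let $R$ be the reflection of $\mathbb Z^2$ through the horizontal line $y=k+1/2$, and set $a=0$, $b=(n,k+1)$, $a'=Ra=(0,2k+1)$, $b'=Rb=(n,k)$. Reflection symmetry of the lattice gives $\langle \sigma_{a'}\bar\sigma_{b'}\rangle=\langle \sigma_a\bar\sigma_b\rangle$ and $\langle \sigma_{a'}\bar\sigma_b\rangle=\langle \sigma_a\bar\sigma_{b'}\rangle$, so the desired monotonicity is equivalent to
\[
\langle \sigma_a\bar\sigma_{b'}\rangle\langle \sigma_{a'}\bar\sigma_b\rangle \;\geq\; \langle \sigma_a\bar\sigma_b\rangle\langle \sigma_{a'}\bar\sigma_{b'}\rangle.
\]
Set $S=\{a,b,a',b'\}$, $\varphi=\delta_a-\delta_b$, $\psi=\delta_{a'}-\delta_{b'}$, $\varphi'=\delta_a-\delta_{b'}$, $\psi'=\delta_{a'}-\delta_b$. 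By \eqref{eq:loopexp1}, the above inequality (multiplied by $(Z^0_{G,\beta})^2$) is
\[
\sum_{\omega\in\tilde{\mathcal L}^S_{\varphi',\psi'}}\tilde\lambda^S_\beta(\omega) \;\geq\; \sum_{\omega\in\tilde{\mathcal L}^S_{\varphi,\psi}}\tilde\lambda^S_\beta(\omega).
\]
Crucially, $\varphi+\psi=\varphi'+\psi'=\delta_a+\delta_{a'}-\delta_b-\delta_{b'}$, so configurations contributing to both sides share the same uncoloured source structure. Moreover $|\varphi_v|=|\varphi'_v|$ vanishes for $v\notin S$, so $\tilde\lambda^S_\beta$ depends only on the underlying multigraph.

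For each such multigraph $\mathcal M$, the plan is to exhibit a weight-preserving injection from the set of colourings/orientations/pairings on $\mathcal M$ lying in $\tilde{\mathcal L}^S_{\varphi,\psi}$ into those lying in $\tilde{\mathcal L}^S_{\varphi',\psi'}$. Given such a configuration $\omega$, identify the unique red path $P$ from $a$ to $b$ and the unique blue path $Q$ from $a'$ to $b'$; since $a,b'$ lie below and $a',b$ above the reflection axis, both $P$ and $Q$ must cross the axis. Let $e_P$ be the last axis-crossing edge on $P$ before reaching $b$, and $e_Q$ the first axis-crossing edge on $Q$ after leaving $a'$, and denote by $P_{\mathrm{up}},Q_{\mathrm{up}}$ the portions of the two paths that lie above the axis (terminating at $b$ and starting at $a'$ respectively). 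Reflecting $P_{\mathrm{up}}$ and $Q_{\mathrm{up}}$ via $R$ and reassembling the pieces, with a concurrent swap of colours on the reflected segments, produces a coloured configuration with a red path $a\to b'$ and a blue path $a'\to b$, i.e.\ an element of $\tilde{\mathcal L}^S_{\varphi',\psi'}$. Since $R$ is a lattice symmetry and the underlying multigraph is untouched, $\tilde\lambda^S_\beta$ is preserved, and the construction is invertible by applying the same surgery to the output, giving injectivity.

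The main obstacle is making this surgery globally consistent: the paths $P,Q$ can have many axis-crossings and can share vertices above the axis with each other and with loops in $\omega$, so a naive local operation risks producing a configuration whose pairings at on-axis vertices do not reassemble into a valid coloured loop configuration. The cleanest way to overcome this is to recast the surgery as the composition of (i) a global colour swap on all edges strictly above the axis and (ii) a consistent rematching at the on-axis vertices dictated by $R$: one checks that, for each on-axis vertex, the multiset of incident edges is unchanged, that the net red and blue flow changes only at $\{b,b'\}$ (turning the red sink at $b$ into a red sink at $b'$ and analogously for blue), and that the resulting matching is once again compatible with the loop/path structure. Verifying these combinatorial accountings across all axis-crossings is the technical heart of the argument; once established, summing over $\mathcal M$ yields the inequality, and the diagonal statement follows verbatim upon replacing the horizontal reflection by a diagonal one.
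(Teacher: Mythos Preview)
Your proposal has a genuine gap at the point where you invoke ``the unique red path $P$ from $a$ to $b$'' and ``the unique blue path $Q$ from $a'$ to $b'$''. In the coloured loop representation of the appendix, the loops and open paths traverse edges of \emph{both} colours; a path is not red or blue. With $S=\{a,b,a',b'\}$ every half-edge at these four vertices is a path endpoint, so there are typically many open paths and no distinguished one from $a$ to $b$. Even if you take $S=\emptyset$, the two open paths are determined by the total source $\varphi+\psi$, not by the colouring: the path starting at $a$ may perfectly well end at $b'$ while the red current still has its sink at $b$. Consequently the surgery (``reflect $P_{\mathrm{up}}$ and $Q_{\mathrm{up}}$'') is not well defined, and the claim that ``the underlying multigraph is untouched'' fails if one literally reflects a path segment to the other side of the axis. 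Your fallback (global colour swap above the axis plus rematching) does not obviously yield valid currents either: at a vertex just above the axis the new ``red'' divergence mixes the old red flux on the crossing edges with the old blue flux on the non-crossing edges, and there is no reason these cancel.

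The paper's argument avoids all of this by \emph{folding} $G$ across the line $L$: a single current on $G$ with source $\delta_a-\delta_{L(b)}$ is split into its restrictions to $G_-$ and $G_+$, and after identifying $G_+$ with $G_-$ these become the red and blue currents of a coloured loop configuration on the half-graph $G_-$, with $S=Z$ the set of vertices on $L$. Now $b\in G_-\setminus Z$ has total source $+1$, so there is a genuinely \emph{unique} open path starting at $b$, and it necessarily ends on $Z$; switching this path is the weight-preserving map. The folding is what produces both the uniqueness and the automatic axis interaction that your direct surgery tries to enforce by hand. Note also that your horizontal axis $y=k+\tfrac12$ passes through edges rather than vertices; this is exactly the ``more involved'' case the paper leaves to the reader, so even within the paper's framework the horizontal statement needs an additional edge-splitting step before the folding argument applies.
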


Geometrically, this in particular implies that the largest correlation with the spin at $0$ on any vertical, horizontal or diagonal straight line is attained by the vertex closest to $0$.
This will follow from the following lemma after taking $G\nearrow \mathbb Z^2$. The proof is inspired by the one from~\cite{ADTW} for the Ising model.
The idea is to fold a graph across a line and think of the parts of the current coming from both sides of the line as the red and blue current in the coloured loop representation.
\begin{lemma}
Let $G=(V,E)$ be a subgraph of $\mathbb Z^2$ symmetric under reflection across a line $L$. Let $a,b\in V$ lie on the same side of $L$, and let $L(b)\in V$ be the reflection of $b$. Then
\[
\langle \sigma_a\bar\sigma_b\rangle_{G,\beta} \geq \langle \sigma_a\bar\sigma_{L(b)}\rangle_{G,\beta}.
\]
\end{lemma}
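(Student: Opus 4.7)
The plan is to apply the coloured loop representation of this appendix to \emph{products} of correlators. By the $L$-invariance of the coupling constants we have $\langle \sigma_a \bar\sigma_b\rangle_{G,\beta} = \langle \sigma_{L(a)} \bar\sigma_{L(b)}\rangle_{G,\beta}$ and $\langle \sigma_a \bar\sigma_{L(b)}\rangle_{G,\beta} = \langle \sigma_{L(a)} \bar\sigma_b\rangle_{G,\beta}$, so (squaring) the stated inequality is equivalent to
\[
\langle \sigma_a \bar\sigma_b\rangle_{G,\beta}\, \langle \sigma_{L(a)} \bar\sigma_{L(b)}\rangle_{G,\beta} \;\geq\; \langle \sigma_a \bar\sigma_{L(b)}\rangle_{G,\beta}\, \langle \sigma_{L(a)} \bar\sigma_b\rangle_{G,\beta}.
\]

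Using the current expansion \eqref{eq:currentexp1} together with the coloured loop identity \eqref{eq:loopexp1} for $S = \{a, b, L(a), L(b)\}$, both sides of this inequality, after multiplication by $(Z^0_{G,\beta})^2$, become sums of weights $\tilde{\lambda}^S_\beta(\omega)$ over coloured loop configurations $\omega$ on $G$ with the \emph{same} combined source data $\varphi = \delta_a + \delta_{L(a)} - \delta_b - \delta_{L(b)}$. Crucially, since all sources lie in $S$, the weight $\tilde{\lambda}^S_\beta(\omega)$ depends only on the underlying uncoloured multigraph $\mathcal M$ of $\omega$ and not on the colouring or pairing choices. The left-hand side restricts the sum to configurations with the \emph{parallel} pairing (red open path from $a$ to $b$, blue open path from $L(a)$ to $L(b)$), while the right-hand side restricts it to the \emph{crossed} pairing (red from $a$ to $L(b)$, blue from $L(a)$ to $b$). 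The inequality thus reduces to the combinatorial statement that, for each fixed underlying multigraph $\mathcal M$ on $G$, the number of parallel colouring-and-pairing realisations is at least the number of crossed ones.

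To establish this combinatorial statement, I would construct an injection from crossed to parallel realisations using the reflection $L$. In any crossed realisation, the red path from $a$ to $L(b)$ and the blue path from $L(a)$ to $b$ must each cross $L$ an odd number of times, hence at least once. The idea is to pick a canonical ``first'' cross edge -- for instance the first cross edge visited by the red path starting at $a$ -- and then swap the colours along a suitable $L$-symmetric subset of edges on the $G_-$ side of $L$. Since $G$ is $L$-symmetric, the resulting configuration is again a valid coloured loop configuration on $G$ with unchanged underlying multigraph $\mathcal M$, but now realising the parallel pairing at the boundary vertices. This is analogous to the cluster-reflection step in the Ising proof of \cite{ADTW}.

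The hard part will be to make this swap operation canonical and genuinely injective. Specifically, one must (i) fix an ordering of the cross edges of $L$ in order to single out a unique ``first crossing'', (ii) identify an $L$-symmetric subset of edges along which to exchange the colours, so that the operation preserves the underlying $\mathcal M$ (and hence the weight $\tilde{\lambda}^S_\beta$), and (iii) keep track of how closed loops that themselves cross $L$ multiple times interact with the swapped open paths. I expect (iii) to be the main technical obstacle, and I would address it by peeling off the red and blue open paths first, performing the swap on just those paths to reduce to an $L$-symmetric situation for the remaining closed loops.
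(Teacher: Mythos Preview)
Your reduction---squaring and then expressing both products via coloured loop configurations outside $S=\{a,b,L(a),L(b)\}$ with common total source $\delta_a+\delta_{L(a)}-\delta_b-\delta_{L(b)}$---is valid, and it is a genuinely different route from the paper's. The paper never squares: it \emph{folds} $G$ across $L$, writing a single current $\n$ on $G$ as a pair $(\re,\bl)$ on the half-graph $G_-$, where $\re=\n|_{G_-}$ and $\bl$ is the $L$-reflection of $\n|_{G_+}$. Currents contributing to $\langle\sigma_a\bar\sigma_{L(b)}\rangle$ then become coloured configurations on $G_-$ with $\delta\re=\delta_a-\varphi$ and $\delta\bl=\delta_b-\varphi$ for some boundary profile $\varphi$ supported on $Z=V\cap L$, and switching the unique open path starting at $b$ maps these, weight-preservingly, into a strict subset of the configurations representing $\langle\sigma_a\bar\sigma_b\rangle$; the inequality follows by dropping an indicator.

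The gap in your plan is the injection, and the sketched construction does not work as stated. Once you fix an arbitrary (generically non-$L$-symmetric) multigraph $\mathcal M$ on $G$, no reflection-based move preserves $\mathcal M$: reflecting any portion of the configuration sends edges of $G_+$ to edges of $G_-$ and changes $\mathcal M$, so ``swap along an $L$-symmetric subset'' has no clear meaning. What you actually need is a canonical subset $E'\subset\mathcal M$ on which a pure colour swap shifts $(\delta\re,\delta\bl)$ by $(\delta_{L(b)}-\delta_b,\,\delta_b-\delta_{L(b)})$; such an $E'$ amounts to extracting a $b\!\to\!L(b)$ sub-flow from the signed current $\bl-\re$, and choosing it injectively while handling closed loops that cross $L$ is exactly the ``hard part'' you flag in (i)--(iii), left unresolved. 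The paper's folding trick sidesteps this entirely: by pushing both halves onto $G_-$ it manufactures a situation in which the path-switching map of this appendix applies directly and yields the injection for free. If you tried to repair your approach by pairing each $\mathcal M$ with its reflection $L(\mathcal M)$, you would essentially reinvent the fold. (One small inaccuracy: open paths in coloured loop configurations are not monochromatic, so ``red open path from $a$ to $b$'' should be read as the constraint $\delta\re=\delta_a-\delta_b$, not as a statement about the colour of an individual path.)
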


\begin{proof}
We only consider the easier case when $L$ passes through vertices. 
This means that it is either a diagonal, or a horizontal (vertical) line at integer height.
The more involved case when $L$ passes only through the edges (this case implies Lemma~\ref{L: MM-S inequal} for horizontal and vertical lines) we leave to the interested reader.

If $L$ is horizontal or vertical, then split the edges that lie on $L$ into two parallel 
edges with coupling constants $\beta/2$, and think of the resulting graph as a new graph $G$.
Write $Z$ for the set of vertices on $L$, and $G_-=(V_-,E_-)$ and $G_+=(V_+,E_+)$ for the two isomorphic parts of $G$ separated by $L$ where $G_-$ contains $a$ and $b$ (each of them also containing $Z$).

We can decompose a current $\n$ on $G$ into two parts: $\re$ and $\bl$ on $G_-$ and $G_+$. 
In what follows, we identify $G_-$ with $G_+$ under the obvious isomorphism, and all currents are considered on $G_-$ unless stated otherwise.
Let $\c{C}_k$, {for} $k=0,1$, be the set of functions $\varphi: V_- \to \mathbb Z$ such that $\varphi_v=0$ for $v\in V_-\setminus Z$, and $\sum_{v\in Z}\varphi_v=k$.
Since every current in $\Omega_{\delta_a-\delta_{L(b)}}(G)$ must have a total flux of $+1$ across $L$, we can write
	\begin{align*}
		\langle \sigma_a \bar{\sigma}_{L(b)} \rangle_{G, \beta}Z_{G,\beta}^0 &= 
		\sum_{\varphi \in \c{C}_1} \sum_{\re \in \Omega_{\delta_a-\varphi}, \bl \in \Omega_{-\delta_{b}+\varphi}} w_\beta(\re)w_\beta(\bl) \\
		&= \sum_{\varphi \in \c{C}_1} \sum_{\re \in \Omega_{\delta_a-\varphi}, \bl \in \Omega_{\delta_{b}-\varphi}} w_\beta(\re)w_\beta(\bl) \\
		 &= \sum_{\varphi \in \c{C}_1}  \sum_{\omega\in \tilde{\mathcal L}^Z_{\delta_a-\varphi,\delta_{b}-\varphi} }\tilde \lambda^Z_\beta(\omega) ,
	\end{align*}
	where the second inequality holds true as a the weight $w_{\beta}$ is invariant under reversal of the current, and the last equality is a consequence of~\eqref{eq:loopexp1}.
Now, for each $\omega\in \tilde{\mathcal L}^Z_{\delta_0-\varphi,\delta_{b}-\varphi}$ switch the unique path $\gamma$ starting at $b$. This transformation preserves weights and results in a configuration 
$\omega'\in \tilde{\mathcal L}^Z_{\delta_0-\delta_{b}-\varphi',-\varphi'}$, where $\varphi'=\varphi-\delta_z\in \c{C}_0$ and $z\in Z$ is the vertex at which $\gamma$ ends.
Reversing the order of the steps above we therefore get
\begin{align*}
\langle \sigma_a \bar{\sigma}_{L(b)} \rangle_{G, \beta}Z_{G,\beta}^0& =  \sum_{\varphi \in \c{C}_1} \sum_{z\in Z}\sum_{\omega'\in \tilde{\mathcal L}^Z_{\delta_a-\delta_{b}-\varphi',-\varphi'} }\tilde \lambda^Z_\beta(\omega') \id\{\text{the path ending at $b$ starts at $z$} \} \\
&=  \sum_{\varphi' \in \c{C}_0} \sum_{\omega'\in \tilde{\mathcal L}^Z_{\delta_a-\delta_{b}-\varphi',-\varphi'} }\tilde \lambda^Z_\beta(\omega') \id\{\text{the path ending at $b$ starts in $Z$} \} \\
&\leq  \sum_{\varphi' \in \c{C}_0} \sum_{\re \in \Omega_{\delta_a-\delta_b-\varphi'}, \bl \in \Omega_{\varphi'}} w_\beta(\re)w_\beta(\bl)\\
&=\langle \sigma_a \bar{\sigma}_b \rangle_{G, \beta}Z_{G,\beta}^0,
\end{align*}
where the last equality follows since the total flux of a current in $\Omega_{\delta_a-\delta_b}(G)$ across $L$ is zero.
\end{proof}

\subsection{Limitations of the coloured loop representation}
A natural idea is to try to prove the Ginibre inequality in form of Lemma~\ref{L: xy increasing cor} using our representation. One would like to show that the derivative of the two-point function with respect to one coupling constant $J_e$ is nonnegative.
Using coloured loop configurations we can write
\begin{align*}
(Z^{0}_{G,\beta})^2\frac{\partial  }{\partial J_e}\langle\sigma_a\bar\sigma_b \rangle_{G,\beta}&= Z^{0}_{G,\beta}{\frac{\partial }{\partial J_e}Z^{\delta_a-\delta_b}_{G,\beta}}-Z^{\delta_a-\delta_b}_{G,\beta}{\frac{\partial }{\partial J_e}Z^{0}_{G,\beta}} \\
&=J^{-1}_e\sum_{\omega \in \tilde{\mathcal L}^{\emptyset}_{\delta_a-\delta_b,0}} \tilde \lambda^{\emptyset}_{\beta}(\omega)(R_e(\omega)-B_e(\omega)),
\end{align*}
where $R_e(\omega)$ and $B_e(\omega)$ is respectively the number of red and blue copies of $e$ in the multigraph visited by the unique path from $a$ to $b$ in $\omega$.
Without going into too many details, to justify the second equality we make the following observations. First, taking the derivative with respect to $J_e$ is equivalent to dividing by $J_e$ and marking one of the 
copies of $e$ of the right colour (here the currents in $\Omega_{\delta_a-\delta_b}$ are red and those in $\Omega_0$ are blue). Then, if the marked edge is not on the path from $a$ to $b$, we switch the corresponding loop (reverse it and swap the colours). This does not change the weight of the configuration. Such terms hence cancel out from the expression above as the loops going trough a marked blue copy of $e$ are counted with a minus sign.
The remaining terms are those whose marked edge lies on the distinguished path. This gives the final formula.

Clearly, the final result is not evidently nonnegative and we would need additional arguments to conclude the Ginibre inequality. On the other hand, the Ginibre inequality implies the distinguished path visits red edges more often than blue edges on average.

\bibliographystyle{amsplain}
\bibliography{XYHF}
\end{document}